\NewDocumentCommand{\binomial}{omm}
 {%
 \genfrac(){0pt}{}{#2}{#3}%
 \IfValueT{#1}{_{\!#1}}%
 }
\NewDocumentCommand{\eulerian}{omm}
 {%
 \genfrac<>{0pt}{}{#2}{#3}%
 \IfValueT{#1}{_{\!#1}}%
 }
\def \s {\sigma}
\newtheorem{thm}{Theorem}[section]
\newtheorem{prop}[thm]{Proposition}
\newtheorem{conj}[thm]{Conjecture}
\newtheorem*{claim}{Claim}
\theoremstyle{definition}
\newtheorem{defn}[thm]{Definition}
\newtheorem{question}[thm]{Question}
\newtheorem{rem}[thm]{Remark}
\newcommand{\sfR}{\mathsf{R}}
\newcommand{\sfW}{\mathsf{W}}
	\def \be {\begin{equation}}
		\def \ee {\end{equation}}
	\def \ba {\begin{eqnarray}}
		\def \ea {\end{eqnarray}}
	\def \bg {\begin{gather}}
		\def \eeg {\end{gather}}
	\def \be {\begin{equation}}
		\def \en {\end{equation}}
	\def \bes {\begin{eqnarray}}
		\def \ens {\end{eqnarray}}
	\def \s {\mathfrak{s}}
	\def \t {\mathfrak{t}}
 \def \min {{\rm minimal}}
	\numberwithin{equation}{section}
\begin{document}

\newcommand{\arXivNumber}{2212.11243}

\renewcommand{\PaperNumber}{016}

\FirstPageHeading

\ShortArticleName{Color-Dressed Generalized Biadjoint Scalar Amplitudes: Local Planarity}

\ArticleName{Color-Dressed Generalized Biadjoint Scalar\\ Amplitudes: Local Planarity}

\Author{Freddy CACHAZO~$^{\rm a}$, Nick EARLY~$^{\rm b}$ and Yong ZHANG~$^{\rm a}$}

\AuthorNameForHeading{F.~Cachazo, N.~Early and Y.~Zhang}

\Address{$^{\rm a)}$~Perimeter Institute for Theoretical Physics, Waterloo, ON N2L 2Y5, Canada}
\EmailD{\href{mailto:fcachazo@pitp.ca}{fcachazo@pitp.ca}, \href{mailto:yzhang@pitp.ca}{yzhang@pitp.ca}}

\Address{$^{\rm b)}$~Max Planck Institute for Mathematics in the Sciences, Leipzig, Germany}
\EmailD{\href{mailto:nick.early@mis.mpg.de}{nick.early@mis.mpg.de}}

\ArticleDates{Received August 25, 2023, in final form February 04, 2024; Published online February 21, 2024}

\Abstract{The biadjoint scalar theory has cubic interactions and fields transforming in the biadjoint representation of ${\rm SU}(N)\times {\rm SU}\big({\tilde N}\big)$. Amplitudes are ``color'' decomposed in terms of partial amplitudes computed using Feynman diagrams which are simultaneously planar with respect to two orderings. In 2019, a generalization of biadjoint scalar amplitudes based on generalized Feynman diagrams (GFDs) was introduced. GFDs are collections of Feynman diagrams derived by incorporating an additional constraint of ``local planarity'' into the construction of the arrangements of metric trees in combinatorics. In this work, we propose a natural generalization of color orderings which leads to color-dressed amplitudes. A generalized color ordering (GCO) is defined as a collection of standard color orderings that is induced, in a precise sense, from an arrangement of projective lines on $\mathbb{RP}^2$. We present results for $n\leq 9$ generalized color orderings and GFDs, uncovering new phenomena in each case. We discover generalized decoupling identities and propose a definition of the ``colorless'' generalized scalar amplitude. We also propose a notion of GCOs for arbitrary~$\mathbb{RP}^{k-1}$, discuss some of their properties, and comment on their GFDs. In a companion paper, we explore the definition of partial amplitudes using CEGM integral formulas.}

\Keywords{generalized color orderings; generalized Feynman diagrams; generalized decoupling identities; CEGM integral formulas; generalized biadjoint amplitudes}

\Classification{14M15; 05E99; 14T99}

\section{Introduction}

Tree-level scattering amplitudes of gluons are organized in terms of partial amplitudes and color structures as follows (see \cite{Mangano:1990by} for a review):
\begin{equation}\label{coYM}
 {\mathcal A}_n(\{ k_i,\epsilon_i,a_i\} ) = \sum_{\sigma \in S_{n}/{\mathbb{Z}_n}} \operatorname{tr}\big( T^{a_{\sigma (1)}}T^{a_{\sigma (2)}} \cdots T^{a_{\sigma (n)}}\big)A(\sigma(1),\sigma(2),\dots ,\sigma(n)).
\end{equation}
Here $T^a$ are generators of ${\rm SU}(N)$, which is traditionally called the ``color'' group.

When gluons are replaced by scalars, ${\rm SU}(N)$ is called a ``flavor'' group but in this work we do not make a distinction and uniformly use ``color''. We are mainly interested in the biadjoint cubic scalar theory \cite{Cachazo:2013iea}. This theory carries a group ${\rm SU}(N)\times {\rm SU}\big(\tilde N\big)$ and its tree amplitudes can be organized in terms of partial amplitudes as follows:
\begin{align}
& {\mathcal M}_n(\{ k_i,a_i,{\tilde a}_i\} ) = \sum_{\alpha,\beta \in S_{n}/{\mathbb{Z}_n}} \operatorname{tr}\left( T^{a_{\alpha (1)}}T^{a_{\alpha (2)}} \cdots T^{a_{\alpha (n)}}\right)\nonumber\\
 &\hphantom{{\mathcal M}_n(\{ k_i,a_i,{\tilde a}_i\} ) =}{}
 \times \operatorname{tr}\left( T^{{\tilde a}_{\beta (1)}}T^{{\tilde a}_{\beta (2)}} \cdots T^{{\tilde a}_{\beta (n)}}\right)m_n(\alpha,\beta).\label{coBS}
\end{align}
The expressions \eqref{coYM} and \eqref{coBS} are {\it color decompositions} of the corresponding amplitudes.

Biadjoint partial amplitudes $m_n(\alpha,\beta)$ have a simple formula, as the sum over tree-level $\phi^3$ Feynman diagrams which are planar with respect to both orderings. Alternatively, $m_n(\alpha,\beta)$ has a Cachazo--He--Yuan (CHY) formulation in terms of an integral over the configuration space of~$n$ points on $\mathbb{CP}^1$ with integrands that depend on the orderings in a simple manner \cite{Cachazo:2013hca,Cachazo:2013iea}.

In 2019, Guevara, Mizera, and the first two authors proposed a generalization of the CHY construction to integrals over the configuration space of $n$ points on $\mathbb{CP}^{k-1}$ \cite{Cachazo:2019ngv} (see \cite{Arkani-Hamed:2019mrd,Drummond:2019qjk,Drummond:2020kqg, Gates:2021tnp,He:2020ray,He:2021zuv} for related work and connections to cluster algebras). The standard biadjoint theory corresponds to $k=2$ while $k>2$ leads to Cachazo--Early--Guevara--Mizera (CEGM) generalized biadjoint amplitudes. Also in \cite{Cachazo:2019ngv}, a connection to the tropical Grassmannian \cite{speyer2004tropical}, $\operatorname{Trop} G(k,n)$, was proposed and proven for the positive part $\operatorname{Trop}^+G(3,6)$ \cite{speyer2005tropical}. Tropical Grassmannians are closely related to the space of arrangements of metric trees \cite{herrmann2008draw}, which provide the formulation of a special class of generalized biadjoint amplitudes in terms of {\it planar} generalized Feynman diagrams (GFDs) as defined in \cite{Borges:2019csl}.

While the first steps towards generalizing Feynman diagrams to $k>2$ have been taken, the notion of color factors has been missing completely. In this work, we fill this gap by proposing a~notion of $k>2$ color factors, using techniques inspired by the theory of oriented matroids~\cite{bjorner1999oriented}. We first concentrate on $k=3$, and define a $k=3$ color ordering as a collection of $n$, $k=2$ orderings on $n-1$ labels which can be derived from a generic arrangement of projective lines on $\mathbb{RP}^2$. For example,
\begin{gather}\label{introEx}
\Sigma := \big(\sigma^{(1)},\sigma^{(2)},\sigma^{(3)},\sigma^{(4)},\sigma^{(5)}\big)=((2345),(1345),(1245),(1235),(1234))
\end{gather}
is one of the twelve possible color orderings for $(k,n)=(3,5)$ amplitudes.

A convenient way to draw the arrangement of $n$ lines is by taking a chart of $\mathbb{RP}^2$ as a plane with a circle at infinity with its antipodal points identified. The genericity assumption guarantees that each line is intersected by the others in a way that defines a $(k=2,n-1)$ color ordering. Figure~\ref{fiveNaive} shows a representation of $\Sigma$.

\begin{figure}[t]\centering
\begin{tikzpicture}

\begin{scope}[xshift=5.3cm,yshift=0cm, scale=3]
 \draw[ dashed](0,0) circle (1);
 \draw[blue,very thick] (90+2:1) node [above]{\color{black}1} -- (-90:1) node [below]{\color{black}1} ;

 \draw[olive,very thick] (45+2:1) node [above]{\color{black}2} -- (-115-20:1) node [below]{\color{black}2} ;

 \draw[red,very thick] (35+2:1) node [right]{\color{black}4} -- (-180-10:1) node [left]{\color{black}4} ;

 \draw[cyan,very thick] (0+2:1) node [right]{\color{black}5} -- (-180-30:1) node [left]{\color{black}5} ;

 \draw[violet,very thick] (-65:1) node [below right=-2pt]{\color{black}3} -- (-180-45:1) node [ above left=-2pt]{\color{black}3} ;

\end{scope}

\begin{scope}[xshift=-2cm,yshift=-4cm, scale=3]
\draw[gray, thick] (0,0) -- (5,0) -- (5,-1) -- (0,-1) -- (0,0)
(1,0) -- (1,-1)
(2,0) -- (2,-1)
(3,0) -- (3,-1)
(4,0) -- (4,-1);
\draw[blue,very thick] (0.5,-0.5)
node[below=.9cm] {\color{black}2}
node[left=.9cm] {\color{black}3}
node[above=.9cm] {\color{black}4}
node[right=.9cm] {\color{black}5}
circle (.3) ;

\draw[olive,very thick] (1.5,-0.5)
node[below=.9cm] {\color{black}1}
node[left=.9cm] {\color{black}3}
node[above=.9cm] {\color{black}4}
node[right=.9cm] {\color{black}5}
circle (.3) ;

\draw[red,very thick] (3.5,-0.5)
node[below=.9cm] {\color{black}5}
node[left=.9cm] {\color{black}3}
node[above=.9cm] {\color{black}2}
node[right=.9cm] {\color{black}1}
circle (.3) ;

\draw[cyan,very thick] (4.5,-0.5)
node[below=.9cm] {\color{black}4}
node[left=.9cm] {\color{black}3}
node[above=.9cm] {\color{black}2}
node[right=.9cm] {\color{black}1}
circle (.3) ;

\draw[violet,very thick] (2.5,-0.5)
node[below=.9cm] {\color{black}4}
node[left=.9cm] {\color{black}5}
node[above=.9cm] {\color{black}1}
node[right=.9cm] {\color{black}2}
circle (.3) ;
\end{scope}

\end{tikzpicture}

\caption{Top: Arrangement of lines corresponding to the generalized color ordering $\Sigma = ((2345),\allowbreak(1345),(1245),(1235),(1234))$. The dashed circle is at infinity, points on it with the same label are identified. Each line defines a $k=2$ color ordering by identifying its points on the boundary to make a~circle. Bottom: Five $(k,n)=(2,4)$ color orderings obtained from $\Sigma$ by the order in which lines intersect a given one.}	\label{fiveNaive}
\end{figure}
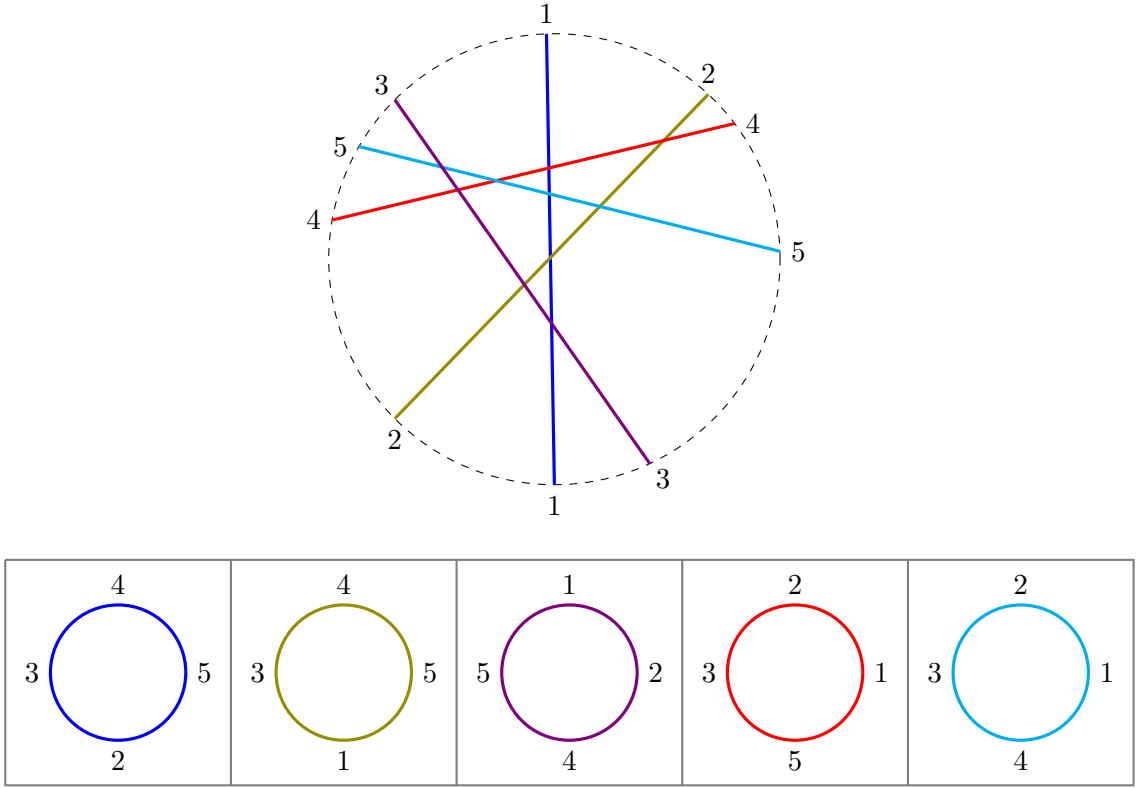

Associated with each ordering there is a color factor
${\bf c}(\Sigma)$\footnote{See \cite{Early:2023dlt} for a recent study of the Lie-theoretic
realization of these color factors.} so that the color dressed generalized biadjoint amplitude is given by
\begin{equation*}
 {\mathcal M}^{(3)}_n = \sum_{I,J=1}^{{\cal N}_{3,n}} {\bf c}(\Sigma_I){\bf c}(\Sigma_J) m_n^{(3)}(\Sigma_I,\Sigma_J),
\end{equation*}
where ${\cal N}_{3,n}$ is the number of $k=3$ color orderings for $n$ points.

Most results in the literature only deal with partial amplitudes associated with the \textit{positive} tropical Grassmannian (cf.\ \cite{Arkani-Hamed:2019rds,Arkani-Hamed:2020cig,MR4449810,Early:2019eun,Henke:2021ity,
Lukowski:2020dpn,speyer2005tropical,speyer2021positive}). As we show in Section \ref{three}, these correspond to partial amplitudes in which both color orderings satisfy a very restrictive property, a ``global'' notion of planarity.

Having a precise definition of generalized color orderings (GCOs) allows us to provide a~complete characterization of the generalized Feynman diagrams needed to fully characterize and compute \smash{$m_n^{(3)}(\Sigma_I,\Sigma_J)$}. We define a GFD as an arrangement of metric trees which is compatible with at least one generalized color ordering, in the sense that each Feynman diagram in the collection is planar with respect to the corresponding ordering. This is the notion of {\it local planarity}.

In general, we then define \smash{$m_n^{(3)}(\Sigma_I,\Sigma_J)$} as a sum over GFDs that are compatible, or locally planar, with both color orderings $\Sigma_I$ and $\Sigma_J$.

In order to illustrate the concepts, we study \smash{${\mathcal M}^{(3)}_6$} in detail. We find ${\cal N}_{3,6}=372$ distinct color orderings. Unlike the standard $k=2$ case in which all $(n-1)!/2$ color orderings are related by relabelings, for $(k,n)=(3,6)$ there are four types of orderings, with $60$, $180$, $120$, and $12$ orderings in each type respectively. For $(k,n)=(3,7)$, we find ${\cal N}_{3,7}=27\, 240$ color orderings that fall into eleven types. Now $(k,n)=(3,7)$ also provides the first example of an arrangement of metric trees which is {\it not} a generalized Feynman diagram since it is not compatible with any color ordering. This also leads to the first examples of GFDs which contain trees with both degree-four and degree-three vertices in their collections.

We provide a list of representatives for each of the $135$ types of color ordering for ${(k,n)=(3,8)}$ in an appendix, whose permutations give ${\cal N}_{3,8}= 4\, 445\, 640$ color orderings in total.
We present the $4381$ types of $(3,9)$ color orderings in an ancillary file and using relabelling one finds ${{\cal N}_{3,9} = 1 \,553\, 388\, 480}$ color orderings in total.

The reader familiar with the configuration space of points in the projective plane, $X(3,n)$, would recognize the numbers of color orderings for $n=5,6,7$ and their partition into types as the same as the number of chambers and their types, as well as with numbers of reorientation classes of oriented uniform matroids (for a detailed discussion of $n=6$ see \cite{yoshida2013hypergeometric} and for $n=7$ see \cite{sekiguchi1999configurations}). This is not an accident. In fact, the CEGM construction directly computes partial amplitudes as integrals over $X(3,n)$. In a companion paper \cite{Cachazo:2023ltw}, we explore color-dressed amplitudes from the CEGM integral viewpoint, uncovering fascinating connections to canonical forms \cite{Arkani-Hamed:2020cig}, reorientation classes of uniform oriented matroids, the tropical Grassmannian, and the hypersimplex.

In Section \ref{sec7}, we arrive at the first non-trivial application of generalized color orderings by introducing the notion of decoupling and its corresponding identities among partial amplitudes, in analogy with the famous ${\rm U}(1)$ decoupling identifies in gauge theories (see \cite{Mangano:1990by} for a review).

In Section \ref{sec8}, we introduce a generalization of the single scalar field $\phi^3$ theory. This theory has two natural definitions, which we claim are equivalent. The first is as a sum over all~$(k,n)$ generalized Feynman diagrams while the second is as a sum over all diagonal generalized biadjoint amplitudes.

In Section \ref{sechigherkGCO}, we discuss the generalization of GCOs beyond $k=3$; having moved from $k=2$ color orderings to $k=3$ GCOs, the further step to generalize to $k=4$ and beyond is relatively straightforward: arrangements of projective lines in $\mathbb{RP}^{2}$ are replaced by arrangements of $\mathbb{RP}^{k-2}$'s in $\mathbb{RP}^{k-1}$. We will also discuss GCOs for general $k$, their duality, their GFDs and their partial amplitudes in the later part of this paper.

There are of course new phenomena for higher $k$ GCOs and GFDs, which requires exploration. A particular one concerns relations between GCOs and GFDs for different values of $k$. For example, in \cite{Cachazo:2019xjx}, making use of the property that every column or row of a $k=4$ planar matrix of Feynman diagrams, i.e., GFDs contributing to a $k=4$ type 0 GCO, must be a $k=3$ planar arrangement of metric trees, a new bootstrap algorithm was developed to find all such GFDs.

Our goal in this paper is to find all GFDs needed to give the combinatorial construction of the partial amplitude for an arbitrary pair of GCOs. A companion paper \cite{Cachazo:2023ltw} is devoted to the development of a method to find all integrands needed in the CEGM integral in order to produce any such partial amplitude. In that paper, we verify that partial amplitudes obtained using both formulations match for $(k,n)=(3,6), (3,7), (4,7)$ and $(3,8)$.

The rest of this paper is organized as follows: Section \ref{sec2} is a review of the standard color decomposition but with a slightly new version of a color factor. Section \ref{sectionordering} defines $(3,n)$ generalized color orderings while Section \ref{sec4} defines $(3,n)$ generalized Feynman diagrams. Combining the results from Sections \ref{sectionordering} and \ref{sec4}, in Section \ref{three}, we introduce color dressed amplitudes. Section \ref{sec6} contains some properties of color ordering and illustrative examples.
In Section \ref{sec7}, we discuss the generalization of the ${\rm U}(1)$ decoupling identities. In Section \ref{newsec} and \ref{poletoGCO}, we show how to bootstrap all GFDs for each color ordering and vice versa. In
Section \ref{sec8}, we discuss the generalization of the single scalar field $\phi^3$ theory.
We generalize the GCOs and GFDs for higher $k$ in
Sections~\ref{sechigherkGCO} and~\ref{sechigherkGFD}, respectively. We end with future directions in Section \ref{sec10}, where we introduce a new family of objects called {\it chirotopal tropical Grassmannians}. The positive tropical Grassmannian is the simplest member of the family.

Most data is presented either in the appendices or in an ancillary file.

\section{Standard color decomposition}\label{sec2}

Tree-level scattering amplitudes of gluons in ${\rm SU}(N)$ Yang--Mills theory can be color decomposed into partial amplitudes as (see \cite{Mangano:1990by} for a review)
\begin{equation}\label{coYMQ}
 {\mathcal A}_n(\{ k_i,\epsilon_i,a_i\} ) = \sum_{\sigma \in S_{n}/{\mathbb{Z}_n}} \operatorname{tr}\left( T^{a_{\sigma (1)}}T^{a_{\sigma (2)}} \cdots T^{a_{\sigma (n)}}\right)A(\sigma(1),\sigma(2),\dots ,\sigma(n)).
\end{equation}
Partial amplitudes have two important properties,
\begin{align*}
A(1,2,\dots ,n-1, n) ={}& A(n,1,\dots ,n-2,n-1),\\
 A(1,2,\dots ,n-1,n) ={}& (-1)^n A(n,n-1,\dots ,2,1).
\end{align*}
The sum in \eqref{coYMQ} is over cyclic orderings and therefore any given partial amplitude appears twice. This motivates the following definitions.

\begin{defn}\label{stdCO}
A {\it color ordering} is an equivalence class of n-tuples $(\sigma(1),\sigma(2),\dots, \sigma(n))$ with $\sigma\in S_n$ such that
\begin{align*}
 (\sigma(1),\sigma(2),\dots, \sigma(n))\sim (\sigma(n),\sigma(1),\dots, \sigma(n-1)),\\ (\sigma(1),\sigma(2),\dots, \sigma(n))\sim (\sigma(n),\sigma(n-1),\dots, \sigma(1)).
\end{align*}
\end{defn}

In the following, we choose a canonical representative to have $\sigma(1)=1$ and $\sigma(2)< \sigma(n)$.

\begin{defn}\label{stdCF}
Given a color ordering $(\sigma(1),\sigma(2),\dots, \sigma(n))$ in its canonical form, its associated {\it color factor} is
\begin{equation*}
 c(\sigma ):= {\rm tr}(T^{a_{\sigma (1)}}T^{a_{\sigma(2)}}\cdots T^{a_{\sigma(n)}}) + (-1)^n {\rm tr}( T^{a_{\sigma (n)}}T^{a_{\sigma(n-1)}}\cdots T^{a_{\sigma(1)}}).
\end{equation*}
	\end{defn}

Note that there are $(n-1)!/2$ such color factors and their orderings $\sigma$ are called {\it planar orderings}. Also, when $n$ is even, the color factor is independent of the representative $\sigma\in S_n$, but when $n$ is odd the color factor can differ by a sign and hence the need to define it using the canonical representative.

In terms of these color factors, \eqref{coYM} and \eqref{coBS} can be written as
\begin{equation}\label{coYM2}
 {\mathcal A}_n(\{ k_i,\epsilon_i,a_i\} ) = \sum_{\sigma \in S_{n}/{\mathbb{Z}_n\times Z_2}} c(\sigma(1),\sigma(2),\dots ,\sigma(n) )A(\sigma(1),\sigma(2),\dots ,\sigma(n))
\end{equation}
and
\begin{align}
 {\mathcal M}_n(\{ k_i,a_i,{\tilde a}_i\} ) ={}& \sum_{\alpha,\beta \in S_{n}/{\mathbb{Z}_n\times Z_2}}c(\alpha(1),\alpha(2),\dots ,\alpha(n))\nonumber\\
 &\times c(\beta(1),\beta(2),\dots ,\beta(n)) m_n(\alpha,\beta).\label{coBS2}
\end{align}

The definition of $m_n(\alpha,\beta)$ in terms of Feynman diagrams is simply given by \cite{Cachazo:2013iea}
\begin{equation*}
 m_n(\alpha,\beta) = (-1)^{w(\alpha, \beta)}\sum_{{\rm T}\in {\mathcal O}(\alpha)\cap {\mathcal O}(\beta)} \frac{1}{\prod_{e\in {\rm T}}s_e} ,
\end{equation*}
where $w(\alpha, \beta)$ is an integer, the winding number, that depends only on the number of relative descents in the pair of cycles \cite{Mafra:2016ltu},
${\mathcal O}(\gamma)$ is the set of all trees that are planar with respect to $\gamma$, the product is over all internal edges $e$ of the tree $T$ and $s_e$ is the standard kinematic invariant associated to a propagator.

\section{Generalized color orderings}\label{sectionordering}

A standard color ordering, $\sigma$, admits a simple pictorial representation: Drawing the points $\{ \sigma(1),\sigma(2), \dots ,\sigma(n) \}$ on the boundary of a disk makes the dihedral symmetry\footnote{Up to a sign when $n$ is odd.} of the color factor $c(\sigma )$ manifest. An equivalent description of a $(2,n)$ color ordering is as an arrangement of~$n$ points \big(or $\mathbb{RP}^0$'s\big) on $\mathbb{RP}^1$.
Our proposal for a $(3,n)$ generalized color ordering is simply given by an arrangement of $n$ lines \big(or $\mathbb{RP}^1$\big) on $\mathbb{RP}^2$. An equivalent definition which parallels more closely that of generalized Feynman diagrams is the following.

\begin{defn}\label{GCO}
A $(k,n)\!=\!(3,n)$ generalized {\it color ordering} is an n-tuple $\Sigma \!=\! \big( \sigma^{(1)},\sigma^{(2)}, \dots ,\sigma^{(n)}\! \big)$, of $(k=2,n-1)$ color orderings such that there exists an arrangement of $n$ lines on $\mathbb{RP}^2$, $\{ L_1,L_2,\dots ,L_n \}$, so that $\sigma^{(i)}$ is the $(2,n-1)$ color ordering on line $L_i$ defined by the points $L_j\cap L_i$ for $j\in [n]\setminus \{i\}$.
\end{defn}

Any $(3,n)$ color ordering, $\Sigma$, admits $n$ natural projections to the set of $(3,n-1)$ color orderings by simply deleting one line from the arrangement. Let us define the projections as
\be
\label{colorprojection}
\pi_i(\Sigma ) = \big( \pi_i\big(\sigma^{(1)}\big),\pi_i\big(\sigma^{(2)}\big), \dots ,\hat\sigma^{(i)},\dots ,\pi_i\big(\sigma^{(n)}\big) \big),
\ee
where $\hat\sigma^{(i)}$ indicates that this ordering is removed, while the projections $\pi_i$ on a $k=2$ ordering remove the label $i$ from the set regardless of its position. For example,
\begin{align*}
 & \pi_3(((23645),(13465),(12456),(15326),(12634),(13524))) \\ & \qquad = ((2645),(1465),(1526),(1264),(1524)).
\end{align*}

It is tempting to think that the projections could lead to a purely combinatorial recursive definition of $k=3$ color orderings. However, as shown in Section~\ref{colorrecursion}, starting at $n=9$, there are examples of collections of $k=2$ orderings with valid projections but which do not correspond to any arrangement of lines.

Before ending this section, we note that there is a set of generalized color orderings with a~very special property.

\begin{defn}\label{globalGCO}
A $(3,n)$ generalized color ordering $\Sigma = \big( \sigma^{(1)},\sigma^{(2)}, \dots ,\sigma^{(n)} \big)$ is said to descend from a $(2,n)$ color ordering $\sigma$ if $\sigma^{(i)}=\pi_i(\sigma)$. We also say that $\Sigma$ is a descendant of $\sigma$.
\end{defn}

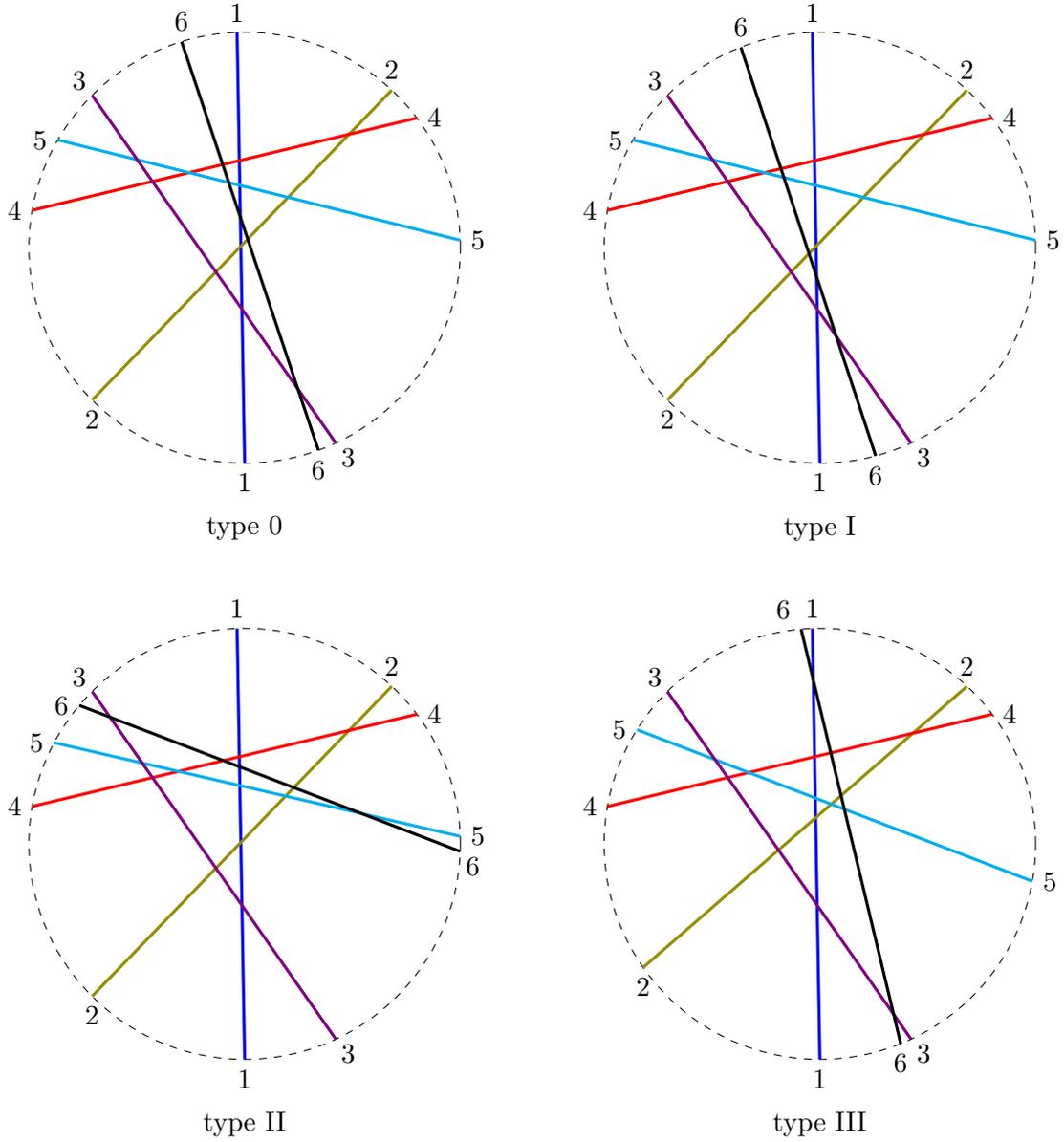
\begin{figure}[!ht]\centering
 \begin{tikzpicture}

\begin{scope}[xshift=-4cm,yshift=4cm,scale=3]

 \draw[ dashed](0,0) circle (1);
 \draw[blue,very thick] (90+2:1) node [above]{\color{black}1} -- (-90:1) node [below]{\color{black}1} ;

 \draw[olive,very thick] (45+2:1) node [above]{\color{black}2} -- (-115-20:1) node [below]{\color{black}2} ;

 \draw[red,very thick] (35+2:1) node [right]{\color{black}4} -- (-180-10:1) node [left]{\color{black}4} ;

 \draw[cyan,very thick] (0+2:1) node [right]{\color{black}5} -- (-180-30:1) node [left]{\color{black}5} ;

 \draw[violet,very thick] (-65:1) node [below right=-2pt]{\color{black}3} -- (-180-45:1) node [ above left=-2pt]{\color{black}3} ;

 \draw[black,very thick] (110-3:1) node [above]{\color{black}6} -- (-70:1) node [below]{\color{black}6} ;

 \node at (-90:1.3) {type 0};

\end{scope}

\begin{scope}[xshift=4cm, yshift=4cm,scale=3]
 \draw[ dashed](0,0) circle (1);
 \draw[blue,very thick] (90+2:1) node [above]{\color{black}1} -- (-90:1) node [below]{\color{black}1} ;

 \draw[olive,very thick] (45+2:1) node [above]{\color{black}2} -- (-115-20:1) node [below]{\color{black}2} ;

 \draw[red,very thick] (35+2:1) node [right]{\color{black}4} -- (-180-10:1) node [left]{\color{black}4} ;

 \draw[cyan,very thick] (0+2:1) node [right]{\color{black}5} -- (-180-30:1) node [left]{\color{black}5} ;

 \draw[violet,very thick] (-65:1) node [below right=-2pt]{\color{black}3} -- (-180-45:1) node [ above left=-2pt]{\color{black}3} ;

 \draw[black,very thick] (110+1.5:1) node [above]{\color{black}6} -- (-75:1) node [below]{\color{black}6} ;

 \node at (-90:1.3) {type I};
\end{scope}

\begin{scope}[xshift=-4cm,yshift=-4.3cm, scale=3]

 \draw[ dashed](0,0) circle (1);
 \draw[blue,very thick] (90+2:1) node [above]{\color{black}1} -- (-90:1) node [below]{\color{black}1} ;

 \draw[olive,very thick] (45+2:1) node [above]{\color{black}2} -- (-115-20:1) node [below]{\color{black}2} ;

 \draw[red,very thick] (35+2:1) node [right]{\color{black}4} -- (-180-10:1) node [left]{\color{black}4} ;

 \draw[cyan,very thick] (0+2:1) node [right]{\color{black}5} -- (-180-30+2:1) node [left]{\color{black}5} ;

 \draw[violet,very thick] (-65:1) node [below right=-2pt]{\color{black}3} -- (-180-45:1) node [ above left=-2pt]{\color{black}3} ;

 \draw[black,very thick] (-2:1) node [below right=-2pt]{\color{black}6} -- (-180-40:1) node [left]{\color{black}6} ;

 \node at (-90:1.3) {type II};
\end{scope}

\begin{scope}[xshift=4cm,yshift=-4.3cm, scale=3]
 \draw[ dashed](0,0) circle (1);
 \draw[blue,very thick] (90+2:1) node [above]{\color{black}1} -- (-90:1) node [below]{\color{black}1} ;

 \draw[olive,very thick] (45+2:1) node [above]{\color{black}2} -- (-115-30:1) node [below]{\color{black}2} ;

 \draw[red,very thick] (35+2:1) node [right]{\color{black}4} -- (-180-10:1) node [left]{\color{black}4} ;

 \draw[cyan,very thick] (0-10:1) node [right]{\color{black}5} -- (-180-30-2:1) node [left]{\color{black}5} ;

 \draw[violet,very thick] (-65:1) node [below right=-2pt]{\color{black}3} -- (-180-45:1) node [ above left=-2pt]{\color{black}3} ;

 \draw[black,very thick] (-65-3:1) node [below=0pt]{\color{black}6} -- (90+5:1) node [above left]{\color{black}6} ;

 \node at (-90:1.3) {type III};

\end{scope}

\end{tikzpicture}

\caption{Representatives of arrangements of lines of different types for $(3,6)$.}	\label{36figure}
\end{figure}

Figure \ref{36figure} shows four examples of color orderings for $(3,6)$ which cannot be related to each other by relabeling (see \cite[Chapter 8]{yoshida2013hypergeometric} for a connection to configuration spaces). This is the first feature in which $k=3$ color orderings differ from $k=2$ ones. Reading the $k=2$ color orderings associated to each line in the arrangements gives the four $k=3$ color orderings in Table~\ref{table36}. Only the first one of the four is a color ordering which descends from a $k=2$ ordering (in this case it is $\sigma = (123456)$).

The $k=3$ color orderings in Table~\ref{table36} can be treated as representatives of four types of $(3,6)$ orderings. Using relabelling, they give rise to all the $(3,6)$ color orderings, totaling $372$. In Section \ref{sec6}, we explain how to find such representatives.
The main tool is a recursive procedure in Section~\ref{colorrecursion} which can also be applied to higher points.

\begin{table}[!htb]\renewcommand{\arraystretch}{1.2}
 \centering
\begin{tabular}{|c|c|c|c|}\hline
Type & Color ordering representative & \# \\ \hline
0 & $((2 3 4 5 6),(1 3 4 5 6),(1 2 4 5 6),(1 2 3 5 6),(1 2 3 4 6),(1 2 3 4 5))$ & 60 \\
I & $((2 5 4 3 6),(1 5 4 3 6),(1 2 4 5 6),(1 2 3 5 6),(1 2 3 4 6),(1 2 5 4 3))$ & 180 \\
II & $((2 3465 ),(1 3465 ),(1 2 4 5 6),(1 2 3 5 6),(1 2 6 3 4),(1 2 5 3 4))$ & 120\\
III & $((2 3645 ),(1 3465 ),(1 2 4 5 6),(1 5 3 2 6),(1 2 6 3 4),(1 3524 ))$ & 12 \\ \hline
\end{tabular}
 \caption{All four types of $(3,6)$ color orderings. 
 The second column provides a representative that can be used to obtain the rest by applying permutations of labels. The last column contains the number of distinct permutations. \label{table36}}
\end{table}

\section{Generalized Feynman diagrams}\label{sec4}

In \cite{Borges:2019csl}, the notion of a generalized Feynman diagram was introduced for $k=3$, building on arrangements of metric trees \cite{herrmann2008draw}, as a collection of Feynman diagrams. The construction in \cite{Borges:2019csl} focused on collections of trees that satisfy a special notion of planarity which, in the terminology introduced in the previous section, turns out to correspond to descendants of $k=2$ color orderings.

Let us review the construction starting with that of a standard Feynman diagram in a biadjoint $\phi^3$ theory. A Feynman diagram is a pair, consisting of a graph together with given kinematic data; the Feynman rules assign to that pair a function. We are interested in tree diagrams with $n$ external vertices (or leaves) and trivalent internal vertices.

The kinematic data consists of a symmetric $n\times n$ matrix, $s_{ab}$, such that its diagonal elements are zero, $s_{aa}=0$, and the sum of its rows vanishes, $\sum_b s_{ab}=0$.

Given a tree, $T$, assign lengths (i.e., positive real numbers) to each of its edges so that $e_a$ denotes the length of the $a^{\rm th}$ external edge and $f_I$ the length of the $I^{\rm th}$ internal edge. Also, denote the minimal distance from leaf $a$ to leaf $b$ by $d_{ab}$. A tree $T$ with a ``metric'' $d_{ab}$ is called a metric tree.

Now consider the integral
\be\label{Scw2}
\int_{\mathbb{R}^+} \prod_{I=1}^{n-3}{\rm d}f_I \exp(-\sum_{a<b}s_{ab}\, d_{ab})=\int_{\mathbb{R}^+} \prod_{I=1}^{n-3}{\rm d}f_I \exp(-\sum_{I}q_I f_I) =\prod_{I=1}^{n-3}\frac{1}{q_I} ,
\ee
where an internal edge $I$ partitions the set of leaves into $L_I\cup R_I = [n]$ and $q_I = \sum_{a\in L_I,\, b\in R_I} s_{ab}.$
It is easy to show that $\sum_{a<b}s_{ab}\, d_{ab} = \sum_{I}q_I f_I $ by using $\sum_b s_{ab}=0$. In particular, note that all external lengths $e_a$ drop out. Also, the middle expression in \eqref{Scw2} is the Laplace transform of the space of internal lengths of the graph, which coincides with the Schwinger parameter representation of Feynman propagators. While the integral representation is only valid for~${q_I>0}$, the rational function on the left defines the value of the Feynman diagram for any $q_I\neq 0$.

Let us now turn to $k=3$ Feynman diagrams. The space of $k=3$ kinematic invariants is given by rank 3 symmetric tensors $\s_{abc}$ such that $\s_{aab}=0$ and $\sum_{bc}\s_{abc}=0$.

\begin{defn}[\cite{herrmann2008draw}]\label{metricTreeArrangement}
An arrangement of metric trees is an n-tuple ${\cal T} = ( T_1,T_2,\dots ,T_n)$ such that $T_i$ is a metric tree with $n-1$ leaves in the set $[n]\setminus \{ i \}$ and metric \smash{$d^{(i)}_{ab}$} so that the following compatibility condition is satisfied
\begin{gather*}
d^{(a)}_{bc} = d^{(b)}_{ac} =d^{(c)}_{ab}, \qquad \forall\{a,b,c\}\subset [n].
\end{gather*}
Denote by $d$ the symmetric tensor with entries $d_{abc}:=d^{(a)}_{bc}$.
\end{defn}

One property with special physical significance is that any arrangement of metric trees admits~$n$ natural projections to arrangements of metric trees with one less leaf. Not surprisingly, the definition follows that for color orderings, i.e.,
\be
\label{projectioncFD}
\pi_i({\cal T}) = \big( \pi_i(T_1),\pi_i(T_2),\dots ,\hat{T_i} ,\dots ,\pi_i(T_n) \big),
\ee
where $\hat{T_i}$ indicates that the $i^{\rm th}$ tree is removed and $\pi_i(T_j)$ means the tree obtained from $T_j$ by pruning the leaf with the label $i$.

Once again, it is tempting to use the recursive property as a way to find arrangements of metric trees. However, not all arrangements of trees satisfying the recursive property admit a~non-degenerate metric. The first example is for $n=9$. Nevertheless, checking that a metric exists is easy since it only involves solving a set of linear equations. It is not obvious that this does not happen for $n<9$ but an exhaustive search shows that to be the case (for a discussion which uses the connection to tropical geometry see \cite{herrmann2008draw}).

In parallel to the discussion of arrangements of lines (see Definition \ref{globalGCO}), we note that there is also a family of arrangements of metric trees with a very special property.

\begin{defn}\label{globalGFD}
A $(3,n)$ arrangement of metric trees ${\cal T} = ( T_1,T_2,\dots ,T_n )$ is said to descend from a metric $T$ if $T^{(i)}=\pi_i(T)$. We also say that ${\cal T}$ is a descendant of $T$.
\end{defn}

Now we are ready to present a very elementary definition of $k=3$ generalized Feynman diagrams for $n<9$. Higher values of $n$ require a more technical definition and it is outside the scope of this work.

\begin{defn}\label{defGFD}
A $(k=3, n<9)$ generalized Feynman diagram is a pair, consisting of given kinematic data, together with an arrangement of metric trees
${\cal T} = ( T_1,T_2,\dots ,T_n )$ that satisfies the following two properties:
\begin{itemize}\itemsep=0pt
 \item There exists at least one generalized color ordering \smash{$\Sigma = \big( \sigma^{(1)},\sigma^{(2)}, \dots ,\sigma^{(n)} \big)$} such that $T_i$ is planar with respect to $\sigma^{(i)}$ for all $i \in [n]$. In this case we say that ${\cal T}$ is compatible with~$\Sigma$.
 \item The arrangement ${\cal T}$ has exactly $2(n-4)$ independent internal edge lengths.\footnote{For higher values of $n$, we suspect that this has to be replaced by the requirement that the metric $d_{abc}$ associated with ${\cal T}$ defines a cone in the tropical Grassmannian $\operatorname{Tr} G(3,n)$.}
\end{itemize}
\end{defn}

Moreover, the rational function associated to ${\cal T}$ is
\be\label{Scw3}
{\cal R}({\cal T}) := \int_{\mathbb{R}^+} \prod_{I=1}^{2(n-4)}{\rm d}f_I \prod_{J=2(n-4)+1}^{n(n-4)}\theta (f_J(f_1,\dots ,f_{2(n-4)})) \exp \bigg(-\sum_{a<b<c}\s_{abc}  d_{abc}\bigg).
\ee
Here $\theta(x):=1$ if $x > 0$ and $\theta(x)=0$ if $x\leq 0$ and $f_I$ with $I\in [2(n-4)]$ are the internal lengths chosen to parameterize all other internal lengths. The conditions in the integrand, $\theta (f_J(f_1,\dots ,f_{2(n-4)}))$, simply enforce that all internal lengths must be non-negative.

Note that the definition does not restrict the kind of trees that participate in a collection. For $(k,n)=(3,6)$, all GFDs are collections of trees with only degree-three internal vertices but starting at $(k,n)=(3,7)$ there can be trees with mixed kinds of internal vertices.

It is important to point out that the notions of descendant color ordering (see Definition~\ref{globalGCO}) and descendant generalized Feynman diagram (using Definition~\ref{globalGFD} for its arrangement) are independent. In other words, there are descendant generalized Feynman diagrams which are compatible with non-descendant color orderings and non-descendant generalized Feynman diagrams which are compatible with descendant color orderings. In fact, most of the GFDs studied in~\cite{Borges:2019csl} are examples of the latter.

Let us end this section with examples that illustrate the Definition~\ref{defGFD}.

In \cite[Section 3.2]{Borges:2019csl}, several GFDs compatible with the $(3,7)$ color ordering which descends from the canonical ordering $(1234567)$ were presented. All of them are collections of seven tree-diagrams with $2(7-4)=6$ independent internal lengths but which evaluate to rational functions~${\cal R}({\cal T})$ with different numbers of poles. We reproduce here the example with seven poles.
 \def\fFD #1,#2,#3,#4,#5,#6 \fFD 	{
 \raisebox{-.85 cm}{
\tikz[scale=.3]{
\draw[]
(0,0)--(3,0)
(0,0)--(-.3,-1.3) node[below]{#1}
(0,0)--(-.3,1.3) node[above]{#2}
(1,0)-- (1,1.3) node[above]{#3}
(2,0)-- (2,1.3) node[above]{#4}
(3,0)--(3.3,1.3) node[above]{#5}
(3,0)--(3.3,-1.3) node[below]{#6}
;
}}
}
\begin{align*}
{\cal T} = \left(
\fFD 3,4,2,5,6,7 \fFD,\fFD 2,4,1,5,6,7 \fFD,\fFD 1,2,4,5,6,7 \fFD,\fFD 1,2,3,7,5,6 \fFD,\fFD 1,2,3,7,4,6 \fFD,\fFD 1,2,7,3,4,5 \fFD,\fFD 1,2,6,3,4,5 \fFD\right).
\end{align*}
If the internal lengths of each tree diagram in the collection are ordered from left to right, then their expressions can be recorded in a $3\times 7$ matrix with $i^{\rm th}$ column \smash{$\big[f_1^{(i)},f_2^{(i)},f_3^{(i)}\big]^{\mathsf{T}}$},
\begin{gather*}
\left[
\begin{matrix}
 x & x & y & y & x+y & z & z \\
 w & w & w+x & p+w+x & p & q & q \\
 u & u & p & v & v & p+v & p
\end{matrix}
\right],
\end{gather*}
with $p+x+y= u+ z$, $q+z= x + y$. Selecting any six independent variables, a straightforward computation using \eqref{Scw3} gives,
\[
{\cal R}({\cal T}) = \frac{\sfW_{1234567}+\t_{34567}}{\s_{456}\t_{1234}\t_{34567}\sfR_{45,67,123}\sfR_{67,12,345}\sfR_{43,21,765}\sfW_{1234567}},
\]
with
\begin{gather}
\t_{A}=\sum_{\{a,b,c\}\in \binom{A}{3}}\s_{abc}, \qquad
\sfR_{ab,cd,efg} = \t_{abefg}+\s_{abc}+\s_{abd},\nonumber\\
 \sfW_{abcdefg} = \t_{abcd} + \t_{fgab} + \s_{abe}.\label{defpoles}
\end{gather}
For later convenience, we define $\sfR_{ab,cd,ef} = \t_{abef}+\s_{abc}+\s_{abd}$.
Let us now give an example of an arrangement of trees in which all trees have degree-three internal vertices and yet it has seven independent internal lengths; because it has more than six independent parameters it is not a~valid generalized Feynman diagram,
 \def\fFD #1,#2,#3,#4,#5,#6 \fFD 	{
 \raisebox{-.65 cm}{
\tikz[scale=.3]{
\draw[]
(0,0)--(90:1)--++(90+45:.7) node[left=-2pt]{#1}
(0,0)--(90:1)--++(90-45:.7) node[right=-2pt]{#2}
(0,0)--(90-120:1)--++(90-120+45:.7) node[right=-2pt]{#3}
(0,0)--(90-120:1)--++(90-120-45:.7) node[below=-2pt]{#4}
(0,0)--(90+120:1)--++(90+120+45:.7) node[below=-2pt]{#5}
(0,0)--(90+120:1)--++(90+120-45:.7) node[left=-2pt]{#6}
;
}}}
\begin{align}
\left(\!\!
 \fFD 3, 7, 4, 5, 6, 2 \fFD,\!\fFD 3, 4, 5, 7, 6, 1 \fFD,\!\fFD 1, 7, 2, 4, 5, 6 \fFD,\!\fFD 7, 6, 5, 1, 2, 3 \fFD,\!\fFD 2, 7, 3, 6, 4, 1 \fFD,\!
\fFD 4, 7, 1, 2, 3, 5 \fFD,\!\fFD 3, 1, 2, 5, 4, 6 \fFD\!\!
\right).\label{37T0}
\end{align}
If internal lengths in each snowflake diagram are labeled clockwise starting with the ``vertical'' edge, then the $3\times 7$ matrix is
\be\label{metricT0}
\left[
\begin{matrix}
 x & w & x & q & r & q & x \\
 y & r & w & y & p & z & r \\
 z & z & p & w & y & p & q \\
\end{matrix}
\right] .
\ee
We have checked that this arrangement of metric trees is not compatible with any $(3,7)$ color orderings and so it fails both conditions in Definition \ref{defGFD}. The arrangement of metric trees \eqref{37T0} was presented in \cite{herrmann2008draw}.\footnote{The data in \cite{herrmann2008draw} is stored at: \url{www.uni-math.gwdg.de/jensen/Research/G3_7/grassmann3_7.html}.
 It is important to mention that the Groebner cone data for $T0$ in their first table is not directly for $T0$ but for one of its co-dimension one boundaries.
}

The next example is a valid GFD given by an arrangement of mixed trees, i.e., trees with both degree-three and degree-four vertices,
\def\fFD #1,#2,#3,#4,#5,#6 \fFD 	{
 \raisebox{-.65 cm}{
\tikz[scale=.3]{
\draw[]
(0,0)--(90:1)--++(90+45:.7) node[left=-2pt]{#1}
(0,0)--(90:1)--++(90-45:.7) node[right=-2pt]{#2}
(0,0)--(90-120:1)--++(90-120+45:.7) node[right=-2pt]{#3}
(0,0)--(90-120:1)--++(90-120-45:.7) node[below=-2pt]{#4}
(0,0)--(90+120:1)--++(90+120+45:.7) node[below=-2pt]{#5}
(0,0)--(90+120:1)--++(90+120-45:.7) node[left=-2pt]{#6}
;
}}}
\def\fFDD #1,#2,#3,#4,#5,#6 \fFDD 	{
 \raisebox{-.65 cm}{
\tikz[scale=.3]{
\draw[]
(0,0)--(90:0)--++(90+45:.7) node[left=-2pt]{#1}
(0,0)--(90:0)--++(90-45:.7) node[right=-2pt]{#2}
(0,0)--(90-120:1)--++(90-120+45:.7) node[right=-2pt]{#3}
(0,0)--(90-120:1)--++(90-120-45:.7) node[below=-2pt]{#4}
(0,0)--(90+120:1)--++(90+120+45:.7) node[below=-2pt]{#5}
(0,0)--(90+120:1)--++(90+120-45:.7) node[left=-2pt]{#6}
;
}}}
\begin{align}
&\left(\!\!
 \fFDD 3, 7, 4, 5, 6, 2 \fFDD,\!
 \fFD 3, 4, 5, 7, 6, 1 \fFD,\!
 \fFDD 1, 7, 2, 4, 5, 6 \fFDD,\!
 \fFD 7, 6, 5, 1, 2, 3 \fFD,\!
 \fFD 2, 7, 3, 6, 4, 1 \fFD,\!
 \fFD 4, 7, 1, 2, 3, 5 \fFD,\!
 \fFDD 3, 1, 2, 5, 4, 6 \fFDD\!\!
\right) .\!\!\!\label{boundaryofT0}
\end{align}
Its contribution to the amplitudes is given by $ {1}/({\s_{126} \s_{145} \s_{234} \s_{257} \s_{356} \s_{467}})$.
One can check that this arrangement is in fact one of the seven possible degenerations of the seven-parameter non-GFD given in \eqref{37T0}. In this case, \eqref{boundaryofT0} is obtained by setting $x=0$ in \eqref{metricT0}. The new arrangement is compatible with $64$ GCOs and two of them are given by,
\begin{align}
\label{37GCOquar}
((2 3 4 5 7 6),(1 3 4 5 7 6),(1 2 4 7 5 6),(1 2 3 7 6 5),(1 2 7 3 6 4),(1 2 3 5 4 7),(1 2 5 3 4 6)),
\\
\label{37GCOquar2}
((2 6 3 5 4 7),(1 5 7 3 4 6),(1 4 2 7 6 5),(1 5 3 2 7 6),(1 3 6 2 7 4),(1 2 7 4 3 5),(1 5 2 3 4 6))
.
\end{align}

The final example is a $(3,8)$ arrangement of metric trees which has $2(8-4)=8$ independent internal lengths but it is not compatible with any GCOs and hence it is not a GFD,
\def\fFD #1,#2,#3,#4,#5,#6,#7 \fFD 	{
 \raisebox{-.65 cm}{
\tikz[scale=.3]{
\draw[]
(0,0)--(90:1)--++(90+45:.7) node[left=-2pt]{#1}
(0,0)--(90:1)--++(90-45:.7) node[right=-2pt]{#2}
(0,0)--(90-100:1)--++(90-100+45:.7) node[right=-2pt]{#3}
(0,0)--(90-100:1)--++(90-100-45:.7) node[below=-2pt]{#4}
(0,0)--(90+100:1)--++(90+120+45:.7) node[below=-2pt]{#5}
(0,0)--(90+100:1)--++(90+100-45:.7) node[left=-2pt]{#6}
(0,0)--(-90:1.5) node[below=-2pt]{#7}
;
}}}
\begin{align}
&\left(\!\!\!\!\!
 \fFD 2, 3, 4, 7, 5, 6, 8 \fFD\!,\!\!\!\!\fFD 1, 3, 5, 7, 6, 8, 4 \fFD\!,\!\!\!\!\fFD 1, 2, 4, 6, 7, 8, 5 \fFD\!,\!\!\!\!\fFD 1, 7, 3, 6, 5, 8, 2 \fFD\!, \!\!\!\!
 \fFD 1, 6, 2, 7, 4, 8, 3 \fFD\!,\!\!\!\!\fFD 1, 5, 3, 4, 2, 8, 7 \fFD\!,\!\!\!\!\fFD 1, 4, 2, 5, 3, 8, 6 \fFD\!,\!\!\!\!\fFD 2, 6, 4, 5, 3, 7, 1 \fFD
\!\!\!\right)\!\!.\!\!\!\!\!\label{37T038}
\end{align}

\section{Color dressed generalized biadjoint amplitudes}\label{three}

An amplitude with the complete color structure is called a {\it color dressed} amplitude in the literature. Here we generalize the $k=2$ color dressed biadjoint amplitude presented in \eqref{coBS2} to~$k=3$. Let ${\rm CO}_{3,n}$ denote the set of all $k=3$ color orderings for $n$ labels and ${\cal N}_{3,n}=|{\rm CO}_{3,n}|$ the number of such orderings. A typical element $\Sigma \in {\rm CO}_{3,n}$ is given by,
\begin{equation*}
 \Sigma =\bigl(\bigl(\sigma^{(1)}_{(2)},\sigma^{(1)}_{(3)},\dots,\sigma^{(1)}_{(n)}\bigr),\bigl(\sigma^{(2)}_{(1)},\sigma^{(2)}_{(3)},\dots,\sigma^{(2)}_{(n)}\bigr),\dots ,\bigl(\sigma^{(n)}_{(1)},\sigma^{(n)}_{(2)},\dots,\sigma^{(n)}_{(n-1)}\bigr)\bigr).
\end{equation*}
To each such color ordering, we associate a color factor ${\bf c}(\Sigma)$. In this work, we treat ${\bf c}(\Sigma)$ purely as a bookkeeping device.

\begin{defn}\label{colorDressed}
The color dressed $(k=3,n)$ biadjoint amplitude is
\begin{equation}\label{coBS3Abs}
 {\mathcal M}^{(3)}_n = \sum_{\Sigma,\tilde{\Sigma}\in {\rm CO}_{3,n}} {\bf c}(\Sigma){\bf c}\big(\tilde\Sigma\big) m_n^{(3)}\big(\Sigma,\tilde\Sigma\big),
\end{equation}
where
\begin{equation*}
 m_n^{(3)}\big(\Sigma,\tilde\Sigma\big) = (-1)^{w(\Sigma,\tilde\Sigma )}\sum_{{\mathcal T}\in {\mathcal O}(\Sigma)\cap {\mathcal O}(\tilde\Sigma)} {\mathcal R}(\mathcal T).
\end{equation*}
Here ${\mathcal O}(\Sigma)$ is the set of all GFDs which are compatible with $\Sigma$. The notion of compatibility is simply that if \smash{${\cal T} = \big(T^{(1)}, T^{(2)},\dots, T^{(n)}\big)$}, then we require \smash{$T^{(i)}\in {\cal O}\big(\sigma^{(i)}\big)$} for all $i$.
\end{defn}

Let us comment on the notion of compatibility. Another way to explain it is to say that a~GFD ${\cal T}$ is compatible or planar with respect to a generalized color ordering $\Sigma$, if the $i^{\rm th}$-tree is planar with respect to the $i^{\rm th}$-ordering. This is a local notion of planarity while the one introduced in \cite{Borges:2019csl} is global.

We do not provide a definition for the sign function here. We suspect that an explicit realization of color factors might be required to obtain a consistent definition.

We have obtained all GFDs for $(3,6)$, $(3,7)$, and $(3,8)$ and provided a Mathematica notebook as an ancillary file to generate any color ordered $(k=3,n<9)$ biadjoint amplitudes with two arbitrary orderings. In the companion paper \cite{Cachazo:2023ltw}, we found all of their integrands needed in the CEGM integrals to compute the color ordered amplitudes. We have verified the amplitudes obtained from both sides match, which is a strong consistency check both for the GFDs and the integrands. We present some brief examples of amplitudes next and more details of GFDs are given in Section~\ref{newsec}.

\subsection{Examples}

We have computed the full color dressed $(3,6)$ biadjoint amplitude,
\begin{equation*}
 {\mathcal M}^{(3)}_{6} = \sum_{I,J=1}^{372} {\bf c}(\Sigma_I){\bf c}(\Sigma_J)\, m_{6}^{(3)}(\Sigma_I,\Sigma_J).
\end{equation*}
There are $1005$ $(3,6)$ generalized Feynman diagrams and the $372\times 372$ matrix of partial amplitudes \smash{$m_{6}^{(3)}(\Sigma_I,\Sigma_J)$} can be obtained directly by listing the generalized Feynman diagrams compatible with both orderings.

\begin{figure}[th!]
	\centering
	\hspace{0.1in}
 \includegraphics[width=0.8\linewidth]{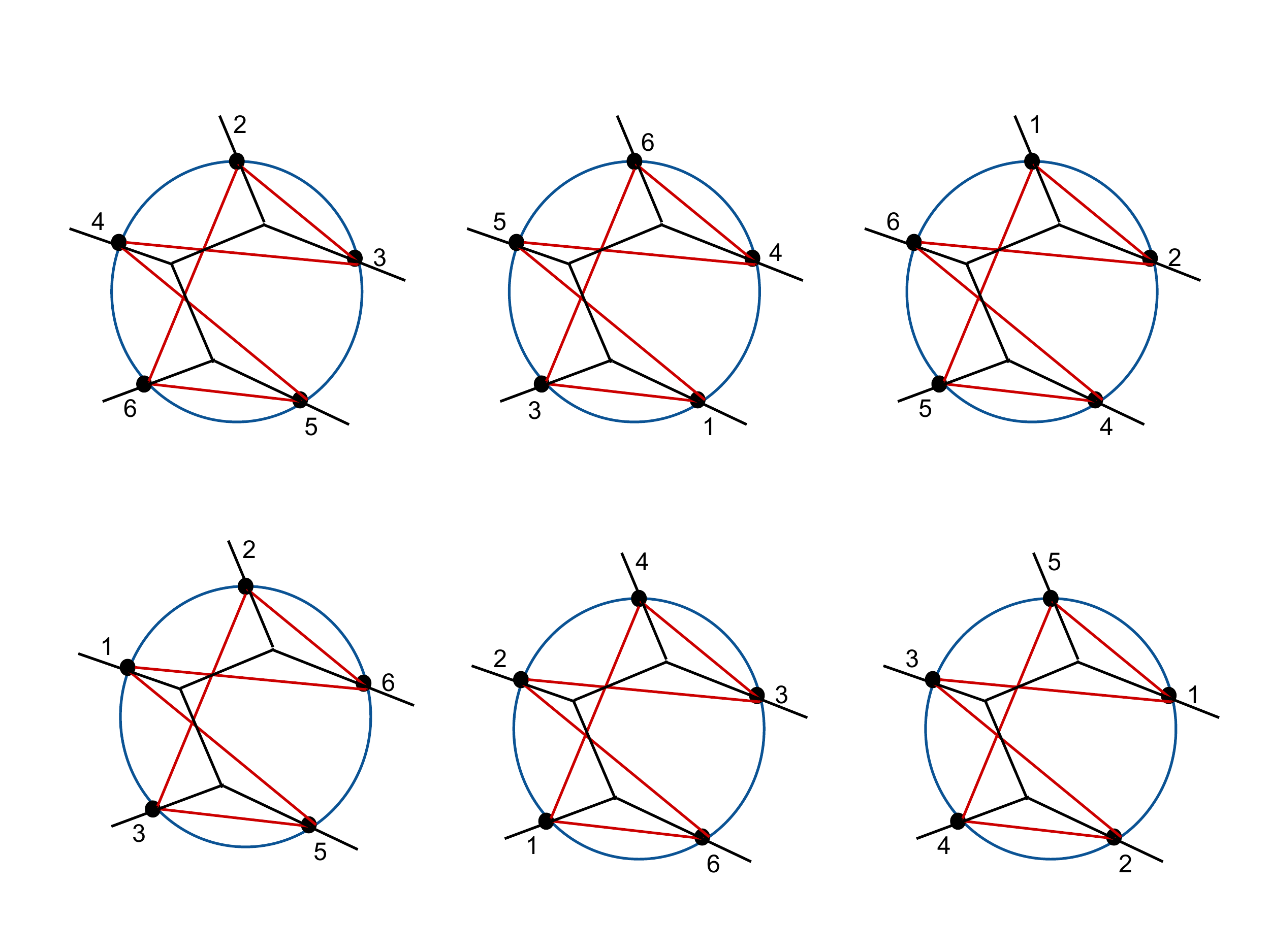}
	\caption{Blue cycles represent the $k=2$ ordering in $\Sigma$, red cycles represent the $k=2$ ordering in $\tilde\Sigma$. Following the CHY diagrammatic description of biadjoint amplitudes, the dual to each red cycle gives rise to the possible Feynman diagrams compatible with both orderings. In this case, there is a single Feynman diagram in each entry and it is drawn in black. They together make up a collection of Feynman diagrams with a non-vanishing metric, i.e., a GFD.}\label{singleGFD}
\end{figure}

Let us start with an example which leads to a single GFD in the set compatible with both orderings.
Considering two color orderings of type I (see Table \ref{table36}),
\begin{align*}
 & \Sigma = ((23564),(13564),(12456),(12653),(12436),(12435)), \\
 &\tilde\Sigma = ((23456),(13645),(12645),(15326),(14326),(13245)),
\end{align*}
the only GFD compatible with both orderings is shown in Figure \ref{singleGFD}. Incidentally, this GFD is not compatible with any type 0 orderings.
The partial amplitude is (see \cite[equation~(4.8)]{Cachazo:2019ngv})
\begin{equation*}
 m_{6}^{(3)}(\Sigma,\tilde\Sigma) = \frac{1}{\s_{123}\s_{345}\s_{561}\s_{246}}.
\end{equation*}
Here and in the remainder of this work we omit the overall sign in \eqref{coBS3Abs}.

Consider another set of two color orderings, this time of type II and III respectively,
\begin{align*}
 \Sigma' ={}& ((23456),(13456),(12465),(12365),(12634),(12534)), \\ \tilde\Sigma' ={}& ((23654),(13564),(12546),(12635),(14326),(13425)).
\end{align*}
There are two GFDs that are compatible with both of these orderings,
\def\fFD #1,#2,#3,#4,#5 \fFD {
 \raisebox{-.61 cm}{
\tikz[scale=.17]{
\draw[]
(0,0)--(2,0)
(0,0)--(-.3,-1.3) node[below]{#1}
(0,0)--(-.3,1.3) node[above]{#2}
(2,0)--(2.3,1.3) node[above]{#3}
(2,0)--(2.3,-1.3) node[below]{#4}
(1,0)-- (1,1.3) node[above]{#5}
;
}}
}
\begin{align*}
&\left(
\fFD2, 3, 4, 5, 6 \fFD,\fFD 1, 3, 5, 6, 4 \fFD,\fFD 1, 2, 4, 6, 5 \fFD,\fFD 3, 6, 1, 5, 2 \fFD,\fFD 2, 6, 1, 4, 3 \fFD,\fFD 2, 5, 3, 4, 1 \fFD
\right)
,
\\
&\left(
\fFD 2, 3, 5, 6, 4 \fFD,\fFD 1, 3, 5, 6, 4 \fFD,\fFD 1, 2, 4, 6, 5 \fFD,\fFD 1, 2, 3, 6, 5 \fFD,\fFD 2, 6, 3, 4, 1 \fFD,\fFD 2, 5, 3, 4, 1 \fFD
\right),
\end{align*}
which gives rise to the partial amplitude,
\begin{equation*}
 m_{6}^{(3)}(\Sigma',\tilde\Sigma') = \frac{1}{\s_{126} \s_{234} \s_{356} \sfR_{34,12,56} }+\frac{1}{\s_{126} \s_{145} \s_{234} \s_{356}}.
\end{equation*}
See the definition of $\sfR$ in \eqref{defpoles}.

In Section \ref{newsec}, we classify all $(3,6)$ GFDs and introduce several operations that connect them.

We end with a $(3,7)$ example. As we have mentioned before, the two GCOs given in \eqref{37GCOquar} and \eqref{37GCOquar2} are both compatible with the generalized Feynman diagram in \eqref{boundaryofT0} which has both cubic and quartic vertices in its trees. In fact, this GFD is the only one that contributes to both GCOs at the same time, resulting in
\begin{equation*}
 m_{7}^{(3)}( \eqref{37GCOquar},\eqref{37GCOquar2}) = \frac{1}{\s_{126} \s_{145} \s_{234} \s_{257} \s_{356} \s_{467}}.
\end{equation*}

\section[Properties of k=3 color orderings and examples]{Properties of $\boldsymbol{k=3}$ color orderings and examples}\label{sec6}

In this section, we study some properties of $k=3$ color orderings as well as bootstrap methods for constructing them. We illustrate the techniques by applying them to $k=3$ with $n\leq 9$.

\subsection[Arrangements of lines vs. arrangements of pseudo-lines]{Arrangements of lines vs.\ arrangements of pseudo-lines}\label{colorrecursion}

While generalized color orderings are constructed out of arrangements of lines on $\mathbb{RP}^2$, this definition is not an effective way to construct them. Instead, we use a result from the theory of uniform oriented matroids which states that if lines are allowed to bend slightly, i.e., become pseudo-lines, then arrangements satisfy a recursive definition (for details see \cite[Chapter 6]{bjorner1999oriented}).

\begin{thm}\label{pseudoLines}
 An $n$-tuple of standard $($or $k=2)$ color orderings such that the $i^{\rm th}$ one is defined on the set $[n]\setminus \{i\}$ can be represented as an arrangement of pseudo-lines on $\mathbb{RP}^2$ if the following holds:
 \begin{itemize}\itemsep=0pt
 \item For $n=5$ the $5$-tuple is one of the $12$ descendants of the $(2,5)$ color orderings.
\item For $n>5$ and for any $j\in [n]$, removing the $j^{\rm th}$ color ordering from the $n$-tuple and then deleting the label $j$ from the $n-1$ remaining ones must produce a $(n-1)$-tuple of color orderings that can be represented by an arrangement of $n-1$ pseudo-lines.
 \end{itemize}
\end{thm}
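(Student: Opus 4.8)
The plan is to argue by induction on $n$, the first bullet being the base case: a $5$-tuple that descends from a $(2,5)$ color ordering $\sigma$ is induced by the line arrangement dual to five points in convex position, so nothing has to be proved there. For the inductive step I fix $n>5$ and an $n$-tuple $\Sigma=\big(\sigma^{(1)},\dots,\sigma^{(n)}\big)$ satisfying the recursive hypothesis, so that each truncation $\pi_j(\Sigma)$ is realized by an arrangement of $n-1$ pseudo-lines on $\mathbb{RP}^2$.

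First I would pass to the language of oriented matroids: an arrangement of $n$ pseudo-lines in general position on $\mathbb{RP}^2$ is the same datum as a uniform rank-$3$ oriented matroid on $[n]$, equivalently a rank-$3$ chirotope $\chi\colon\binom{[n]}{3}\to\{+,-\}$ defined up to an overall sign; this is the topological representation theorem in rank $3$, see \cite[Chapters~5 and~6]{bjorner1999oriented}. Under this dictionary $\sigma^{(i)}$ is precisely the circular order, up to reflection, of $[n]\setminus\{i\}$ around the element $i$, and for a uniform rank-$3$ oriented matroid the collection of these circular orders determines $\chi$ up to a single global sign. So the step I would carry out is: (i) reassemble the $\sigma^{(i)}$ into a candidate sign function $\chi$; (ii) verify that $\chi$ satisfies the oriented matroid axioms; (iii) realize $\chi$ and check that the realization induces $\Sigma$.

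For (i) I would use that each $\sigma^{(i)}$ determines the restricted family $\{\chi(ijk):j,k\in[n]\setminus\{i\}\}$ up to a local sign $\varepsilon_i$, and that the chirotopes of the realized truncations $\pi_j(\Sigma)$ pin down all the ratios $\varepsilon_i/\varepsilon_{i'}$ compatibly, leaving only one overall sign; this gluing of local signs is where the realizability of every truncation is actually used. For (ii) the point is that the rank-$3$ oriented matroid axioms are \emph{local}: it is enough that the restriction of $\chi$ to each $5$-element subset $S\subseteq[n]$ be a rank-$3$ chirotope (the content of the $3$-term Grassmann--Plücker relations, cf.\ \cite[Chapter~3]{bjorner1999oriented}). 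Since $n>5$, any such $S$ omits some label $j$, and there $\chi|_S$ agrees, up to the fixed sign, with the chirotope of the realized arrangement $\pi_j(\Sigma)$ restricted to $S$; hence every $5$-element restriction of $\chi$ is a genuine chirotope, so $\chi$ is one. Then (iii) is immediate: realize $\chi$ by an arrangement of $n$ pseudo-lines on $\mathbb{RP}^2$; by construction the circular order around each line is the prescribed $\sigma^{(i)}$, so this arrangement represents $\Sigma$, which closes the induction.

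The hard part is step (i) together with choosing the right form of the axioms in (ii): one has to be sure the circular-order data genuinely reassembles into a single candidate chirotope, which rests on the uniform-rank-$3$ fact that circular orders around the elements determine the chirotope up to sign, plus a consistency argument for the local signs $\varepsilon_i$ that feeds on the truncations $\pi_j(\Sigma)$. Once this is set up, the proof is in essence ``the rank-$3$ oriented matroid axioms are local and $n>5$''. It is worth emphasising why one must work with \emph{pseudo}-lines rather than lines: realizability of a rank-$3$ oriented matroid by straight lines over $\mathbb{R}$ is not a local property---the non-Pappus configuration already obstructs it at $n=9$---so no such recursion can hold for genuine line arrangements, which is exactly the failure the paper observes at that value of $n$.
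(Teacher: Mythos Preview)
The paper does not actually prove this theorem: it is stated as a known result from the theory of oriented matroids, with the reader referred to \cite[Chapter~6]{bjorner1999oriented} for details. So there is no ``paper's own proof'' to compare against; the relevant benchmark is the argument in that reference.

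Your outline is essentially the standard one and is correct in spirit. The translation to uniform rank-$3$ chirotopes via the topological representation theorem is exactly right, and your step~(ii) --- that the $3$-term Grassmann--Pl\"ucker relations involve only five elements, so for $n>5$ every such relation is witnessed inside some realizable truncation $\pi_j(\Sigma)$ --- is the real engine of the proof. You are also right to flag step~(i), the coherent gluing of the local signs $\varepsilon_i$, as the delicate point: the circular order $\sigma^{(i)}$ only determines the block $\{\chi(ijk)\}_{j,k}$ up to a sign, and one must argue that the chirotopes of the truncations $\pi_j(\Sigma)$ (each determined up to a single global sign) can be chosen consistently on their pairwise overlaps. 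This works because any two truncations agree (up to sign) on their common restriction, which is again a uniform rank-$3$ oriented matroid on at least $n-2\ge 4$ elements and is determined by its local sequences; but making this precise is exactly where the care goes, and a full write-up would need to spell out why no sign obstruction (a ``cocycle'') can arise. Your closing remark about why the recursion must fail for straight lines at $n=9$ is apt and matches the paper's discussion.
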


Note that the second condition is simply the statement that the projection $\pi_j$ defined in \eqref{colorprojection} produces an arrangement of pseudo-lines for all $j\in [n]$.

Let us refer to these as $(3,n)$ pseudo-color orderings. Clearly, the set of pseudo-color orderings contains all $(3,n)$ color orderings. As mentioned in
Section \ref{sectionordering}, the first time one finds a $(3,n)$ pseudo-color ordering which is not a $(3,n)$ color ordering is for $n=9$ \cite{bjorner1999oriented}. This means that we can use the recursive construction of Theorem~\ref{pseudoLines} to search for $(3,n)$-color orderings by constructing all pseudo-color orderings and then discarding the ones that are not valid.

Consider the $n=5$ case. Given that $(3,5)$ generalized biadjoint amplitudes are mapped to~$(2,5)$ amplitudes by simply replacing $\s_{abc}$ with $s_{de}$ with $\{ a,b,c,d,e\} = [5]$, it is not surprising that there exists a bijection between $(3,5)$ and $(2,5)$ color orderings. More explicitly, one can show that each arrangement of five lines in $\mathbb{RP}^2$ is in fact the descendant of a configuration of five points on $\mathbb{RP}^1$. There are $(n-1)!/2 =12$ $(2,5)$ color orderings for $n=5$. The example given in the introduction \eqref{introEx}
\[
\Sigma := \big(\sigma^{(1)},\sigma^{(2)},\sigma^{(3)},\sigma^{(4)},\sigma^{(5)}\big)=((2435),(1435),(1254),(1253),(1234))
\]
is the descendant of $\sigma = (12534)$ since
\begin{align*}
 & \pi_1(\sigma)=(2534)=(2435), \qquad \pi_2(\sigma)=(1534)=(1435),\\
& \pi_3(\sigma)=(1254),\qquad \pi_4(\sigma )=(1253), \qquad \pi_5(\sigma) =(1234).
\end{align*}

The first non-trivial case is $(k,n)=(3,6)$. Here a simple algorithm that starts with an ansatz $\big(\sigma^{(1)},\sigma^{(2)},\dots ,\sigma^{(6)}\big)$, where each $\sigma^{(i)}$ is one of the twelve possible $(2,5)$ orderings with labels in~$[6]\setminus \{ i \}$, and then uses the recursive definition from Theorem~\ref{pseudoLines} to select the $(3,6)$ orderings is fast enough to obtain all $(3,6)$ orderings. We have implemented a slightly more efficient version of this algorithm in \textsc{Mathematica} and found exactly $372$ $(3,6)$ color orderings.

Moreover, the $372$ $(3,6)$ color orderings split into four types modulo relabeling and this is how we obtained the results in Table~\ref{table36}. There are $60$ of type 0, $180$ of type I, $120$ of type II and $12$ of type III.
This means that each type has a symmetry group of order $12$, $4$, $6$ and $60$, respectively.

One advantage of the recursive definition from Theorem~\ref{pseudoLines} is that it is purely combinatorial. Once we get all
(pseudo-)color orderings, they can be turned into figures showing the arrangements of (pseudo-)lines, such as the four figures in Figure \ref{36figure}. In that figure, we manifestly see that all line arrangements reduce to the $(3,5)$ arrangement in Figure~\ref{fiveNaive} when the sixth (black) line is removed.

Note that type 0 has exactly $(6-1)!/2=60$ orderings. This is because all type~0 color orderings are descendants of $(2,6)$-color orderings. This type obviously generalizes to arbitrary~$(3,n)$ where their type~0 has $(n-1)!/2$ generalized color orderings.

In a companion paper \cite{Cachazo:2023ltw}, we explain the connection between these $(3,6)$ color orderings and the $372$ chambers which the space of six points (or six lines) on the real projective plane decomposes into.

\begin{table}[!htb]\renewcommand{\arraystretch}{1.2}
 \centering
\begin{tabular}{|c|c|c|c|c|}\hline
Type & Color ordering representative & \# \\ \hline
0 &
((234567),(134567),(124567),(123567),(123467),(123457),(123456))
& 360
\\
I &
((234567),(134567),(124567),(123567),(123476),(123475),(123465))
& 2520
\\
II &
((234567),(134567),(124567),(123576),(123476),(123745),(123645))
&
5040
\\
III &
((234567),(134567),(124567),(123756),(123746),(123745),(123654))
&
2520
\\
IV &
((234567),(134567),(124576),(123756),(123746),(127345),(126354))
& 2520
\\
V &
((234567),(134576),(124576),(123756),(123746),(154327),(145326))
&
 2520
\\
VI &
((234756),(134576),(124567),(123567),(164327),(127345),(143625))
&
 2520
\\
VII &
((234756),(134756),(124567),(123567),(127346),(127345),(125634))
&
840
\\
VIII &
((234567),(134576),(124576),(123765),(123764),(145327),(145326))
&
1680
\\
IX &
((234567),(134576),(124756),(123765),(127364),(145327),(143526))
& 1680
\\
X &
((234576),(134576),(124756),(123675),(127364),(123574),(125364))
& 5040
\\ \hline
\end{tabular}
 \caption{(3,7) color orderings.
 The last column denotes the number of distinct permutations.}\label{table37}
\end{table}

With the $(3,6)$ color orderings at hand, one can further produce all
$(k,n)=(3,7)$ color orderings using again Theorem~\ref{pseudoLines}. Here however, a brute force search is impractical. Instead, one can start with a given $(3,6)$ color ordering (one each type), and then list all possible ways of adding label $7$ to construct $n=7$ color ordering candidates. Using Theorem \ref{pseudoLines} the valid~$(3,7)$ color orderings are then selected. There are $27\,240$ $(3,7)$ color orderings in total, which split into~$11$ types modulo relabeling. We tally the numbers of their distinct permutations here,
$\{360,1\}$, $\{ 840,1\}$, $\{1680,2\}$, $\{ 2520, 5\}$, $\{5040, 2\}$, e.g., there are five types with $2520$ elements, i.e., with a symmetry group of order $2$. Note that the type 0 GCO has a symmetry group of order $14$, the dihedral group. A representative of each type is given in Table~\ref{table37}. The two GCOs in \eqref{37GCOquar} and \eqref{37GCOquar2} are both of type X.

Further improvements in the algorithm produces $4 \,445\, 640$ $(3,8)$ color orderings, which fall into $135$ types. Their representatives are presented in Appendix~\ref{38colorfactors} whose numbers of distinct permutations are tallied as
$\{2520, 1\}$, $\{2880, 1\}$, $\{5040, 1\}$, $\{10080, 4\}$, $\{20160, 38\}$, $\{40320,
 90\}$.

We also find $4382$ types of $(3,9)$ pseudo-color orderings but one fails to be a color-ordering and so there are $4381$ types, whose representatives are provided in an ancillary file. We also tally their numbers of distinct permutations here,
$\{20160, 1\}$, $\{60480, 6\}$, $\{120960, 24\}$, $\{181440, 158\}$, $\{362880, 4193\}$,
 which in total gives $ 1 \,553\,388\, 480$ $(3,9)$ color ordering.\footnote{Note that up to $(k,n)=(3,9)$, the number of color orderings agrees with the number of uniform matroids over $\mathbb{F}_q$ when continued to $q=-1$, as defined in~\cite{skorobogatov1996number}. This is in contrast to the continuation to $q=1$, which leads to the Euler characteristic of $X(3,n)$ (see \cite[Appendix~A]{Agostini:2021rze}). A connection to the number of realizable oriented uniform matroids is explored in \cite{Cachazo:2023ltw}.}

Finally, let us write down a representative of the type of the $(3,9)$ pseudo color orderings which cannot be represented as arrangements of lines and hence are not $(3,9)$ color orderings (see in \cite[Figure 13]{celaya2020oriented}),
\begin{align*}
&((24738695),(18536749),(14857296), (17385629),(18273469),(15472938),
\\
&\qquad(14853629),(14372596),(15247638)
).
\end{align*}
This type has $120\, 960$ distinct permutations.

\subsection{Connecting color orderings using triangle flips}

Realizing $k=3$ color orderings as arrangements of lines on the projective plane gives rise to figures in which various polygons appear as regions bounded by the lines. If lines $L_i$, $L_j$, $L_k$ bound a triangle then one can deform the arrangement until the triangle shrinks to zero size and can then be opened up in a different configuration (see Figure \ref{triangleFlip}). When these flips are possible, they connect color orderings.
See Figure \ref{36figure} as an explicit example where the two
line arrangements on the top are related by a flip via the triangle bounded by $L_1$, $L_2$ and $L_6$.

\begin{figure}[h!]
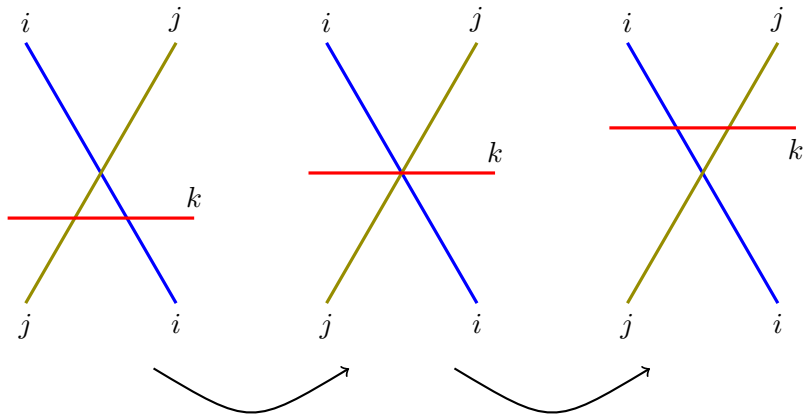

	\centering
	\hspace{0.1in}
 \tikz[scale=2]{

\begin{scope}[xshift=.35cm,yshift=-1.3cm, scale=1.3]
\draw [ thick, ->] (0,0) .. controls (0.5,-0.3) .. (1,0);
 \end{scope}

 \begin{scope}[xshift=2.35cm,yshift=-1.3cm, scale=1.3]
\draw [ thick, ->] (0,0) .. controls (0.5,-0.3) .. (1,0);
 \end{scope}

\begin{scope}[xshift=0cm,yshift=0.cm, scale=1]
 \draw[blue,very thick] (120:1) node [above]{\color{black}$i$} -- (-60:1) node [below]{\color{black}$i$} ;

 \draw[olive,very thick] (60:1) node [above]{\color{black}$j$} -- (-120:1) node [below]{\color{black}$j$} ;

 \draw[red,very thick] (-.62,-.3)
 -- (.62,-.3) node [above]{\color{black}$k$} ;
 \end{scope}

 \begin{scope}[xshift=2cm,yshift=0.cm, scale=1]
 \draw[blue,very thick] (120:1) node [above]{\color{black}$i$} -- (-60:1) node [below]{\color{black}$i$} ;

 \draw[olive,very thick] (60:1) node [above]{\color{black}$j$} -- (-120:1) node [below]{\color{black}$j$} ;

 \draw[red,very thick] (-.62,-.0)
 -- (.62,-0) node [above]{\color{black}$k$} ;
 \end{scope}

 \begin{scope}[xshift=4cm,yshift=0.cm, scale=1]
 \draw[blue,very thick] (120:1) node [above]{\color{black}$i$} -- (-60:1) node [below]{\color{black}$i$} ;

 \draw[olive,very thick] (60:1) node [above]{\color{black}$j$} -- (-120:1) node [below]{\color{black}$j$} ;

 \draw[red,very thick] (-.62,.3)
 -- (.62,.3) node [below]{\color{black}$k$} ;
 \end{scope}
}	
	\caption{Left: Lines $L_i$, $L_j$, $L_k$ bound a triangle with line $L_k$ at the bottom. Center: Line $L_k$ moves up until the triangle becomes a point where all three lines intercept. Right: The three lines bound a~triangle again but with $L_k$ bounding the top.}
	\label{triangleFlip}
\end{figure}

This property can be used to improve algorithms for constructing color ordering. With that in mind, it is convenient to be able to recognize triangles in a color ordering without having to find the arrangement of lines since as $n$ increases the arrangements can become quite complicated.\looseness=1

\begin{claim}
The arrangement of lines associated to a color ordering $\Sigma =\big( \sigma^{(1)},\sigma^{(2)},\dots ,\sigma^{(n)} \big)$ has a~triangle bounded by lines $L_i$, $L_j$, $L_k$ if and only if labels in the sets $\{ i,j\}$, $\{ j,k\} $, and $\{ k,i\}$
are consecutive in $\sigma^{(k)}$, $\sigma^{(i)}$, and $\sigma^{(j)}$, respectively.
\end{claim}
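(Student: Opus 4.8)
The plan is to translate the geometric condition ``$L_i$, $L_j$, $L_k$ bound a triangle in the arrangement'' into the purely combinatorial data recorded by $\Sigma$, using the fact that $\sigma^{(i)}$ is exactly the cyclic order in which the other lines cross $L_i$. First I would fix the key observation: on the line $L_k$, the two points $L_i\cap L_k$ and $L_j\cap L_k$ are consecutive in $\sigma^{(k)}$ if and only if no third line $L_\ell$ ($\ell\neq i,j,k$) separates them along $L_k$, i.e.\ the open segment of $L_k$ between those two intersection points is not crossed by any other line of the arrangement. So the right-hand condition of the claim says precisely that each of the three sides of the ``candidate triangle'' $\triangle(L_i,L_j,L_k)$ — the segment of $L_k$ between $L_i$ and $L_j$, the segment of $L_i$ between $L_j$ and $L_k$, and the segment of $L_j$ between $L_k$ and $L_i$ — is free of all other lines.

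For the forward direction, suppose the three lines genuinely bound a triangular region $\Delta$ of the arrangement. Its three edges lie on $L_i$, $L_j$, $L_k$, with vertices the three pairwise intersection points; since $\Delta$ is an actual region of the arrangement (a cell of the complement of all $n$ lines), none of its edges is subdivided by a further line, which is exactly the three consecutivity statements. For the converse, assume the three consecutivity conditions. I would consider the closed triangle $\overline{\Delta}$ in $\mathbb{RP}^2$ cut out by $L_i$, $L_j$, $L_k$ whose sides are those three specified segments (one has to check this bounded triangle is well-defined: three lines in general position in $\mathbb{RP}^2$ divide it into four triangles, and the three chosen segments are the sides of exactly one of them — the one ``between'' the three intersection points in the sense dictated by which arcs we picked out). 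No other line $L_\ell$ meets the interior of $\overline{\Delta}$: if it did, $L_\ell$ would have to enter and exit through the boundary, hence cross at least two of the three sides, contradicting that those sides are free of other lines (a line crossing the interior of a triangle must cross exactly two of its three sides). Therefore $\overline{\Delta}$ is a union of cells of the arrangement all of whose internal walls are absent — so it is a single cell, a genuine triangular region bounded by $L_i$, $L_j$, $L_k$.

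I expect the main obstacle to be the careful projective bookkeeping in the converse: in $\mathbb{RP}^2$ three lines in general position bound \emph{four} triangles (not one, as in the affine/Euclidean picture), and one must argue that the three consecutivity conditions single out a \emph{consistent} choice of arcs, i.e.\ that the three chosen open segments are pairwise ``linked'' correctly so that their closure is one of those four triangles rather than, say, a M\"obius band or an inconsistent choice. This is where the chart description used throughout the paper — a disk with antipodal boundary points identified — is convenient: one works in an affine chart containing all three intersection points, reducing to the elementary Euclidean statement that three segments, pairwise sharing endpoints and each free of other lines, bound an empty triangle, and then one notes the argument is independent of the chart chosen. The remaining steps (a line through the interior of a triangle crosses exactly two sides; an interior-wall-free union of cells is a single cell) are standard facts about line arrangements and oriented matroids (cf.\ \cite[Chapter 6]{bjorner1999oriented}) and I would invoke them without reproving them.
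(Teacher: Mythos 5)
The paper leaves this claim as an exercise, so there is no official proof to compare against; judged on its own terms, your forward direction is fine, but your converse has a genuine gap, and it sits exactly where you yourself locate the ``main obstacle''. The three consecutivity hypotheses hand you, for each of $L_i,L_j,L_k$, at least one arc between the two relevant intersection points that is free of the other $n-3$ lines, but they do not tell you that these three empty arcs form the edge set of one of the four triangles of the sub-arrangement $\{L_i,L_j,L_k\}$. Label the two arcs of each line $b$ and $u$ (bounded/unbounded in a fixed affine chart): the four triangles correspond exactly to the triples with an even number of $u$'s, while an odd triple such as $(u_1,b_2,b_3)$ closes up into a curve that is noncontractible in $\mathbb{RP}^2$ (it differs from $\partial T_0$ by the class of the full line $L_1$) and bounds a M\"obius band, not a disk. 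Your proposed fix --- pass to an affine chart and invoke ``three segments pairwise sharing endpoints and free of other lines bound an empty triangle'' --- is circular: a chart in which the three chosen arcs are the sides of a bounded triangle exists precisely when they already form one of the four even triples, which is the statement to be proved.

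The missing step can be closed by a short counting argument. Each additional line $L_\ell$ ($\ell\notin\{i,j,k\}$) is cut by $L_i,L_j,L_k$ into three arcs lying in three pairwise distinct triangles of the sub-arrangement, so it misses exactly one triangle $T_{(\ell)}$ and crosses exactly the three arcs complementary to the edges of $T_{(\ell)}$ --- in particular exactly one of the two arcs on each of $L_i,L_j,L_k$. The consecutivity hypothesis says that the union $S$ of all crossed arcs contains at most one arc per line, hence $|S|\le 3$; since $S$ already contains the size-$3$ transversal crossed by any single $L_\ell$, all these transversals coincide. Thus every additional line misses the same triangle $T_*$, whose three edges are uncrossed and therefore bound an empty cell of the full arrangement. (If there are no additional lines both sides of the equivalence hold trivially.) With this inserted, your observation that a line through the interior of a triangle crosses exactly two of its sides --- which is correct --- is no longer needed.
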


The proof is left as an exercise for the reader.

Not only is it easy to recognize a triangle but it is also simple to perform a flip. However, the result might not be a color ordering but only a pseudo color ordering. The procedure is the following.

Assume that $\Sigma$ has a triangle bounded by lines $L_i$, $L_j$, $L_k$, then a flip sends
\[
\Sigma \longrightarrow \big( \tilde\sigma^{(1)},\tilde\sigma^{(2)},\dots ,\tilde\sigma^{(n)} \big)
\]
with $\tilde\sigma^{(l)}=\sigma^{(l)}$ if $l\notin \{i,j,k\}$, $\tilde\sigma^{(i)} = \sigma^{(i)}|_{j\leftrightarrow k}$, $\tilde\sigma^{(j)} = \sigma^{(j)}|_{i\leftrightarrow k}$, and $\tilde\sigma^{(k)} = \sigma^{(k)}|_{j\leftrightarrow i}$.
One then has to check whether $\big( \tilde\sigma^{(1)},\tilde\sigma^{(2)},\dots ,\tilde\sigma^{(n)} \big)$ can be represented as an arrangement of lines so that it can become a new color ordering $\tilde\Sigma$.

Using triangle flips, one can bootstrap all $(k=3,n)$ color orderings starting with a single one. A convenient choice to start with is a descendant of a $(k=2,n)$ color ordering. We have been able to reproduce all color orderings for $(3,6)$, $(3,7)$, $(3,8)$ and $(3,9)$ derived in Section \ref{colorrecursion} using this technique.

\section{Generalized decoupling identity}\label{sec7}

Color-decomposing amplitudes in the biadjoint scalar theory and in Yang--Mills have the advantage of reducing the complexity of the computation by restricting the set of Feynman diagrams to those that contribute to a given partial amplitude. The price to pay for the simplification is that it obscures some properties of the full amplitude which are simple to verify before performing the decomposition. One such property is the following. When the color of one of the particles, say $T^{a_n}$, is in the commutant of the rest, then the full amplitude trivially vanishes since each color-dressed Feynman diagram contains at least one $su(N)$ structure constant of the form $f_{ijn}$, which vanishes since $[T^{a_n},T^{a_i}]=0$ by assumption. On the other hand, partial amplitudes $A_n(\sigma )$ do not carry individual color information and traces of products of generators do not generically vanish. Instead, some traces that are generically distinct now become identified. The flip side of this is that the vanishing of the full amplitude, ${\cal A}_n$, must then imply identities among partial amplitudes that always hold regardless of the color structure.

The simplest example is when $T^{a_n}=\mathbb{I}$, i.e., the $n^{\rm th}$ particle is a ``photon''. The set of identities so derived is called ${\rm U}(1)$ decoupling identities. Here we treat any $T^{a_n}$ such that $[T^{a_n}, T^{a_i}]=0$ for all $i\in \{ 1,2,\dots ,n-1\}$ as a photon with respect to the rest.

Substituting this into \eqref{coYM2} one has to collect like-terms by noting that, e.g.,
\be\label{k2Prop}
c(1,2,\dots ,i,n,i+1,\dots ,{n-1}) = \pm c(1,2,\dots ,i,i+1,\dots,{n-1},n).
\ee
The sign depends on the parity of $n$ as well as its location in the ordering on the left. In other words, up to a sign, the position of label $n$ in the color factor becomes inconsequential.

Imposing that ${\cal A}_n(\{k_i,\epsilon_i,a_i\})=0$ implies identities such as (see, e.g., \cite{Mangano:1990by})
\be\label{stdU1}
\sum_{i=1}^{n-1} A_n(1,2,\dots ,i,n,i+1,\dots, n-1) = 0.
\ee
Similarly, biadjoint amplitudes satisfy ${\rm U}(1)$-decoupling identities as well,
\[
\sum_{i=1}^{n-1} m_n^{(2)}((1,2,\dots ,i,n,i+1,\dots, n-1),\beta) = 0, \qquad \forall \beta\in {\rm CO}_{2,n},
\]
where ${\rm CO}_{2,n}$ is the set of all $(2,n)$ color orderings.

Note that $m_n^{(2)}(\alpha,\beta)=m_n^{(2)}(\beta,\alpha)$ and therefore we do not get new identities by using the second ordering.

In this section we initiate the study of decoupling identities in generalized biadjoint amplitudes. Here we only scratch the surface since a deeper discussion requires the CEGM formulation and it is presented in the companion paper \cite{Cachazo:2023ltw}.

We do not yet have a satisfactory realization of $k=3$ color factors in terms of Lie algebraic objects, but we can use the $k=2$ case \eqref{k2Prop} as a guide to uncover the decoupling properties of~$k=3$ color factors.

It is clear that decoupling a particle in a $k=3$ generalized amplitude requires making the position of the particle in the $k=2$ orderings in the collection irrelevant. However, this leaves the prescription for what happens to the $k=2$ ordering in the collection which does not contain the particle label undefined. The two natural choices are to either keep the order intact or make it irrelevant. We will study $(3,5)$ and $(3,6)$ to discover that the latter is the correct prescription.

Let us start with the $(3,5)$ case and the prescription where we keep the order intact. Even though it is not the correct prescription for all $n$, it illustrates an important point. Recall that there are exactly $12$ distinct $(3,5)$ color orderings. A decoupling identity is obtained by selecting one, say,
\[
\Sigma_0 = ((2345),(1345),(1245),(1235),(1234)),
\]
and identifying with it any other that differs from $\Sigma_0$ by the position of label $5$ in the first four ordering {\it and} agrees on the fifth. In this case, we find exactly three:
\begin{align*}
& \Sigma_1 = ((2 3 5 4), (1 3 5 4), (1 2 5 4), (1 2 3 5), (1
 2 3 4)) ,\\
& \Sigma_2 = ((2 4 3 5), (1 4 3 5), (1 2 5 4), (1 2 5
 3), (1 2 3 4)) , \\
& \Sigma_3 = ((2 34 5), (1 4 3 5), (1 4 2 5), (1
 3 2 5), (1 2 3 4)) .
\end{align*}
It is easy to see that after making the position of $5$ irrelevant, all three orderings indeed become equal to $\Sigma_0$. We leave it as an exercise for the reader to check that after making the position of~$5$ irrelevant, all $12$ orderings are of the form
\[
((2345),(1345),(1245),(1235),(\bullet )),
\]
and so it is only the ordering in $(\bullet)$ that groups them into three groups of four. Of course, $(\bullet)$ equals one of the three possible $(2,4)$ orderings.

As mentioned in Section \ref{colorrecursion}, all $12$ $k=3$ color orderings are descendants of $(2,5)$ color orderings. This means that we can find a bijection between the two sets. More explicitly,
\[
\Sigma_0 \leftrightarrow (12345), \qquad \Sigma_1 \leftrightarrow (12354), \qquad \Sigma_2 \leftrightarrow (12534), \qquad \Sigma_3 \leftrightarrow (15234)=(14325).
\]
Note that the four $(2,5)$ color orderings in the expression above are exactly the ones appearing in the standard ${\rm U}(1)$ decoupling identity \eqref{stdU1}.

It is clear that had we chosen the second prescription for the decoupling of particle $5$, i.e., in addition to making the position of $5$ irrelevant, also making the choice of $(\bullet)$ irrelevant, we would have found a single $12$-term identity. However, this identity is not irreducible since it is a linear combination of the standard four-term identities.

As it turns out the definition that led to three four-term identities for $(3,5)$ does not generalize to higher points while the one that gives $12$-term identities does.

Here we make a proposal for what the correct $k=3$ color decoupling is and then we apply it to $(3,6)$ and $(3,7)$ amplitudes and discover that indeed it leads to identities among partial amplitudes.

\begin{defn}\label{k3decoupling}
Given a $(3,n)$ color-dressed amplitude, ${\cal M}_n^{(3)}$, the operation that identifies any two $(3,n)$-color orderings, $\Sigma$ and $\tilde\Sigma$, if their $i^{\rm th}$ projections are the same, i.e., $\pi_i (\Sigma ) = \pi_i \big(\tilde\Sigma\big)$, is called decoupling the $i^{\rm th}$-particle.
\end{defn}

Recall that \smash{${\cal M}_n^{(3)}$} is defined in \eqref{coBS3Abs} and the projection operator $\pi_i$ in \eqref{colorprojection}.

This definition has a beautiful combinatorial interpretation given that a $(3,n)$ color ordering is represented by an arrangement of $n$ lines in $\mathbb{RP}^2$. Definition \ref{k3decoupling} implies that line $L_i$ can be freely moved on the plane so that all color orderings generated in the process are identified. Let us refer to the set of these color orderings as a {\it decoupling set} where $L_i$ moves freely.
This is the analog to the $k=2$ interpretation of the ${\rm U}(1)$ decoupling identities in which particle $i$ is free to move on the boundary of the circle defining the ordering.

The decoupling identities have a close relation to the recursive properties of arrangements of pseudo-lines described in
Theorem \ref{pseudoLines} because if we
remove line $L_i$ from the plane, the resulting arrangement of lines still corresponds to a color ordering of lower points according to the theorem. Let us refer to the resulting color ordering as an {\it induced} lower-point GCO, which is useful when classifying the decoupling identities.

\subsection[Decoupling identities for (3,6) color factors]{Decoupling Identities for $\boldsymbol{(3,6)}$ color factors}

Applying Definition \ref{k3decoupling} to $(3,6)$ and decoupling, e.g., particle $6$, we find that the $372$ color orderings are partitioned into $12$ decoupling sets of $31$ orderings each. For example, consider the $k=3$ color ordering that contains the ordering
\[
\Sigma_0 = ((23456),(13456),(12456),(12356),(12346),(12345)).
\]
Note that $\Sigma_0$ belongs to type 0 and it is therefore the descendant of a $k=2$ color factor, in this case, $\sigma= (123456)$. The set of $31$ $(3,6)$ color orderings that contains $\Sigma_0$ is composed of~$5$ type 0,~$15$ type I, $10$ type II and $1$ type III color orderings. See their explicit expressions in Appendix~\ref{31termsec}. It is interesting to note that the $5$ type 0 orderings are descendants of exactly the five $k=2$ orderings that participate in the $k=2$ decoupling identity. This hints the fact that type 0 color orderings are somehow disconnected and that the other types are needed to fill out the ``holes''. In the next section we give more evidence that this picture is correct while in \cite{Cachazo:2023ltw} we show the geometric meaning in terms of the structure of $X(3,6)$ and its fibration over $X(3,5)$ (see \cite[Figure 8.7]{yoshida2013hypergeometric} and \cite[Figure 1]{Cachazo:2019ngv} for two dual descriptions).

Finally, our prescription so far only tells us which color factors to group together but it does not fix the relative signs. From the $(3,5)$ case, we expect that the $31$-term identities are not irreducible. In fact, we find that the set of $372$ equations
\be
\label{decop31terms}
\sum_{\Sigma\in {\cal D}_6} (-1)^{k_\Sigma} m^{(3)}_6\big(\Sigma, \tilde\Sigma \big) = 0, \qquad \forall\, \tilde{\Sigma} \in {\rm CO}_{3,6},
\ee
where ${\cal D}_6$ is the decoupling set of $31$ orderings given in \eqref{31termseceq} has $30$ solutions for $k_\Sigma\in \{0,1\}$ up to an overall rescaling. Every one of the $1005$ $(3,6)$ GFDs either does not appear at all in a decoupling identity \eqref{decop31terms} or appears four times with two positive coefficients of $(-1)^{k_{\Sigma}}$ and two negative ones. In this sense, we say that
these identities hold at the level of GFDs, i.e., the GFDs ${\cal T}$ which contribute to the rational function ${\cal R}({\cal T})$ in the amplitudes cancel pairwise.

In \cite{Cachazo:2023ltw}, we study decoupling identities more deeply and find a geometric interpretation which leads to the irreducible identities that arise from \eqref{decop31terms}.

\subsection[Decoupling identities for (3,7) color factors]{Decoupling identities for $\boldsymbol{(3,7)}$ color factors}

The partition of $(3,6)$ color orderings into $12$ decoupling sets of $31$ orderings each makes it tempting to think that the $27\,240$ $(3,7)$ orderings could also partition evenly. However, the structure we find is much more interesting, revealing that the $27\,240$ GCOs are partitioned inhomogeneously by a decoupling.

One can classify these identities by studying their induced GCOs when decoupling, e.g., particle $7$. There is no doubt that the $27\, 240$ color orderings produce $372$ induced $(3,6)$ color orderings, which themselves are classified into four types as shown in Table~\ref{table36}. Hence, there are four types of decoupling sets for $(3,7)$ as well. As shown in Table~\ref{decoupling37}, it turns out that~$74$~$(3,7)$ color orderings including $6$ of type 0, $18$ of type II, etc., reduce to a $(3,6)$ color ordering of type~0, while $72$ $(3,7)$ color orderings reduce to a $(3,6)$ color ordering of type I, etc. Their explicit expressions are presented in the ancillary file.
In total, the $27\, 240$ color orderings partition into~$372$ decoupling sets including $60$ type 0 decoupling sets of $74$ elements, $180$ type I decoupling sets of $72$ elements, $120$ type II decoupling sets of $74$ elements, and $12$ type III decoupling sets of $80$ elements. We put the representatives of four types of decoupling sets in an ancillary file.\looseness=1

 \begin{table}[!htb]\renewcommand{\arraystretch}{1.2}
 \centering
\noindent
\begin{tabular}{|@{\
\hspace*{13mm}}l||*{11}{c|}|*{1}{c|}}\hline
\multicolumn{1}{|@{}l||}{\backslashbox[0pt][l]{decoupling\!\!\!\!\!\!
\\
~~~~~type}{
\!\!\!\! ordering
\\
type~~}}
&\makebox[1em]{0}&\makebox[1em]{I}&\makebox[1em]{II}
&\makebox[1em]{III}&\makebox[1em]{IV}
&\makebox[1em]{V}
&\makebox[1em]{VI}
&\makebox[1em]{VII}
&\makebox[1em]{VIII}
&\makebox[1em]{IX}
&\makebox[1em]{X}
&\makebox[2em]{Total}
\\\hline\hline
0 & 6& 18& 24& 6& 6& 6& 6& 2& & & & 74 \\\hline
I & & 8& 8& 12& 4& 8& 4& & 8& 4& 16 & 72 \\\hline
II & & & 18& & 12& 6& 6& 6& 2& 6& 18 & 74 \\\hline
III & & & & & & & 60& & & 20& & 80 \\\hline
\end{tabular}
\caption{Four types of decoupling of $(3,7)$. The numbers represent how many $(3,7)$ color orderings of a particular type take part in a certain decoupling. }\label{decoupling37}
\end{table}

As in the $(3,6)$ case, the six color orderings of type 0 which are the descendants of six $(k=2,n=7)$ color orderings that participate in the $k=2$ decoupling identity appear together in a decoupling identity. This time, 68 other color orderings of other types are needed to make up a $(3,7)$ decoupling of type 0.

From Table~\ref{decoupling37}, we also see that a type 0 ordering can only take part in the decoupling of type 0 while a type I ordering can participate in decouplings of both type 0 and I by decoupling different labels. It is worth mentioning that orderings of type VI appear in any type of decoupling while the type III decoupling only contains color orderings of types VI and IX.

A similar analysis reveals 11 types of decoupling identities for $(3,8)$. We do not present them here because, as in previous cases, all of the identities in terms of partial amplitudes they lead to
 are reducible ones and we postpone a more complete discussion to \cite{Cachazo:2023ltw}.

\section[Bootstrapping GFDs via flips compatible with color orderings]{Bootstrapping GFDs via flips compatible \\ with color orderings}\label{newsec}

Listing all generalized Feynman diagrams for any $(k,n)$ is a daunting problem. Even for $k=3$ the number of diagrams grows very fast with $n$. In \cite{Borges:2019csl} and \cite{Cachazo:2019xjx}, a ``bootstrap'' method for producing new GFDs out of old ones using the notion of global planarity was introduced. In other words, their construction was based only on type 0 color orderings, i.e., those that descend from $k=2$ ones. As we will see, there are GFDs that are not compatible with any type 0 color ordering and therefore cannot be generated in that way.

In this section, we extend the bootstrap methods starting from the assumption that we have access to all color orderings. This will allow us to generate all GFDs. We show this for $(3,n)$ with $n\leq 8$.

The approach in this section is the analog of the triangle flip moves explained for generalized color ordering in the previous section. Triangle flips were one of the two techniques for generating color orderings, the other one was based on a recursive property. The analog for GFDs is based on the recursive property \eqref{projectioncFD} but we find the flip moves to be the more efficient option.

\subsection[Flips in k=2 Feynman diagrams]{Flips in $\boldsymbol{k=2}$ Feynman diagrams}

In order to introduce the main idea, let us start with the trivial but illustrative case of standard biadjoint $\phi^3$ Feynman diagrams.
Consider any Feynman diagram, $T$, with $n$ leaves, $n-2$ degree-three vertices, and $n-3$ internal edges. Each internal edge has a length $f_I$, with $I\in \{1,2,\dots ,n-3\}$. If any of the lengths are set to zero, the diagram loses two degree-three vertices and gains a degree-four vertex. There are three ways of ``resolving'' a degree-four vertex into two degree-three ones. One of the three ways leads back to $T$ while the other two lead to two different Feynman diagrams $T'$ and $T''$. This can be done for any internal edge and therefore $T$ is connected to $2(n-3)$ other trees this way.

More generally, we call a {\it flip} the operation of sending a length to zero to produce a degeneration of a diagram to connect it to a different diagram that shares the same degeneration.

It is worth noting that this process is analog to what is known in mathematical physics as a flop transition, in which a cycle in a manifold is sent to zero size (K\"ahler volume) and then replaced by another cycle that grows in size. In fact, when this is done in a toric variety, the description of a flop is identical to that of a mutation in a triangulation of a polygon representation of a Feynman diagram.

Flips of Feynman diagrams are very useful in scattering amplitudes, for example in the study of Bern--Carrasco--Johansson double copy relations \cite{Adamo:2022dcm,Bern:2019prr, Bern:2022wqg,Bern:2008qj} and their geometric explanations~\cite{Arkani-Hamed:2017mur,Gao:2017dek,He:2021lro, He:2018pue,Herderschee:2019wtl}. Here we focus on generalizing them to higher~$k$.

\subsection[Flips in k=3 Feynman diagrams]{Flips in $\boldsymbol{k=3}$ Feynman diagrams}\label{secGFDflips}

Flips of $k=3$ GFDs are also defined using degenerations produced by sending internal edge lengths to zero. Recall that a $(3,n)$ GFD has $n(n-4)$ internal edges, $n-4$ for each tree in the collection, but only $2(n-4)$ internal lengths are independent due to the compatibility condition that produces the $k=3$ metric $d_{abc}$. This notion also gives a natural definition of GFDs connected by a flip for all $(k,n)$.

\begin{defn}
Two GFDs are said to be related by a flip if they have a common codim-1 degenerate GFD.
\end{defn}

A codim-1 degenerate GFD usually contains $k=2$ FDs with cubic or quartic vertices but starting from $n=7$, it may also contain quintic or higher degree vertices.
In practice, to get flips of a GFD, one can degenerate it first and then
blow it up.
However, unlike the $k=2$ case, not all ways of blowing up quartic
 or higher degree vertices in the various diagrams in the collection lead to valid a GFD. The reason is that randomly resolving quartic or higher degree vertices does not guarantee that the new arrangement of trees will have a valid non-degenerate metric. While this might seem to make the problem complicated, it is actually a simplification.

Consider the set of all GCOs a GFD is compatible with. Any of its degenerations must be compatible with a larger set of GCOs.
A necessary condition for the resolution of the degeneration to be allowed is that the new candidate GFD be compatible with at least one of the GCOs of the degenerate GFD. In most cases, the two GFDs connected by flips also share a~common GCO.

For example, consider the following GFDs at $n=6$,
\begin{align}
{\cal T}_A\colon\
\left(\!\!
 {
 \raisebox{-.85 cm}{
\tikz[xscale=.55,yscale=.35]{
\draw[]
(0,0)--(2,0)
(0,0)--(-.3,-1.3) node[below]{2}
(0,0)--(-.3,1.3) node[above]{3}
(2,0)--(2.3,1.3) node[above]{4}
(2,0)--(2.3,-1.3) node[below]{5}
(1,0)-- (1,-1.3) node[below]{6}
;
\draw[blue, very thick] (0,0)--(.5,0) node [above] {$y$}--(1,0);
\draw[red, very thick] (1,0)--(1.5,0) node [above] {$z$}--(2,0);
}}
},\!
 {
 \raisebox{-.85 cm}{
\tikz[xscale=.55,yscale=.35]{
\draw[]
(0,0)--(2,0)
(0,0)--(-.3,-1.3) node[below]{1}
(0,0)--(-.3,1.3) node[above]{3}
(2,0)--(2.3,1.3) node[above]{4}
(2,0)--(2.3,-1.3) node[below]{5}
(1,0)-- (1,-1.3) node[below]{6}
;
\draw[blue, very thick] (0,0)--(.5,0) node [above] {$y$}--(1,0);
\draw[red, very thick] (1,0)--(1.5,0) node [above] {$z$}--(2,0);
}}
},\!
 {
 \raisebox{-.85 cm}{
\tikz[xscale=.55,yscale=.35]{
\draw[]
(0,0)--(2,0)
(0,0)--(-.3,-1.3) node[below]{1}
(0,0)--(-.3,1.3) node[above]{2}
(2,0)--(2.3,1.3) node[above]{4}
(2,0)--(2.3,-1.3) node[below]{5}
(1,0)-- (1,-1.3) node[below]{6}
;
\draw[blue, very thick] (0,0)--(.5,0) node [above] {$~x{+}y$}--(1,0);
\draw[red, very thick] (1,0)--(1.5,0) node [above] {~$z$}--(2,0);
}}
},\!
 {
 \raisebox{-.85 cm}{
\tikz[xscale=.55,yscale=.35]{
\draw[]
(0,0)--(2,0)
(0,0)--(-.3,-1.3) node[below]{1}
(0,0)--(-.3,1.3) node[above]{2}
(2,0)--(2.3,1.3) node[above]{5}
(2,0)--(2.3,-1.3) node[below]{6}
(1,0)-- (1,1.3) node[above]{3}
;
\draw[blue, very thick] (0,0)--(.5,0) node [below] {$x$}--(1,0);
\draw[red, very thick] (1,0)--(1.5,0) node [below] {$w$}--(2,0);
}}
},\!
 {
 \raisebox{-.85 cm}{
\tikz[xscale=.55,yscale=.35]{
\draw[]
(0,0)--(2,0)
(0,0)--(-.3,-1.3) node[below]{1}
(0,0)--(-.3,1.3) node[above]{2}
(2,0)--(2.3,1.3) node[above]{4}
(2,0)--(2.3,-1.3) node[below]{6}
(1,0)-- (1,1.3) node[above]{3}
;
\draw[blue, very thick] (0,0)--(.5,0) node [below] {$x$}--(1,0);
\draw[red, very thick] (1,0)--(1.5,0) node [below] {$w$}--(2,0);
}}
},\!
 {
 \raisebox{-.85 cm}{
\tikz[xscale=.55,yscale=.35]{
\draw[]
(0,0)--(2,0)
(0,0)--(-.3,-1.3) node[below]{1}
(0,0)--(-.3,1.3) node[above]{2}
(2,0)--(2.3,1.3) node[above]{4}
(2,0)--(2.3,-1.3) node[below]{5}
(1,0)-- (1,1.3) node[above]{3}
;
\draw[blue, very thick] (0,0)--(.5,0) node [below] {$x$~~}--(1,0);
\draw[red, very thick] (1,0)--(1.5,0) node [below] {$z{+}w$~~}--(2,0);
}}
}
\!\!
\right).\!\!\!\label{original0}
\end{align}
All twelve internal lengths are explicitly given in term of only four.

${\cal T}_A$ has 4 codimension-1 degenerations, $x\to 0$, $y\to 0$, $z\to 0$, or $w\to 0$. For example, for $w\to 0$, it becomes
 \begin{align}
\label{degenerateGFD}
\left(
 {
 \raisebox{-.85 cm}{
\tikz[xscale=.55,yscale=.35]{
\draw[]
(0,0)--(2,0)
(0,0)--(-.3,-1.3) node[below]{2}
(0,0)--(-.3,1.3) node[above]{3}
(2,0)--(2.3,1.3) node[above]{4}
(2,0)--(2.3,-1.3) node[below]{5}
(1,0)-- (1,-1.3) node[below]{6}
;
\draw[blue, very thick] (0,0)--(.5,0) node [above] {$y$}--(1,0);
\draw[red, very thick] (1,0)--(1.5,0) node [above] {$z$}--(2,0);
}}
},
 {
 \raisebox{-.85 cm}{
\tikz[xscale=.55,yscale=.35]{
\draw[]
(0,0)--(2,0)
(0,0)--(-.3,-1.3) node[below]{1}
(0,0)--(-.3,1.3) node[above]{3}
(2,0)--(2.3,1.3) node[above]{4}
(2,0)--(2.3,-1.3) node[below]{5}
(1,0)-- (1,-1.3) node[below]{6}
;
\draw[blue, very thick] (0,0)--(.5,0) node [above] {$y$}--(1,0);
\draw[red, very thick] (1,0)--(1.5,0) node [above] {$z$}--(2,0);
}}
},
 {
 \raisebox{-.85 cm}{
\tikz[xscale=.55,yscale=.35]{
\draw[]
(0,0)--(2,0)
(0,0)--(-.3,-1.3) node[below]{1}
(0,0)--(-.3,1.3) node[above]{2}
(2,0)--(2.3,1.3) node[above]{4}
(2,0)--(2.3,-1.3) node[below]{5}
(1,0)-- (1,-1.3) node[below]{6}
;
\draw[blue, very thick] (0,0)--(.5,0) node [above] {$~x{+}y$}--(1,0);
\draw[red, very thick] (1,0)--(1.5,0) node [above] {$~z$}--(2,0);
}}
},
 {
 \raisebox{-.85 cm}{
\tikz[xscale=.55,yscale=.35]{
\draw[]
(0,0)--(1,0)
(0,0)--(-.3,-1.3) node[below]{1}
(0,0)--(-.3,1.3) node[above]{2}
(1,0)--(2.3,1.3) node[above]{5}
(1,0)--(2.3,-1.3) node[below]{6}
(1,0)-- (1,1.3) node[above]{3}
;
\draw[blue, very thick] (0,0)--(.5,0) node [below] {$x$}--(1,0);
}}
},
 {
 \raisebox{-.85 cm}{
\tikz[xscale=.55,yscale=.35]{
\draw[]
(0,0)--(1,0)
(0,0)--(-.3,-1.3) node[below]{1}
(0,0)--(-.3,1.3) node[above]{2}
(1,0)--(2.3,1.3) node[above]{4}
(1,0)--(2.3,-1.3) node[below]{6}
(1,0)-- (1,1.3) node[above]{3}
;
\draw[blue, very thick] (0,0)--(.5,0) node [below] {$x$}--(1,0);
}}
},
 {
 \raisebox{-.85 cm}{
\tikz[xscale=.55,yscale=.35]{
\draw[]
(0,0)--(2,0)
(0,0)--(-.3,-1.3) node[below]{1}
(0,0)--(-.3,1.3) node[above]{2}
(2,0)--(2.3,1.3) node[above]{4}
(2,0)--(2.3,-1.3) node[below]{5}
(1,0)-- (1,1.3) node[above]{3}
;
\draw[blue, very thick] (0,0)--(.5,0) node [below] {$x$}--(1,0);
\draw[red, very thick] (1,0)--(1.5,0) node [below] {$z$}--(2,0);
}}
}
\right) .
\end{align}

Now, ${\cal T}_A$ is compatible with $16$ color orderings. Eight of them are of type 0, i.e., descendants of $(k=2,n=6)$ color orderings
like $(123456)$, $(213456)$, and eight type I color orderings such~as,
\begin{align}
\label{blow1ordering}
&( (23456),(13456),(12456),(12365),(12364),(12354) ),
\\
\label{blow2ordering}
&((23645),(13645),(12546),(12653),(12643),(12453)).
\end{align}
For each of the compatible color ordering, the degeneration \eqref{degenerateGFD} can be resolved in exactly two ways, one leads back to ${\cal T}_A$ and the other to a new GFD.

For example, using the ordering \eqref{blow1ordering}, the quartic vertices in \eqref{degenerateGFD} can be resolved in a~unique way (without coming back to \eqref{original0}). This is done by using the fourth and fifth $k=2$ orderings, $(12365)$, $(12364)$, to perform the $k=2$ flip on the fourth and fifth trees respectively, resulting~in\looseness=1
 \def\fFD #1,#2,#3,#4,#5 \fFD {
 \raisebox{-.85 cm}{
\tikz[scale=.3]{
\draw[]
(0,0)--(2,0)
(0,0)--(-.3,-1.3) node[below]{#1}
(0,0)--(-.3,1.3) node[above]{#2}
(2,0)--(2.3,1.3) node[above]{#3}
(2,0)--(2.3,-1.3) node[below]{#4}
(1,0)-- (1,1.3) node[above]{#5}
;
}}
}
 \def\fFDd #1,#2,#3,#4,#5 \fFDd {
 \raisebox{-.85 cm}{
\tikz[scale=.3]{
\draw[]
(0,0)--(2,0)
(0,0)--(-.3,-1.3) node[below]{#1}
(0,0)--(-.3,1.3) node[above]{#2}
(2,0)--(2.3,1.3) node[above]{#3}
(2,0)--(2.3,-1.3) node[below]{#4}
(1,0)-- (1,-1.3) node[below]{#5}
;
}}
}
\begin{align*}
{\cal T}_B\colon\
\left(
\fFDd 2, 3, 4, 5, 6 \fFDd,\fFDd 1, 3, 4, 5, 6 \fFDd,\fFDd 1, 2, 4, 5, 6 \fFDd,\fFDd 1, 2, 3, 6, 5 \fFDd,\fFDd 1, 2, 3, 6, 4 \fFDd,\fFD 1, 2, 4, 5,3 \fFD
\right) .
\end{align*}
${\cal T}_B$ has a non-degenerate metric and hence is indeed a GFD.

Using the other compatible ordering presented in \eqref{blow2ordering}, the original GFD ${\cal T}_A$ has a different flip and results in a different GFD,
\begin{align*}
{\cal T}_C\colon\
\left(
\fFDd 2, 3, 4, 5, 6 \fFDd,
\fFDd 1, 3, 4, 5, 6 \fFDd,
\fFDd 1, 2, 4, 5, 6 \fFDd,
\fFDd 2, 1, 3, 5, 6 \fFDd,
\fFDd 2, 1, 3, 4, 6 \fFDd,
\fFD 1, 2, 4, 5,3 \fFD
\right) .
\end{align*}

By considering all degenerations and all compatible color orderings, one can find eight flips in total and eight new GFDs. The reason why the number is much less than $4\times 16=64$ is that two compatible orderings may lead to the same flips, which is already the case for $k=2$. In order to avoid producing the same GFDs several times, one can also ignore all GCOs at first and blow up a degenerate GFD in all possible topological ways, resulting in many arrangements of metric trees. Then, select those with a correct number of independent internal lengths and verify if they share a common GCO with the original GFD or its degenerations. This is an equivalent way to find all flips of a GFD.

Starting at $n=7$, there are some GFDs whose degenerations contain a $k=2$ Feynman diagram with quintic or higher degree vertices and there is no unique way to resolve these high-degree vertices anymore guided by the compatible GCOs of the degenerate GFDs. One has to consider all possible ways to resolve the high-degree vertices and select the new GFDs from the resulting arrangements of metric trees whose metrics have the correct number of independent internal lengths.\footnote{One can resolve all quartic or higher degree vertices
into purely cubic vertices at first and then degenerate the resulting arrangements of metric trees which contain too many independent internal lengths until they become valid GFDs.} For example, the GFD with quartic vertices in~\eqref{boundaryofT0}, which we present here again for convenience
\def\fFD #1,#2,#3,#4,#5,#6 \fFD 	{
 \raisebox{-.65 cm}{
\tikz[scale=.3]{
\draw[]
(0,0)--(90:1)--++(90+45:.7) node[left=-2pt]{#1}
(0,0)--(90:1)--++(90-45:.7) node[right=-2pt]{#2}
(0,0)--(90-120:1)--++(90-120+45:.7) node[right=-2pt]{#3}
(0,0)--(90-120:1)--++(90-120-45:.7) node[below=-2pt]{#4}
(0,0)--(90+120:1)--++(90+120+45:.7) node[below=-2pt]{#5}
(0,0)--(90+120:1)--++(90+120-45:.7) node[left=-2pt]{#6}
;
}}}
\def\fFDD #1,#2,#3,#4,#5,#6 \fFDD 	{
 \raisebox{-.65 cm}{
\tikz[scale=.3]{
\draw[]
(0,0)--(90:0)--++(90+45:.7) node[left=-2pt]{#1}
(0,0)--(90:0)--++(90-45:.7) node[right=-2pt]{#2}
(0,0)--(90-120:1)--++(90-120+45:.7) node[right=-2pt]{#3}
(0,0)--(90-120:1)--++(90-120-45:.7) node[below=-2pt]{#4}
(0,0)--(90+120:1)--++(90+120+45:.7) node[below=-2pt]{#5}
(0,0)--(90+120:1)--++(90+120-45:.7) node[left=-2pt]{#6}
;
}}}
\begin{align}
&\left(\!\!
 \fFDD 3, 7, 4, 5, 6, 2 \fFDD,\!
 \fFD 3, 4, 5, 7, 6, 1 \fFD,\!
 \fFDD 1, 7, 2, 4, 5, 6 \fFDD,\!
 \fFD 7, 6, 5, 1, 2, 3 \fFD,\!
 \fFD 2, 7, 3, 6, 4, 1 \fFD,\!
 \fFD 4, 7, 1, 2, 3, 5 \fFD,\!
 \fFDD 3, 1, 2, 5, 4, 6 \fFDD\!\!
\right) ,\!\!\!\!\label{boundaryofT0again}
\end{align}
 is flipped to another GFD with pure cubic vertices
\def\fFD #1,#2,#3,#4,#5,#6 \fFD 	{
 \raisebox{-.65 cm}{
\tikz[scale=.3]{
\draw[]
(0,0)--(90:1)--++(90+45:.7) node[left=-2pt]{#1}
(0,0)--(90:1)--++(90-45:.7) node[right=-2pt]{#2}
(0,0)--(90-120:1)--++(90-120+45:.7) node[right=-2pt]{#3}
(0,0)--(90-120:1)--++(90-120-45:.7) node[below=-2pt]{#4}
(0,0)--(90+120:1)--++(90+120+45:.7) node[below=-2pt]{#5}
(0,0)--(90+120:1)--++(90+120-45:.7) node[left=-2pt]{#6}
;
}}}
\def\fFDD #1,#2,#3,#4,#5,#6 \fFDD 	{
 \raisebox{-.65 cm}{
\tikz[scale=.3]{
\draw[]
(0,0)--(90:0)--++(90+45:.7) node[left=-2pt]{#1}
(0,0)--(90:0)--++(90-45:.7) node[right=-2pt]{#2}
(0,0)--(90-120:1)--++(90-120+45:.7) node[right=-2pt]{#3}
(0,0)--(90-120:1)--++(90-120-45:.7) node[below=-2pt]{#4}
(0,0)--(90+120:1)--++(90+120+45:.7) node[below=-2pt]{#5}
(0,0)--(90+120:1)--++(90+120-45:.7) node[left=-2pt]{#6}
;
}}}
\begin{align*}
\left(
 \fFD 2, 3, 4, 5, 7, 6 \fFD,
 %
 	{
 \raisebox{-.85 cm}{
\tikz[scale=.3]{
\draw[]
(0,0)--(3,0)
(0,0)--(-.3,-1.3) node[below]{3}
(0,0)--(-.3,1.3) node[above]{4}
(1,0)-- (1,-1.3) node[below]{1}
(2,0)-- (2,-1.3) node[below]{6}
(3,0)--(3.3,1.3) node[above]{5}
(3,0)--(3.3,-1.3) node[below]{7}
;
}}
},
 %
	{
 \raisebox{-.85 cm}{
\tikz[scale=.3]{
\draw[]
(0,0)--(3,0)
(0,0)--(-.3,-1.3) node[below]{5}
(0,0)--(-.3,1.3) node[above]{6}
(1,0)-- (1,-1.3) node[below]{7}
(2,0)-- (2,1.3) node[above]{1}
(3,0)--(3.3,1.3) node[above]{2}
(3,0)--(3.3,-1.3) node[below]{4}
;
}}
},
 \fFD 7, 6, 5, 1, 2, 3 \fFD,
 \fFD 2, 7, 3, 6, 4, 1 \fFD,
 %
 {
 \raisebox{-.85 cm}{
\tikz[scale=.3]{
\draw[]
(0,0)--(3,0)
(0,0)--(-.3,-1.3) node[below]{3}
(0,0)--(-.3,1.3) node[above]{5}
(1,0)-- (1,-1.3) node[below]{2}
(2,0)-- (2,-1.3) node[below]{1}
(3,0)--(3.3,1.3) node[above]{4}
(3,0)--(3.3,-1.3) node[below]{7}
;
}}
},
%
{
 \raisebox{-.85 cm}{
\tikz[scale=.3]{
\draw[]
(0,0)--(3,0)
(0,0)--(-.3,-1.3) node[below]{4}
(0,0)--(-.3,1.3) node[above]{6}
(1,0)-- (1,1.3) node[above]{1}
(2,0)-- (2,-1.3) node[below]{3}
(3,0)--(3.3,1.3) node[above]{2}
(3,0)--(3.3,-1.3) node[below]{5}
;
}}
}
\right) ,
\end{align*}
guided by their common GCO \eqref{37GCOquar} via their common codim-1 degeneration
\def\fFD #1,#2,#3,#4,#5,#6 \fFD 	{
 \raisebox{-.65 cm}{
\tikz[scale=.3]{
\draw[]
(0,0)--(90:1)--++(90+45:.7) node[left=-2pt]{#1}
(0,0)--(90:1)--++(90-45:.7) node[right=-2pt]{#2}
(0,0)--(90-120:1)--++(90-120+45:.7) node[right=-2pt]{#3}
(0,0)--(90-120:1)--++(90-120-45:.7) node[below=-2pt]{#4}
(0,0)--(90+120:1)--++(90+120+45:.7) node[below=-2pt]{#5}
(0,0)--(90+120:1)--++(90+120-45:.7) node[left=-2pt]{#6}
;
}}}
\def\fFDD #1,#2,#3,#4,#5,#6 \fFDD 	{
 \raisebox{-.65 cm}{
\tikz[scale=.3]{
\draw[]
(0,0)--(90:0)--++(90+45:.7) node[left=-2pt]{#1}
(0,0)--(90:0)--++(90-45:.7) node[right=-2pt]{#2}
(0,0)--(90-120:1)--++(90-120+45:.7) node[right=-2pt]{#3}
(0,0)--(90-120:1)--++(90-120-45:.7) node[below=-2pt]{#4}
(0,0)--(90+120:1)--++(90+120+45:.7) node[below=-2pt]{#5}
(0,0)--(90+120:1)--++(90+120-45:.7) node[left=-2pt]{#6}
;
}
}}
\begin{align*}
\left(
	{
 \raisebox{-.65 cm}{
\tikz[scale=.3]{
\draw[]
(0,0)--(90:0)--++(90+45-25:.9) node[left=-2pt]{3}
(0,0)--(90:0)--++(90-45:.9) node[right=-2pt]{7}
(0,0)--(90-120:1)--++(90-120+45:.7) node[right=-2pt]{4}
(0,0)--(90-120:1)--++(90-120-45:.7) node[below=-2pt]{5}
(0,0)--(90+120:0)--++(90+120+45:.9) node[below=-2pt]{6}
(0,0)--(90+120:0)--++(90+120-45+25:.9) node[left=-2pt]{2}
;
}
}},
 {
 \raisebox{-.65 cm}{
\tikz[scale=.3]{
\draw[]
(0,0)--(90:1)--++(90+45:.7) node[left=-2pt]{3}
(0,0)--(90:1)--++(90-45:.7) node[right=-2pt]{4}
(0,0)--(90-120:1)--++(90-120+45:.7) node[right=-2pt]{5}
(0,0)--(90-120:1)--++(90-120-45:.7) node[below=-2pt]{7}
(0,0)--(90+120:0)--++(90+120+45:.7) node[below=-2pt]{6}
(0,0)--(90+120:0)--++(90+120-45:.7) node[left=-2pt]{1}
;
}}},
 \fFDD 1, 7, 2, 4, 5, 6 \fFDD,
 \fFD 7, 6, 5, 1, 2, 3 \fFD,
 \fFD 2, 7, 3, 6, 4, 1 \fFD,
 %
 {
 \raisebox{-.65 cm}{
\tikz[scale=.3]{
\draw[]
(0,0)--(90:1)--++(90+45:.7) node[left=-2pt]{4}
(0,0)--(90:1)--++(90-45:.7) node[right=-2pt]{7}
(0,0)--(90-120:0)--++(90-120+45:.7) node[right=-2pt]{1}
(0,0)--(90-120:0)--++(90-120-45:.7) node[below=-2pt]{2}
(0,0)--(90+120:1)--++(90+120+45:.7) node[below=-2pt]{3}
(0,0)--(90+120:1)--++(90+120-45:.7) node[left=-2pt]{5}
;
}}},
 \fFDD 3, 1, 2, 5, 4, 6 \fFDD
\right) ,
\end{align*}
which contains a quintic vertex.

There are also few cases where two GFDs connected by flips do not share any GCOs, for instance, the GFD \eqref{boundaryofT0again} and its relabelling
\def\fFD #1,#2,#3,#4,#5,#6 \fFD 	{
 \raisebox{-.65 cm}{
\tikz[scale=.3]{
\draw[]
(0,0)--(90:1)--++(90+45:.7) node[left=-2pt]{#1}
(0,0)--(90:1)--++(90-45:.7) node[right=-2pt]{#2}
(0,0)--(90-120:1)--++(90-120+45:.7) node[right=-2pt]{#3}
(0,0)--(90-120:1)--++(90-120-45:.7) node[below=-2pt]{#4}
(0,0)--(90+120:1)--++(90+120+45:.7) node[below=-2pt]{#5}
(0,0)--(90+120:1)--++(90+120-45:.7) node[left=-2pt]{#6}
;
}}}
\begin{align}
\label{37T0370}
\left(
{
 \raisebox{-.65 cm}{
\tikz[scale=.3]{
\draw[]
(0,0)--(90:1)--++(90+45:.7) node[left=-2pt]{3}
(0,0)--(90:1)--++(90-45:.7) node[right=-2pt]{7}
(0,0)--(90-120:0)--++(90-120+45:.7) node[right=-2pt]{4}
(0,0)--(90-120:0)--++(90-120-45:.7) node[below=-2pt]{5}
(0,0)--(90+120:1)--++(90+120+45:.7) node[below=-2pt]{6}
(0,0)--(90+120:1)--++(90+120-45:.7) node[left=-2pt]{2}
;
}}},
 \fFD 3, 4, 5, 7, 6, 1 \fFD,
 \fFD 1, 7, 2, 4, 5, 6 \fFD,
 {
 \raisebox{-.65 cm}{
\tikz[scale=.3]{
\draw[]
(0,0)--(90:1)--++(90+45:.7) node[left=-2pt]{7}
(0,0)--(90:1)--++(90-45:.7) node[right=-2pt]{6}
(0,0)--(90-120:0)--++(90-120+45:.7) node[right=-2pt]{5}
(0,0)--(90-120:0)--++(90-120-45:.7) node[below=-2pt]{1}
(0,0)--(90+120:1)--++(90+120+45:.7) node[below=-2pt]{2}
(0,0)--(90+120:1)--++(90+120-45:.7) node[left=-2pt]{3}
;
}}}
,
%
{
 \raisebox{-.65 cm}{
\tikz[scale=.3]{
\draw[]
(0,0)--(90:1)--++(90+45:.7) node[left=-2pt]{2}
(0,0)--(90:1)--++(90-45:.7) node[right=-2pt]{7}
(0,0)--(90-120:1)--++(90-120+45:.7) node[right=-2pt]{3}
(0,0)--(90-120:1)--++(90-120-45:.7) node[below=-2pt]{6}
(0,0)--(90+120:0)--++(90+120+45:.7) node[below=-2pt]{4}
(0,0)--(90+120:0)--++(90+120-45:.7) node[left=-2pt]{1}
;
}}}
,
 \fFD 4, 7, 1, 2, 3, 5 \fFD,
 \fFD 3, 1, 2, 5, 4, 6 \fFD
\right),
\end{align}
whose
 common degeneration is given by
\def\fFDD #1,#2,#3,#4,#5,#6 \fFDD 	{
 \raisebox{-.65 cm}{
\tikz[scale=.3]{
\draw[]
(0,0)--(90:0)--++(90+45:.7) node[left=-2pt]{#1}
(0,0)--(90:0)--++(90-45:.7) node[right=-2pt]{#2}
(0,0)--(90-120:1)--++(90-120+45:.7) node[right=-2pt]{#3}
(0,0)--(90-120:1)--++(90-120-45:.7) node[below=-2pt]{#4}
(0,0)--(90+120:1)--++(90+120+45:.7) node[below=-2pt]{#5}
(0,0)--(90+120:1)--++(90+120-45:.7) node[left=-2pt]{#6}
;
}}}
\def\fFD #1,#2,#3,#4,#5,#6 \fFD 	{
 \raisebox{-.65 cm}{
\tikz[scale=.3]{
\draw[]
(0,0)--(90:1)--++(90+45:.7) node[left=-2pt]{#1}
(0,0)--(90:1)--++(90-45:.7) node[right=-2pt]{#2}
(0,0)--(90-120:1)--++(90-120+45:.7) node[right=-2pt]{#3}
(0,0)--(90-120:1)--++(90-120-45:.7) node[below=-2pt]{#4}
(0,0)--(90+120:1)--++(90+120+45:.7) node[below=-2pt]{#5}
(0,0)--(90+120:1)--++(90+120-45:.7) node[left=-2pt]{#6}
;
}}}
\begin{align}
\label{37T0371}
\left({
 \raisebox{-.65 cm}{
\tikz[scale=.3]{
\draw[]
(0,0)--(90:0)--++(90+45:.7) node[left=-2pt]{3}
(0,0)--(90:0)--++(90-30:.7) node[right=-2pt]{7}
(0,0)--(90-120:0)--++(90-120+30:.7) node[right=-2pt]{4}
(0,0)--(90-120:0)--++(90-120-45:.7) node[below=-2pt]{5}
(0,0)--(90+120:1)--++(90+120+45:.7) node[below=-2pt]{6}
(0,0)--(90+120:1)--++(90+120-45:.7) node[left=-2pt]{2}
;
}}},
 \fFD 3, 4, 5, 7, 6, 1 \fFD,
 \fFDD 1, 7, 2, 4, 5, 6 \fFDD,
 {
 \raisebox{-.65 cm}{
\tikz[scale=.3]{
\draw[]
(0,0)--(90:1)--++(90+45:.7) node[left=-2pt]{7}
(0,0)--(90:1)--++(90-45:.7) node[right=-2pt]{6}
(0,0)--(90-120:0)--++(90-120+45:.7) node[right=-2pt]{5}
(0,0)--(90-120:0)--++(90-120-45:.7) node[below=-2pt]{1}
(0,0)--(90+120:1)--++(90+120+45:.7) node[below=-2pt]{2}
(0,0)--(90+120:1)--++(90+120-45:.7) node[left=-2pt]{3}
;
}}}
,
%
{
 \raisebox{-.65 cm}{
\tikz[scale=.3]{
\draw[]
(0,0)--(90:1)--++(90+45:.7) node[left=-2pt]{2}
(0,0)--(90:1)--++(90-45:.7) node[right=-2pt]{7}
(0,0)--(90-120:1)--++(90-120+45:.7) node[right=-2pt]{3}
(0,0)--(90-120:1)--++(90-120-45:.7) node[below=-2pt]{6}
(0,0)--(90+120:0)--++(90+120+45:.7) node[below=-2pt]{4}
(0,0)--(90+120:0)--++(90+120-45:.7) node[left=-2pt]{1}
;
}}}
,
 \fFD 4, 7, 1, 2, 3, 5 \fFD,
 \fFDD 3, 1, 2, 5, 4, 6 \fFDD
\right).
\end{align}
 The GFD \eqref{boundaryofT0} has 64 compatible GCOs such as \eqref{37GCOquar} while its relabeling \eqref{37T0370} has totally different 64 other compatible GCOs such as
\[
((2 4 7 3 5 6),(1 4 3 7 5 6),(1 2 4 7 3 5),(1 2 7 4 3 6),(1 2 3 5 7 6),(1 6 5 4 2 7),(1 3 2 5 4 6)).
\]
The $64+64=128$ GCOs
 together make up the set of compatible GCOs
 for the degeneration in~\eqref{37T0371}.

In general, each $(k=3,n)$ GFD has at least $2(n-4)$ degenerations and $4(n-4)$ flips by considering all compatible color orderings. This has been verified for $(3,n)$ up to $n=8$.

\subsection{Bootstrap algorithms}\label{bootstrapIandII}

Here we present two algorithms for computing GFD. Both are based on the idea that using flips all GFDs can be generated starting from a seed.

{\bf Bootstrap I:}
\begin{enumerate}\itemsep=0pt
\item[(1)] Start with any GFD, such as a descendant of a $(k=2,n)$ Feynman diagram as a seed.
\item[(2)] For each GFD in the list, flip it in all possible ways allowed by GCOs and add any new GFDs to the list.
\item[(3)] Repeat step 2 until no new GFDs are produced.
\end{enumerate}

Of course, making use of relabelling simplifies the procedure and it is a step that can be included in the algorithm.

For example, starting with the seed ${\cal T}_A$ in \eqref{original0},
which is the descendant of a $k=2$ Feynman diagram, \[
 \raisebox{-.85 cm}{
\tikz[scale=.3]{
\draw[]
(0,0)--(3,0)
(0,0)--(-.3,-1.3) node[below]{1}
(0,0)--(-.3,1.3) node[above]{2}
(3,0)--(3.3,1.3) node[above]{4}
(3,0)--(3.3,-1.3) node[below]{5}
(1,0)-- (1,1.3) node[above]{3}
(2,0)-- (2,-1.3) node[below]{6}
;
}}
,\]
a single layer of its flips produces two new GFDs ${\cal T}_B$ and ${\cal T}_C$ and their relabelling. Then flipping~${\cal T}_B$ and ${\cal T}_C$ again guided by their own compatible orderings produces some new ones. Repeating until no new classes of GFDs are produced, we get four other classes of GFDs which we denote as ${\cal T}_D$, ${\cal T}_E$, ${\cal T}_F$ and ${\cal T}_G$,
 \def\fFD #1,#2,#3,#4,#5 \fFD {
 \raisebox{-.61 cm}{
\tikz[scale=.17]{
\draw[]
(0,0)--(2,0)
(0,0)--(-.3,-1.3) node[below]{#1}
(0,0)--(-.3,1.3) node[above]{#2}
(2,0)--(2.3,1.3) node[above]{#3}
(2,0)--(2.3,-1.3) node[below]{#4}
(1,0)-- (1,1.3) node[above]{#5}
;
}}
}
\begin{gather*}
{\cal T}_D\colon\
\left( \fFD 2, 3, 4, 5, 6 \fFD,\fFD 1, 3, 4, 5, 6 \fFD,\fFD 1, 2, 4, 5, 6 \fFD,\fFD 1, 5, 3, 6, 2 \fFD,\fFD 1, 4, 3, 6, 2 \fFD,\fFD 1, 2, 4, 5, 3 \fFD
\right) ,\\
{\cal T}_E\colon\
\left(
\fFD 4, 5, 3, 6, 2 \fFD,\fFD 4, 5, 3, 6, 1 \fFD,\fFD 1, 2, 4, 5, 6 \fFD,\fFD 1, 2, 3, 6, 5 \fFD,\fFD 1, 2, 3, 6, 4 \fFD,\fFD 1, 2, 4, 5, 3 \fFD
\right) ,
\\
{\cal T}_F\colon\
\left(
\fFD 2, 3, 4, 5, 6 \fFD,\fFD 1, 3, 4, 5, 6 \fFD,\fFD 1, 2, 5, 6, 4 \fFD,\fFD 1, 5, 3, 6, 2 \fFD,\fFD 1, 4, 3, 6, 2 \fFD,\fFD 1, 2, 3, 5, 4 \fFD
\right) ,\\
{\cal T}_G\colon\
\left(
\fFD 2, 3, 4, 5, 6 \fFD,\fFD 1, 3, 4, 6, 5 \fFD,\fFD 1, 2, 5, 6, 4 \fFD,\fFD 1, 5, 2, 6, 3 \fFD,\fFD 1, 4, 3, 6, 2 \fFD,\fFD 2, 4, 3, 5, 1 \fFD
\right).
\end{gather*}
The permutation of labels of these seven classes of GFDs gives rise to
$1005$ GFDs in total, which is consistent with the result in the literature \cite{speyer2004tropical}, see also \cite{herrmann2008draw}. For the reader's convenience, we also include additional information for each representative, including their contributions ${\cal R}(\cal T)$ to the amplitudes via \eqref{Scw3}, in Table~\ref{amplitude36con}.

\begin{table}[!ht]\renewcommand{\arraystretch}{1.2}
 \centering
\begin{tabular}{|c|c|c|c|c|c|c|c|}\hline
 & Contribution ${\cal R}(\cal T)$ to the amplitude & \# of flips & \# of perm. \\ \hline
${\cal T}_A $ & ${1}/({\s_{123} \s_{456} \t_{1236} \t_{3456}} ) $ & 8& 90\\ 
${\cal T}_B $ &${1}/({\sfR_{45,12,36} \s_{123} \t_{1236} \t_{3456}})$ & 8 & 180 \\ 
${\cal T}_C $ &${1}/({\s_{123} \s_{345} \t_{1236} \t_{3456}} )$& 8 &90 \\ 
${\cal T}_D $ & $ {1}/({\sfR_{45,12,36} \s_{123} \s_{145} \t_{1236}})$& 8 &360 \\ 
${\cal T}_E $ &$
\displaystyle{({\sfR_{12,45,36}+\sfR_{45,12,36}})/({\sfR_{12,45,36} \sfR_{45,12,36} \t_{1236} \t_{1245} \t_{3456}}} )$ & 12 &15 \\ 
${\cal T}_F $ & $ {1}/({\s_{123} \s_{145} \s_{246} \s_{356}})$& 8 & 30 \\ 
${\cal T}_G $ &${1}/({\sfR_{45,12,36} \s_{123} \s_{145} \s_{356}} )$& 8 & 240 \\ \hline
\end{tabular}
 \caption{Contribution of GFDs of different types to the amplitude.
 The definitions of $\t$, $\sfR$ are given in~\eqref{defpoles}.
 The last two columns denote the numbers of flips and distinct permutations of a GFD, respectively.} \label{amplitude36con}
\end{table}

With these 1005 GFDs at hand, one can pick out those that are compatible with a particular ordering. The four representatives of color orderings in Table~\ref{table36} have 48, 41, 44 and 45 GFDs respectively. Table~\ref{compatible-Table} shows what classes these GFDs belong to.

Some noteworthy facts are the following.

Class F GFDs are not covered by the type 0 color orderings and type III color orderings only contain GFDs of classes F and G.

\begin{table}[!ht]\renewcommand{\arraystretch}{1.2}
 \centering
\begin{tabular}{|c|c|c|c|c|c|c|c|}\hline
 &class A &class B &class C & class D &class E&class F&class G \\ \hline
type 0 & \checkmark & \checkmark & \checkmark & \checkmark & \checkmark & &\checkmark \\
type I & \checkmark & \checkmark & \checkmark & \checkmark &
& \checkmark& \checkmark \\
type II & & \checkmark & \checkmark & \checkmark & \checkmark & \checkmark & \checkmark \\
type III &&&& & & \checkmark & \checkmark \\ \hline
\end{tabular}
 \caption{Compatibilities between different types of color orderings and different classes of GFDs.}\label{compatible-Table}
\end{table}

Any GFD is compatible with the same number of color orderings, $16$ in total, but the number of GCOs in each type can vary. As also shown in Table~\ref{compatible-Table}, only two types of color orderings support GFDs in the same class as ${\cal T}_A$. So do GCOs for ${\cal T}_E$. While GFDs in the classes defined by ${\cal T}_B$, ${\cal T}_C$, ${\cal T}_D$, ${\cal T}_F$ are compatible with three types of GCOs. Finally, ${\cal T}_G$ has the very interesting property of being compatible with color orderings of any type and so it is universal.

If we are interested only in computing a biadjoint partial amplitude $m_n(\Sigma,\Sigma)$, with $\Sigma$ a particular GCO, and a seed is known that is compatible with $\Sigma$, we can adopt a simpler bootstrap.

{\bf Bootstrap II:}
\begin{enumerate}\itemsep=0pt
\item[(1)] Start with a GFD as a seed that is compatible with a certain color ordering $\Sigma$.
 \item[(2)] For every GFD in the list, degenerate it and blow it up in a way compatible with $\Sigma$.
\item[(3)] Repeat step 2 until no new GFDs are produced.
\end{enumerate}

 This bootstrap can be thought of as a generalization of that in \cite{Borges:2019csl} and \cite{Cachazo:2019xjx} to any other type of color orderings.
In Section~\ref{chirotopalsection}, we use this construction as a motivation for introducing {\it chirotopal tropical Grassmannians} as a natural extension of positive tropical Grassmannians.

\subsubsection[Application to (3,7)]{Application to $\boldsymbol{(3,7)}$}

Starting at $(3,7)$, there are GFDs with quartic or higher degree vertices. One example was given in~\eqref{boundaryofT0}. We can still apply the bootstrap algorithms to trees with higher multiplicities but efficiency might be compromised as special attention is needed to find the flips of such GFDs. Fortunately, computations are still within reach with modest computational resources for $(3,7)$ and $(3,8)$ as the number of GFDs with mixed trees is still small.

For $(3,7)$, we reproduced all $211\,155+210 = 211\,365$ GFDs for $(3,7)$ presented in \cite{herrmann2008draw}. Here~$210$ is the number of GFDs with mixed trees. All GFDs fall into $93+1$ classes where 93 of them only contain cubic vertices. The only class with quartic vertices is the one generated by \eqref{boundaryofT0} via relabeling. As mentioned in Section \ref{sec4}, this exceptional GFD is a codim-1 boundary of another arrangement, \eqref{37T0}, whose metric has~7 independent internal lengths. Such an arrangement failed to be a GFD because it is not compatible with any GCOs and has the wrong dimension.

On the one hand, the eleven color orderings in Table~\ref{table37} have 693, 534, 563, 447, 541, 509, $520+2$, $556+2$, 393, $440+1$ and $423+1$ GFDs respectively. Here $423+1$ means that the type X GCO has 423 compatible GFDs with only cubic vertices and one GFD with quartic vertices. These are the numbers of GFDs needed to compute the corresponding \smash{$m_7^{(3)}(\Sigma,\Sigma)$} partial amplitudes. On the other hand, every GFD is compatible with exactly 64 color orderings which may belong to $2,3, \dots $ or $11$ types respectively.

Here is an example of a GFD which is compatible with 11 types of GCOs,
 \def\fFD #1,#2,#3,#4,#5,#6 \fFD 	{
 \raisebox{-.85 cm}{
\tikz[scale=.3]{
\draw[]
(0,0)--(3,0)
(0,0)--(-.3,-1.3) node[below]{#1}
(0,0)--(-.3,1.3) node[above]{#2}
(1,0)-- (1,1.3) node[above]{#3}
(2,0)-- (2,1.3) node[above]{#4}
(3,0)--(3.3,1.3) node[above]{#5}
(3,0)--(3.3,-1.3) node[below]{#6}
;
}}
}
\begin{align*}
&\left(
\fFD 2, 6, 3, 5, 4, 7 \fFD,\fFD 1, 6, 3, 5, 4, 7 \fFD,\fFD 2, 6, 1, 5, 4, 7 \fFD,\fFD 5, 6, 2, 1, 3, 7 \fFD,\fFD 1, 3, 2, 4, 6, 7 \fFD,\fFD 1, 2, 3, 4, 5, 7 \fFD,\fFD 3, 4, 1, 2, 5, 6 \fFD\right),
\end{align*}
whose contribution to the amplitude is given by
${1}/(
\s_{126}
\s_{347}
\s_{567}
\t_{1236}
\sfR_{47, 123, 56}
\sfW_{1347265}
 )$ with $\sfR$,~$\sfW$ defined in \eqref{defpoles}.

All representatives of the $(3,7)$ GFDs, accompanied by one compatible GCO each, their poles, and their contributions to the amplitudes are put in an ancillary file.

\subsubsection[Application to (3,8)]{Application to $\boldsymbol{(3,8)}$}
\label{38GFD1}

For $(3,8)$, as one can imagine, there would be more GFDs with mixed vertices, whose flips are much more complicated than those of GFDs with only cubic vertices. So in practice we apply the bootstrap I more effectively by bootstrapping all GFDs with only cubic vertices first, and then bootstrapping the remaining GFDs with mixed vertices based on them.

Besides, there is another problem for $(3,8)$. Given a GFD it is very time-consuming to determine all GCOs it is compatible with. The reason is that there are $4\, 445\, 640$ $(3,8)$ GCOs. Fortunately, there is a simple method to generate all compatible GCOs for a given GFD based on one of its compatible GCOs. We postpone the explanation of this method to the next section and now we just apply it to our bootstrap. In most cases, two GFDs connected by flips share a common GCO. So in the bootstrap, whenever we generate a new candidate of GFD ${\cal T}'$ by blowing up a degeneration of a GFD ${\cal T}$, we can check whether it is compatible with any of the compatible set of GCOs of~${\cal T}$. Otherwise, we have to check whether it is compatible with any of the $4\, 445\, 640$ $(3,8)$ GCOs.

In this way, we obtained 4734 classes of normal GFDs with pure cubic vertices first. Their flips produce 55 new classes of arrangements of metric trees with mixed vertices and a correct number of independent lengths. 28 of them share a common GCO with the normal GFDs whose flips produce them. For the last 27 candidates, we found that 3 of them are GFDs by checking whether they are compatible with any of the $4\, 445\, 640$ $(3,8)$ GCOs.
It turns out that the flips of the 31 classes of GFDs don't produce any new classes of GFDs.

In one word, we found $4734+31= 4765$ classes of GFDs in total, whose permutations give~${116\,849\,565+ 604800 = 117\,454\,365}$ GFDs.

All representatives of the $(3,8)$ GFDs, accompanied by one compatible GCO each, their poles, and their contributions to the amplitudes are put in an ancillary file.

We have checked many $(3,8)$ GFDs and verified that they are compatible with 256 GCOs. We conjecture it to be true for all $(3,8)$ GFDs.
In the next section, we will explain how to get all compatible GCOs for each given GFD efficiently and here we present some relevant results. It turns out a~$(3,8)$ GFD always contributes to at least 4 but at most 112 types of GCOs.

Let us now assume that Bootstrap I has already been performed and we have obtained all GFDs and their compatible GCOs. Next, we describe an efficient way to compute a given partial amplitude $m(\Sigma,\Sigma)$ which replaces Bootstrap II.

This method works for general cases but for definiteness, let us concentrate on the present interest of this section, $(k,n)=(3,8)$. By assumption, we have all compatible GCOs for a~representative of each of the 4765 classes of GFDs. We can decompose them as a list of pairs, one compatible GCO and one representative GFD, $(\Sigma',{\cal T})$. Then for the GCO of interest, $\Sigma$, we just need to select all pairs such that the GCO $\Sigma'$ is of the same type as $\Sigma$. This means that for each selected pair $(\Sigma',{\cal T})$, there exists one or more permutations of labels $\rho$ such that $\Sigma'\to \Sigma'|_{\rho}= \Sigma$. We relabel the pair $(\Sigma',{\cal T})$ simultaneously under every $\rho$ to get a set of $(\Sigma, {\cal T}|_{\rho})$. Gathering all distinct ${\cal T}|_{\rho}$ obtained in this way, we find all compatible GFDs for every given GCO $\Sigma$.

As shown in \cite{Cachazo:2019xjx}, there are 13612 GFDs contributing to a type~0 GCO, and in this paper, we find that this is the maximum number among all classes of GCOs. In particular, among these, the type 18 GCO given in Table~\ref{38color} has the smallest number 3356 of compatible GFDs. A~general code to give any $(3,8)$ color ordering amplitudes is provided in an ancillary Mathematica notebook file.

Let us point out an interesting contrast. As above, we find that there are 4765 permutation classes of GFDs, each being compatible with some GCO; by using the metric tree parameterization of the Dressian this translates to having 4765 permutation classes of maximal cones in the tropical Grassmannian $\operatorname{Trop} G(3,8)$.

Now, in \cite[Theorem~4.6]{bendle2020parallel}, the authors find 4766 symmetry classes of maximal cones in the tropical Grassmannian of~$(3,8)$. We have identified the extra symmetry class of metric tree arrangement in~\eqref{37T038}. It is not compatible with any GCO. On the other hand, in \cite{bendle2020parallel} the cone parameterized by our metric tree arrangement corresponds to a certain non-binomial saturated initial ideal, see \cite[Remark~4.5]{bendle2020parallel}.

\section{Generating GCOs from GFDs using twists} \label{poletoGCO}

In the previous section, we discussed how to generate new GFDs starting from known ones by using flips guided by generalized color orderings. In every example we studied, all GFDs up to $(3,8)$ have $2^{2(n-4)}$ compatible GCOs, which naturally leads us to conjecture that it holds in general.

\begin{conj}\label{samenumber}
Every $(3,n)$ GFD is compatible with $2^{2(n-4)}$ color orderings.
\end{conj}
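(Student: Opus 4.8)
The plan is to show that, for a fixed GFD $\mathcal T$, the set $\mathrm{GCO}(\mathcal T)$ of generalized color orderings compatible with $\mathcal T$ in the sense of Definition~\ref{defGFD} is a torsor under the elementary abelian group $(\mathbb Z/2)^{2(n-4)}$; since $\mathrm{GCO}(\mathcal T)\neq\emptyset$ by the definition of a GFD, this forces $|\mathrm{GCO}(\mathcal T)|=2^{2(n-4)}$. The $2(n-4)$ generators of this group will be ``twists'', one for each independent internal edge length $f_1,\dots,f_{2(n-4)}$ parametrizing the metric of $\mathcal T$ in \eqref{Scw3}. As a warm-up it is worth recording the $k=2$ shadow: a trivalent tree with $m$ leaves is planar with respect to exactly $2^{m-3}$ cyclic orderings, and the group $(\mathbb Z/2)^{m-3}$ that reverses, at each internal vertex, the cyclic order of the two outgoing subtrees (modulo the global reflection) acts simply transitively on them. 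Hence in a compatible $\Sigma=(\sigma^{(1)},\dots,\sigma^{(n)})$ each $\sigma^{(i)}$ a priori ranges over the $2^{n-4}$ planar orderings of $T_i$, and the content of the conjecture is that requiring the tuple to come from an honest arrangement of lines on $\mathbb{RP}^2$ (Definition~\ref{GCO}) collapses the naive $2^{n(n-4)}$ possibilities down to $2^{2(n-4)}$.

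\textbf{Reformulation via initial ideals.} The clean mechanism behind the count is tropical. The GFD $\mathcal T$ determines a pointed cone $C_{\mathcal T}$ of dimension $2(n-4)$ in $\operatorname{Trop}G(3,n)$ modulo its $n$-dimensional lineality space --- this is exactly the second bullet of Definition~\ref{defGFD} --- whose relative interior $C_{\mathcal T}^{\circ}$ consists of the symmetric tensors $d_{abc}$ realizing $\mathcal T$. A compatible $\Sigma$ amounts to the chirotope of a lift of a point $w\in C_{\mathcal T}^{\circ}$ to a $3\times n$ matrix over the Puiseux series $\mathbb R\{\{t\}\}$: the tree $T_i$ is recovered from the chirotope as the phylogenetic tree of the $n-1$ intersection points on $L_i$, and the fact that nearby clusters are adjacent forces $T_i$ to be planar with respect to the induced $\sigma^{(i)}$, so every such chirotope is compatible with $\mathcal T$. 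Now if the initial ideal $\operatorname{in}_w I_{3,n}$ of the Pl\"ucker ideal is \emph{binomial} (equivalently, toric), then its real leading-coefficient locus is a torus $(\mathbb R^{*})^{3n-8}$, and its $2^{3n-8}$ sign chambers are precisely the chirotopes visible from $C_{\mathcal T}^{\circ}$. Since a GCO is a chirotope up to reorientation, and reorientation acts freely (by $(\mathbb Z/2)^{n}$) on uniform chirotopes and preserves valuations, the visible chirotopes split into full orbits and $\mathrm{GCO}(\mathcal T)$ has $2^{3n-8}/2^{n}=2^{2(n-4)}$ elements; moreover $\mathrm{GCO}(\mathcal T)\cong(\mathbb Z/2)^{3n-8}/(\mathbb Z/2)^{n}$ is a torsor under $(\mathbb Z/2)^{2(n-4)}$. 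The twists $\tau_I$ are the residues on $\mathrm{GCO}(\mathcal T)$ of the coordinate sign flips of that torus; that each is realized by an explicit local move on a line arrangement --- shrinking and re-expanding the cluster attached to the edge $I$ simultaneously in every tree $T_j$ in which that edge survives, as dictated by $d^{(a)}_{bc}=d^{(b)}_{ac}=d^{(c)}_{ab}$ --- is what one checks to see that $\tau_I^2=\mathrm{id}$, that $\tau_I\tau_J=\tau_J\tau_I$, and that $\tau_I$ maps $\mathrm{GCO}(\mathcal T)$ to itself.

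\textbf{What remains, and the main obstacle.} Granting the reformulation, freeness is immediate (a nontrivial word in the $\tau_I$ changes some $\sigma^{(i)}$, hence the GCO), and transitivity is the assertion that \emph{every} compatible $\Sigma$ arises as the chirotope of a lift of $C_{\mathcal T}^{\circ}$, i.e.\ that the real realization space of line arrangements inducing $\mathcal T$ surjects, via the twist group, onto $\mathrm{GCO}(\mathcal T)$. For $n\leq 8$ all of this can be read off from the enumerations carried out in this paper. The obstruction to a proof for general $n$ is twofold: first, one needs a robust definition of GFD (the footnote to Definition~\ref{defGFD} proposes requiring that $d_{abc}$ span a cone of $\operatorname{Trop}G(3,n)$) together with a proof that ``$\mathcal T$ is a GFD'' is equivalent to ``$\operatorname{in}_w I_{3,n}$ is binomial'' for $w\in C_{\mathcal T}^{\circ}$; second, one needs the Mn\"ev-type statement that every one of the $2^{2(n-4)}$ twist classes is realized by straight lines, not merely pseudo-lines. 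I expect the binomiality equivalence to be the crux: it is exactly the phenomenon flagged in \cite[Remark~4.5]{bendle2020parallel}, where the unique maximal cone of $\operatorname{Trop}G(3,8)$ that is \emph{not} a GFD --- the metric tree arrangement \eqref{37T038} --- is the one carrying a non-binomial saturated initial ideal. With that equivalence in hand, the torus computation above turns Conjecture~\ref{samenumber} into a theorem.
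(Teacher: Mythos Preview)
The statement is a \emph{conjecture}: the paper does not prove it. What the paper does offer is (i) the twist heuristic in Section~\ref{poletoGCO} --- Proposition~\ref{k2CO} for $k=2$, then the claim that a $(3,n)$ GFD has $2(n-4)$ independent twist generators whose compositions give $2^{2(n-4)}$ GCOs --- together with computational verification for $n\le 8$; and (ii) the tropical reformulation in Section~\ref{chirotopalsection}, where the last clause of Conjecture~\ref{conj: chirotopal equality} is explicitly flagged as equivalent to Conjecture~\ref{samenumber}. Your first paragraph is essentially the paper's Section~\ref{poletoGCO}; your third paragraph correctly names the same obstacles the paper leaves open.

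Your second paragraph goes genuinely further than the paper. The paper reformulates the conjecture in terms of chirotopal tropical Grassmannians but does not propose a mechanism for the count; you do, via binomiality of the initial ideal. The arithmetic $2^{3n-8}/2^n=2^{2(n-4)}$ is clean and the link to \cite[Remark~4.5]{bendle2020parallel} --- that the unique non-GFD maximal cone in $\operatorname{Trop}G(3,8)$ is exactly the one with non-binomial saturated initial ideal --- is compelling evidence that ``GFD $\Leftrightarrow$ binomial $\operatorname{in}_w I_{3,n}$'' is the right characterization to aim for. This is a more structured route than the paper's twist heuristic: the twists become, as you say, the residues of coordinate sign flips of the initial torus, and commutativity and involutivity are then automatic rather than something to be checked geometrically.

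Two caveats on your sketch. First, ``binomial'' is not enough to get a torus as the real locus; you need the initial ideal to be \emph{prime} (toric), and even then you should check that the dense torus orbit is defined over $\mathbb{R}$ and that all $2^{3n-8}$ sign chambers are nonempty over $\mathbb{R}$ --- this is not automatic for arbitrary toric varieties. Second, and more substantively, the identification of ``chirotope of a Puiseux lift of $w\in C_{\mathcal T}^\circ$'' with ``GCO compatible with $\mathcal T$ in the sense of Definition~\ref{defGFD}'' is itself nontrivial in both directions. You sketch one direction (a lift induces planar trees on each line); the other direction --- that every combinatorially compatible $\Sigma$ is realized by some lift --- is what you call transitivity, and it hides a realizability problem: a priori the combinatorial compatibility could be witnessed only by a pseudo-line arrangement, which for $n\ge 9$ need not straighten. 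So the ``Mn\"ev-type statement'' you flag is not a side issue but the heart of the matter, and it is not obviously implied by binomiality alone.
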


This section is devoted to providing what we believe is a promising direction to prove this important conjecture. In fact, we turn things around and use GFDs to generate new generalized color orderings. Very nicely, this also helps us to improve the bootstrap of GFDs as already applied in Section~\ref{38GFD1}.

Let us start with a simple observation.

\begin{prop}\label{k2CO}
 Let $T$ be a $n$-point tree Feynman diagram in a $\phi^3$ scalar field theory. $T$ is compatible with exactly $2^{n-3}$ color orderings.
\end{prop}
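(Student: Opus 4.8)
The plan is to proceed by induction on the number of leaves $n$, or equivalently by induction on the number $n-3$ of internal edges of $T$. The base case is $n=3$: the unique tree with three leaves and one trivalent vertex has no internal edges, it is planar with respect to the single $(2,3)$ color ordering, and $2^{n-3}=2^0=1$, so the claim holds. For the inductive step, I would pick any internal edge $e$ of $T$; removing the two endpoints of $e$ by contracting $e$ produces a single degree-four vertex, and $T$ is obtained from that contracted tree by one of the three resolutions of the quartic vertex.

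The cleaner way to organize the induction is to use the structure of a rooted/planar embedding. First I would recall that a color ordering $\sigma$ (a cyclic-and-reflection class of a sequence on $[n]$) is \emph{compatible} with $T$ precisely when $T$ can be drawn in the disk with its leaves meeting the boundary in the cyclic order $\sigma$; equivalently, at every internal vertex of $T$ the cyclic order of the three incident subtrees, read off from the planar embedding, is consistent. So counting compatible color orderings is the same as counting planar embeddings of $T$ into the disk up to the dihedral action on the boundary. Now fix any leaf, say leaf $n$, and any internal vertex; a standard fact is that a trivalent tree with $n$ leaves has $n-2$ internal vertices, and the set of planar embeddings is a torsor-like object: once one fixes the cyclic orientation at one vertex (2 choices modulo the global reflection already quotiented out — here one must be slightly careful), each of the remaining $n-3$ internal vertices independently contributes a binary choice of how to cyclically order its two ``non-root'' branches. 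This gives $2^{n-3}$ embeddings, hence $2^{n-3}$ compatible color orderings.

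Concretely, in the induction I would argue as follows. Delete a leaf, say leaf $n$, that is attached (after the usual suppression of the resulting degree-two vertex) — better, pick a \emph{cherry}: two leaves $a,b$ sharing a common neighbor $v$. Let $T'$ be the $(n-1)$-leaf tree obtained by deleting $a$ and $b$ and replacing them with a single new leaf $\ast$ at $v$; then $T'$ has $n-4$ internal edges, so by induction $T'$ has $2^{n-4}$ compatible color orderings. Given any color ordering $\sigma'$ of $T'$, I reinsert the cherry: in $\sigma'$ the leaf $\ast$ occupies one ``slot'', and replacing $\ast$ by the pair $a,b$ can be done in exactly $2$ cyclically-inequivalent ways, namely $\dots a\, b \dots$ or $\dots b\, a\dots$; both give valid planar embeddings of $T$ and every compatible color ordering of $T$ restricts (by pruning $a,b$ and relabeling) to a compatible one of $T'$, with exactly these two preimages. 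Hence the count doubles: $2\cdot 2^{n-4}=2^{n-3}$. One must check that the two insertions are genuinely inequivalent under the dihedral symmetry of the ambient $n$-gon and that distinct $\sigma'$ yield disjoint sets of $\sigma$ — both follow because pruning is well-defined on color orderings (this is exactly the $\pi_i$ operation of \eqref{colorprojection} restricted to $k=2$, i.e.\ the map used in the $\mathrm{U}(1)$ decoupling discussion).

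The main obstacle, and the step to write carefully, is the bookkeeping of the dihedral ($\mathbb{Z}_n\times \mathbb{Z}_2$) equivalence: one must make sure that the ``$2$ choices'' at the cherry are not secretly identified by a global symmetry, and conversely that no planar embedding is overcounted. The safest route is to fix canonical representatives (as the paper already does, with $\sigma(1)=1$ and $\sigma(2)<\sigma(n)$) and phrase the bijection in terms of those representatives, or alternatively to prove the cleaner statement that the number of planar embeddings of $T$ into an \emph{oriented} disk with a \emph{marked} boundary point is $2^{n-2}$, and then divide by the free $\mathbb{Z}_2$ reflection (using that a trivalent tree with $n\ge 3$ leaves has no planar embedding fixed by a reflection composed with a rotation — a small separate check). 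Modulo that care, the argument is a clean induction whose inductive step is the single-edge resolution already described in Section~\ref{newsec}.
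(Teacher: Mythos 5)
Your proof is correct, but it is organized differently from the paper's. The paper's argument is a direct count via the \emph{twist} operation defined immediately before the proposition: each of the $n-3$ internal edges of $T$ carries an involution (reflect one of the two subtrees across that edge), distinct compositions of twists yield distinct planar embeddings and hence distinct color orderings, and every compatible ordering arises this way, giving $2^{n-3}$ at once. You instead give an induction on $n$ by contracting a cherry $\{a,b\}$ to a single leaf and observing that each compatible ordering of the smaller tree lifts in exactly two ways ($\dots ab\dots$ versus $\dots ba\dots$), together with a second, vertex-based count ($2^{n-2}$ embeddings into an oriented marked disk, divided by the free reflection). Both routes are valid and both leave comparable bookkeeping implicit (the paper does not verify that distinct twist subsets give dihedrally inequivalent orderings, and you must verify that the two cherry insertions are inequivalent and that pruning is two-to-one; you at least flag this). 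The main thing the paper's edge-twist formulation buys, and which your induction does not provide, is that it generalizes verbatim to $k=3$: Section~\ref{poletoGCO} reuses exactly this twist structure, with the $2(n-4)$ independent degeneration directions of a GFD playing the role of the $n-3$ internal edges, to argue for Conjecture~\ref{samenumber} that every $(3,n)$ GFD is compatible with $2^{2(n-4)}$ GCOs. A cherry contraction has no clean analogue for arrangements of metric trees, so your argument, while perfectly adequate for the $k=2$ statement itself, would not serve as the template the authors need for the higher-$k$ discussion.
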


Before providing the proof, let us define a useful operation on the planar embeddings of a~tree.

\begin{defn}
Given a tree graph $T$ embedded on a plane, a twist along an internal edge $f$ of $T$ is done by taking one of the two subtrees obtained by deleting $f$ from $T$ and reflecting it along the line defined by $f$. The new embedding of $T$ is said to be a twist of the old one along~$f$.
\end{defn}

Now we can prove Proposition \ref{k2CO}.

\begin{proof}
Any tree, $T$, in a $\phi^3$ theory has $n-3$ internal edges and $n$ leaves. Draw $T$ on a plane and read the order in which the leaves appear. This gives one color ordering. Let us denote it by $\sigma_0$. For each internal edge, there is a twist associated with it. Applying a twist generates a different embedding of $T$ on the plane and hence a different color ordering. The number of all possible compositions of any number of twists is clearly $2^{n-3}$, hence the number of color orderings.
\end{proof}

We would like to generalize the construction above to $(k,n)$ Feynman diagrams. Let us illustrate the procedure for $k=3$.

Consider a $(3,n)$ Feynman diagram, ${\cal T}=(T_1,T_2,\dots ,T_n)$, by definition, ${\cal T}$ is compatible with at least one GCO, $\Sigma_0=(\sigma_1,\sigma_2, \dots ,\sigma_n)$.

Recall that ${\cal R}({\cal T})$ is a rational function in the kinematic invariants $\s_{abc}$. The number of poles in ${\cal R}({\cal T})$ satisfies $n_P\geq 2(n-4)$. Each pole is produced by a one dimensional integral in the space internal lengths along a particular direction. Let $t\in [0,\infty)$ be the parameter along one of such directions. When $t\to \infty$ all internal lengths are either ${\cal O}(t)$ or ${\cal O}(1)$.

Now consider the embedding of each $T_i$ in ${\cal T}$ on a plane according to the $\sigma_i$ ordering. Select one of the $n_P$ possible directions as $t\to \infty$ and twist $T_i$ along the internal edges which are ${\cal O}(t)$. The resulting procedure is a $(3,n)$ twist to ${\cal T}$.

This means that the GFD ${\cal T}$ has $n_P$ twists. However, since the space of internal lengths is~\smash{$\mathbb{R}_+^{2(n-4)}$}, there is a set of $2(n-4)$ twists that generates the rest.

Now we can perform the counting of GCOs associated with ${\cal T}$. For each of the $2(n-4)$ independent twists, one gets a new GCO. The number of all possible compositions of any number of such twists is clearly \smash{$2^{2(n-4)}$}, hence the number of $(3,n)$ color orderings.

This is a strong argument in favor of Conjecture \ref{samenumber}, which states that every $(k=3,n)$ GFD is compatible with \smash{$2^{2(n-4)}$} color orderings.

Let us illustrate the discussion above with a $(3,7)$ GFD which has nine poles. This is an example taken from \cite[Section 3.2]{Borges:2019csl},
 \def\fFD #1,#2,#3,#4,#5,#6 \fFD 	{
 \raisebox{-.85 cm}{
\tikz[scale=.3]{
\draw[]
(0,0)--(3,0)
(0,0)--(-.3,-1.3) node[below]{#1}
(0,0)--(-.3,1.3) node[above]{#2}
(1,0)-- (1,1.3) node[above]{#3}
(2,0)-- (2,1.3) node[above]{#4}
(3,0)--(3.3,1.3) node[above]{#5}
(3,0)--(3.3,-1.3) node[below]{#6}
;
}}
}
\begin{align}\label{T99}
\left(
\fFD 3,4,2,5,6,7 \fFD,\fFD 3,4,1,5,6,7 \fFD,\fFD 1,2,4,7,5,6 \fFD,\fFD 1,2,3,7,5,6 \fFD,\fFD 1,2,7,6,3,4 \fFD,\fFD 1,2,7,5,3,4 \fFD,\fFD 1,2,6,5,3,4 \fFD\right) .
\end{align}
If the internal lengths of each tree diagram in the collection are ordered from left to right, then their expressions can be recorded in a $3\times 7$ matrix with $i^{\rm th}$ column \smash{$\big[f_1^{(i)},f_2^{(i)},f_3^{(i)}\big]^{\mathsf{T}}$},
\[
\left[
\begin{matrix}
 x & x & y & y & p+v & p & p \\
 w & w & r & r & u & u+v & v \\
 v & v & u & u & s & s & s+u \\
\end{matrix}
\right]
\]
with
\begin{gather}\label{conInf}
x + w = r + s + u, \qquad r + y = p + v + w.
\end{gather}

The way to find the directions that correspond to a pole is by taking any of the variables, sending it to infinity, and solving the constraints \eqref{conInf} in all possible ways recalling that all variables must be positive. For example, sending $x\to \infty$ in $x + w = r + s + u$ implies that either~$r$,~$s$, or $u$ must also be sent to infinity. Choosing $r\to \infty$ and using $r + y = p + v + w$ implies that $p$, $v$ or $w$ must be sent to infinity. However, $w$ appears in $x + w = r + s + u$ invalidating the choice. This means that the directions defined by $\{ x, r, p \}$ or $ \{x, r, v\}$ give rise to poles.
A~short exercise reveals only nine possibilities, as expected,\footnote{Sending several internal lengths to infinity is equivalent to shrinking the other internal lengths to zero. Hence the nine directions defined by \eqref{twistingransformationsset}
can also be obtained by finding out all dim-1 degenerations of the GFD, which is another way to get the poles of the amplitudes. See more details in \cite{Guevara:2020lek}.}
\[
\{ \{x, r, p\}, \{x, r, v\}, \{x, s\}, \{x, u\}, \{w, r\}, \{w, s, y\}, \{w, u, y\}, \{y, p\}, \{y, v\} \}.
\]
Now we can think of each variable as defining a $k=2$ twist on the trees where it appears. It is clear that any such twist squares to the identity, i.e., $x^2=\mathbb{I}$, $y^2=\mathbb{I}$, etc. Moreover, any two commute, i.e., $xy=yx$. Each of the $n_P=9$ allowed directions becomes a valid $(3,7)$ twist transformation on the $(3,7)$ color ordering. So, we have
\be
\label{twistingransformationsset}
\{ xrp, xrv, xs, xu, wr, wsy, wuy, yp, yv\}.
\ee
It is easy to show that these nine are in fact generated by only six,
\be\label{twist99}
\{ xrp, xs, xu, wr, yp, yv\}.
\ee
For example, $xrv = (xrp)(yp)(yv)$.

Now, using that the GFD \eqref{T99} is compatible with the $(3,7)$ color ordering that descends from $(1234567)$, one can apply any combination of the operations in \eqref{twist99} to produce a total of~$2^6=64$ GCOs.

In this section, we have explained an efficient way to find all compatible GCOs for a given GFD. As explained at the end of Section~\ref{38GFD1}, once this is done, it is easy to carry out the opposite procedure, i.e., finding all GFDs compatible with a given GCO.

We emphasize here again that all $(3,8)$ partial amplitudes were computed using these techniques and they are consistent with those computed from CEGM integrals \cite{Cachazo:2023ltw}, which in turn provides more support for Conjecture~\ref{samenumber}.

\section[(3,n) minimal scalar amplitudes]{$\boldsymbol{(3,n)}$ minimal scalar amplitudes}\label{sec8}

The standard biadjoint theory can be thought of as a theory of multiple massless scalar fields with interactions constrained by a Lie algebraic structure. A theory with a simpler lagrangian is obtained by considering a single massless scalar field $\phi$ and no Lie algebra structure. In addition, one can set the interaction to be the simplest non-trivial possible one, i.e., a $\phi^3$ term. We call this the $k=2$ minimal scalar theory.

The tree-level amplitudes of this minimal scalar theory are computed by summing over all Feynman diagrams with cubic interactions. For $n$ external points, there are $(2n-5)!!$ such diagrams. It is natural to ask whether there is a way of obtaining \smash{$m_{n}^{(2)\,\min }$} from \smash{$m_{n}^{(2)}(\alpha,\beta )$}. In~\cite{Dolan:2014ega}, Dolan and Goddard noticed that
\be\label{DGdef}
m_{n}^{(2)\, \min } = \frac{1}{2^{n-3}}\sum_{\sigma \in {\rm CO}_{2,n}}m_{n}^{(2)}(\sigma,\sigma),
\ee
where ${\rm CO}_{2,n}$ is the set of all $(2,n)$ color orderings.

In this section, we propose two independent definitions for \smash{$m_{n}^{(3)\, \min}$} which become equivalent assuming Conjecture \ref{samenumber}. The first is as a sum over all $(3,n)$ GFDs while the second is the analog of \eqref{DGdef},
\be\label{ourDef}
m_{n}^{(3)\, \min } = \frac{1}{2^{2(n-4)}}\sum_{\Sigma \in {\rm CO}_{3,n}}m_{n}^{(3)}(\Sigma,\Sigma).
\ee

Recall that Conjecture \ref{samenumber} states that each $(3,n)$ GFD is compatible with exactly $2^{2(n-4)}$ color orderings and so each GFD appears the same number of times in the sum \eqref{ourDef}.

In the remaining of this section we study properties of \smash{$m_{n}^{(3)\, \min }$} for $n=6,7$ which show why it is the correct definition.

\subsection[(n-5)-dimensional residues]{$\boldsymbol{(n-5)}$-dimensional residues}
In \cite{Cachazo:2022vuo}, a simple but surprising connection between \smash{$m^{(3)}_n(\Sigma_0,\Sigma_0)$}, with $\Sigma_0$ the $(3,n)$ color ordering that descends from $\sigma_0 = (1,2,\dots ,n)$, and \smash{$m^{(2)}_n(\sigma_0,\sigma_0)$} was proposed. The proposal is that certain $(n-5)$-dimensional residues of \smash{$m^{(k)}_n(\Sigma_0,\Sigma_0)$} are equal to \smash{$m^{(2)}_n(\sigma_0,\sigma_0)$}.

For $(3,6)$, any residue of \smash{$m^{(3)}_6(\Sigma_0,\Sigma_0)$} at a poles of the form $1/\s_{a,a+1,a+2}$ leads to the $14$ Feynman diagrams that compute \smash{$m^{(2)}_6(\sigma_0,\sigma_0)$}, while for $(3,7)$, a two-dimensional residue defined by the zeroes of $\{ \s_{a,a+1,a+2},\t_{a-1,a,a+1,a+2}\} $ lead to the $42$ Feynman diagrams in \smash{$m^{(2)}_7(\sigma_0,\sigma_0)$}.

Using our definition, \eqref{ourDef}, we found that the residue of \smash{$m_{6}^{(3)\, \min }$} at any pole of the form $1/s_{abc}$, gives rise to the sum over $105$ Feynman diagrams which reproduces \smash{$m_{6}^{(2)\, \min }$}. Even more surprising is that for $n=7$ any two-dimensional residue defined by $\{ \s_{abc},\t_{abcd}\}$ of~\smash{$m_{7}^{(3)\, \min }$} gives rise to the sum over the $945$ Feynman diagrams which reproduces \smash{$m_{7}^{(2)\, \min }$}.

We have checked that if the sum in \eqref{ourDef} were replaced by any subset of color orderings, for example, by only those of type 0, then the residues would not agree with the corresponding~$m_{n}^{(2)\, \min }$ amplitudes.

\subsection{Comments on residues}

One of the most striking properties of \smash{$m^{(3)}_6(\Sigma_0,\Sigma_0)$} is that it contains a new class of poles with no direct analog in $k=2$ amplitudes,
the so-called $\sfR$-pole. Just as the pole is novel to $k=3$, so is the behavior of the residue of \smash{$m^{(3)}_6(\Sigma_0,\Sigma_0)$} at $\sfR=0$. The residue at $\sfR=0$ becomes the product of three \smash{$m^{(2)}_4(\sigma_0,\sigma_0)$} amplitudes. This $3$-split is achieved in codimension one and hence the novelty. Note that \smash{$m^{(2)}_4(\sigma_0,\sigma_0)$} only has two terms (two planar Feynman diagrams) and hence the straightforward computation of the residue from the GFDs gives rise to eight contributions whose sum factors into three amplitudes.

It turns out that \smash{$m_{6}^{(3)\, \min }$} exhibits an even more surprising behavior. The residue where some $\sfR_{a_1a_2,b_1b_2,c_1c_2}=0$ is product of three \smash{$m_{4}^{(2)\, \min }$} amplitudes, each is made out of three terms. This means that $3^3=27$ contributions from GFDs conspire to perfectly produce the factorization.

Once again, this behavior is not observed if any proper subset of color orderings is used in the definition. Note that this is a non-trivial statement, as a sum over all color orderings of a~given type would be permutation invariant and hence a reasonable object by itself.

In light of the recent work \cite{Early:2022mdn} by one of the authors on factorization for the standard globally planar CEGM amplitudes, the behavior of \smash{$m^{(3)}_6$} on the ``$\sfR$ pole'' is expected to generalize very beautifully to higher $k$. Many mysteries remain unanswered about factorization; most relevantly, these include investigating residues of the CEGM amplitudes considered in this work. Such questions are left to the future.

\section[Higher k color orderings]{Higher $\boldsymbol{k}$ color orderings}\label{sechigherkGCO}

In this paper, the main focus is on $k=3$ color orderings
but they can be straightforwardly generalized to higher $k$.

\begin{defn}\label{GCOkkkk}
A $(k,n)$ generalized {\it color ordering} is an ${n\choose k-2}$-tuple
\[
\Sigma^{[k]} = \big\{ \sigma^{(i_1,i_2,\dots,i_{k-2})}\mid \{i_1,\dots, i_{k-2}\} \subset [n] \big\},
\]
where $\sigma^{(i_1,i_2,\dots,i_{k-2})}$ is a $(2,n-k+2)$ color ordering constructed as follows.

Let
$\{ H_{1},H_{2},\dots ,H_{n} \}$
be an arrangement of $n$ projective $({k-}2)$-planes in generic position in~$\mathbb{RP}^{(k-1)}$. Intersecting any $(k-2)$ such $H$'s, $\{ H_{i_1},H_{i_2},\dots,H_{i_{k-2}}\! \}$, produces a line, $L^{(i_1,i_2,\dots,i_{k-2})}$. The line so defined intersects the remaining $(n{-}k{+}2)$ $H$'s each on a point, resulting in a sequence of points on the line which defines a $(2,n{-}k{+}2)$
color ordering $\sigma^{(i_1,i_2,\dots, i_{k-2})}$.
\end{defn}

 By definition, removing a $(k{-}2)$-plane, say $H_{i}$, from the arrangement with $n>k+2$ must result in another arrangement but with $(n-1)$ $(k{-}2)$-planes. Therefore, the operation must give a $(k,n-1)$ color ordering. This
 generalizes the $k$-preserving projection given in \eqref{colorprojection},
\be
\label{projectionkkkk}
\pi_i\bigl(\Sigma ^{[k]}\bigr): = \bigl\{ \pi_i\bigl(
\sigma^{(i_1,i_2,\dots,i_{k-2})}
\bigr)
\mid \{i_1,\dots, i_{k-2}\} \subset [n] \setminus \{i\}
\bigr\}.
\ee

On the other hand, in Definition \ref{GCOkkkk}, we chose to construct $\Sigma^{[k]}$ out of $(2,n{-}k{+}2)$ color orderings. However, it is sometimes convenient to
note that since each $H_i$ is an $\mathbb{RP}^{k-2}\subset \mathbb{RP}^{k-1}$, then $H_i\cap H_j$ is an $({k-}3)$-plane for all $j\in [n]\setminus \{i\}$. This means that we have an arrangement of $n-1$ $({k-}3)$-planes in $\mathbb{RP}^{k-1}$, i.e., a $(k{-}1,n-1)$ color ordering, which we call a $k$-decreasing projection
\begin{gather}
\label{component}
\pi_{(i)}\big(\Sigma ^{[k]}\big)
\equiv \big\{ \sigma^{(i,i_2,\dots,i_{k-2})}\mid \{i_2,\dots, i_{k-2}\} \subset [n] \setminus \{i\} \big\}.
\end{gather}
Clearly, we have\footnote{The union implies that duplicates are not included.}
\begin{gather}
\label{columnnnn}
\Sigma^{[k]}=\bigcup_{i=1}^n
\pi_{(i)}\big(\Sigma ^{[k]}\big).
\end{gather}

Let us also extend the notion of descendant. For any $(k,n)$, there is always a set of generalized color orderings
with a very special property.
\begin{defn}\label{globalGCO2}
A $(k,n)$ generalized color ordering $\Sigma^{[k]} = \bigl\{ \sigma^{(i_1,i_2,\dots,i_{k-2})}\mid \{i_1,\dots, i_{k-2}\} \subset [n] \bigr\}$ is said to descend from a $(2,n)$ color ordering $\sigma$ if $\sigma^{(i_1,i_2,\dots,i_{k-2})}=\pi_{i_1}\pi_{i_2}\cdots \pi_{i_{k-2}}(\sigma)$. We also say that $\Sigma^{[k]}$ is a descendant of $\sigma$.
\end{defn}

\subsection{General pseudo-GCOs}

Equations \eqref{columnnnn} and \eqref{projectionkkkk} suggest a recursive way to get the $(k,n)$ color ordering and similar to $k=3$ pseudo-GCO defined just below Theorem \ref{pseudoLines}, we can define a general $k$ pseudo-GCO.

\begin{defn}\label{generalpseudoGCO}
 A ${n \choose k-2}$-tuple of standard color orderings with $n>k+2$ is said to be a~$(k,n)$ pseudo-GCO if all its $k$-preserving projections are $(k,n-1)$ pseudo-GCOs, while $(k,k+2)$ pseudo-GCOs are all descendants of $(2,k+2)$ color orderings.
\end{defn}

Just as an arrangement of $n$ projective $(k-2)$-planes for a GCO, one can also find an arrangement of $n$ projective $(k-2)$-pseudo-planes for a pseudo-GCO. If there are
$k$ such pseudo-planes intersecting at the same point
when all such pseudo-planes are straightened, we call the corresponding
pseudo-GCO as a non-realizable one. Otherwise, it is a GCO.

One can also define the $k$-decreasing projection of a pseudo-GCO just as \eqref{columnnnn}.
\begin{thm}
\label{theoremcondition1}
Each $k$-decreasing projection of pseudo-GCO is also a pseudo-GCO.
\end{thm}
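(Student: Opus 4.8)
The plan is to prove Theorem~\ref{theoremcondition1} by induction on $n$, using the recursive structure built into Definition~\ref{generalpseudoGCO}. Fix a pseudo-GCO $\Sigma^{[k]}$ on $[n]$ with $n>k+2$ and fix an index $i$; we must show that the $(k-1,n-1)$-tuple $\pi_{(i)}\big(\Sigma^{[k]}\big)$ defined in \eqref{component} is a $(k-1,n-1)$ pseudo-GCO. By Definition~\ref{generalpseudoGCO}, this amounts to checking two things: first, that every $(k-1)$-preserving projection $\pi_j$ of $\pi_{(i)}\big(\Sigma^{[k]}\big)$ (for $j\in[n]\setminus\{i\}$) is a $(k-1,n-2)$ pseudo-GCO, and second, in the base case, that when $n-1=(k-1)+2$, i.e.\ $n=k+2$, the tuple $\pi_{(i)}\big(\Sigma^{[k]}\big)$ descends from a $(2,k+1)$ color ordering. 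Wait — $n>k+2$ is assumed, so the base of the induction on $n$ should be taken at $n=k+3$, where $n-1=k+2$ and one must verify $\pi_{(i)}\big(\Sigma^{[k]}\big)$ is a descendant of a $(2,k+2)$ color ordering (the base case of the $(k-1)$-pseudo-GCO recursion is $(k-1,(k-1)+2)=(k-1,k+1)$, so actually we need $n-1\ge (k-1)+2$, i.e.\ $n\ge k+2$, consistent with the hypothesis).

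The key commutation identity that makes everything work is that the $k$-decreasing projection $\pi_{(i)}$ and the $k$-preserving projection $\pi_j$ commute in the appropriate sense:
\begin{equation*}
\pi_j\big(\pi_{(i)}\big(\Sigma^{[k]}\big)\big) = \pi_{(i)}\big(\pi_j\big(\Sigma^{[k]}\big)\big),\qquad j\in[n]\setminus\{i\}.
\end{equation*}
This is a purely combinatorial statement: the left side collects the orderings $\sigma^{(i,i_2,\dots,i_{k-2})}$ with $j\notin\{i_2,\dots,i_{k-2}\}$ and then deletes the label $j$ from each; the right side deletes the $j$-th constituent orderings from $\Sigma^{[k]}$, deletes the label $j$ everywhere, and then selects those indexed by a tuple containing $i$. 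Tracking the index sets shows both produce the $\binom{n-2}{k-3}$-tuple $\big\{\pi_j\big(\sigma^{(i,i_2,\dots,i_{k-2})}\big)\mid \{i_2,\dots,i_{k-2}\}\subset[n]\setminus\{i,j\}\big\}$. Granting this identity: since $\Sigma^{[k]}$ is a $(k,n)$ pseudo-GCO, $\pi_j\big(\Sigma^{[k]}\big)$ is a $(k,n-1)$ pseudo-GCO by definition, and by the inductive hypothesis (applied at $n-1<n$) its $k$-decreasing projection $\pi_{(i)}\big(\pi_j\big(\Sigma^{[k]}\big)\big)$ is a $(k-1,n-2)$ pseudo-GCO; by the commutation identity this equals $\pi_j\big(\pi_{(i)}\big(\Sigma^{[k]}\big)\big)$, giving exactly the first required property. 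For the base case $n=k+2$ (where $\Sigma^{[k]}$ is itself a descendant of some $(2,k+2)$ color ordering $\sigma$), one checks directly from Definition~\ref{globalGCO2} that $\pi_{(i)}\big(\Sigma^{[k]}\big)$ consists of the orderings $\pi_i\pi_{i_2}\cdots\pi_{i_{k-2}}(\sigma)=\pi_{i_2}\cdots\pi_{i_{k-2}}\big(\pi_i(\sigma)\big)$, so it descends from the $(2,k+1)$ color ordering $\pi_i(\sigma)$, hence is a $(k-1,k+1)$ pseudo-GCO.

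The main obstacle, and the step that deserves genuine care, is establishing the commutation identity $\pi_j\circ\pi_{(i)}=\pi_{(i)}\circ\pi_j$ cleanly at the level of the index bookkeeping, together with confirming that the ranges of constituent $(2,\cdot)$ color orderings match up (the $(k-1,n-1)$ pseudo-GCO $\pi_{(i)}\big(\Sigma^{[k]}\big)$ should consist of $(2,(n-1)-(k-1)+2)=(2,n-k+2)$ color orderings, and indeed each $\sigma^{(i,i_2,\dots,i_{k-2})}$ in $\Sigma^{[k]}$ is already a $(2,n-k+2)$ color ordering, so no relabeling of sizes is needed — a reassuring consistency check). A secondary subtlety is the geometric remark: one should note that the realizability dichotomy (realizable GCO vs.\ non-realizable pseudo-GCO) is not claimed to be preserved — only the weaker pseudo-GCO property is, which is precisely why the theorem is phrased for pseudo-GCOs. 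Once the commutation lemma is in hand the induction closes immediately, so I would devote essentially all of the write-up to a careful, explicit verification of that lemma and of the base case.
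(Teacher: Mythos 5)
Your proof is correct and follows essentially the same route as the paper's: induction on $n$ with the commutation identity $\pi_j\circ\pi_{(i)}=\pi_{(i)}\circ\pi_j$ as the key step, reducing each $k$-preserving projection of $\pi_{(i)}\big(\Sigma^{[k]}\big)$ to a $k$-decreasing projection of the pseudo-GCO $\pi_j\big(\Sigma^{[k]}\big)$ covered by the inductive hypothesis. Your treatment of the base case $n=k+2$ (checking that the $k$-decreasing projection of a descendant of $\sigma$ descends from $\pi_i(\sigma)$) is more explicit than the paper's, which simply asserts it as obvious.
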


\begin{proof}
Let's denote an $n$-pt pseudo-GCO as $\Sigma $ and its $k$-decreasing projection as $\pi_{(j)}(\Sigma)$ for any $j\in [n]$.
Obviously, the theorem is true for $n=k+2$.
Suppose it's true for $(n-1)$-pts, which means any $k$-decreasing projection $
\pi_{(j)}\left(
\pi_i\left(\Sigma \right)
\right)
$ with $j\neq i$ of the $k$-preserving projection~$
\pi_i\left(\Sigma \right)
$ is a~pseudo-GCO. Note that the $k$-decreasing projection
$
\pi_{(j)}\left(
\pi_i\left(\Sigma \right)
\right) =
\pi_i\left(\pi_{(j)}(\Sigma)
\right)
$
is also a~$k$-preserving projection of $\pi_{(j)}(\Sigma) $. Hence any $k$-preserving projection of $\pi_{(j)}(\Sigma)$ is a pseudo-GCO, which means $\pi_{(j)}(\Sigma) $ itself is also a pseudo-GCO according to Definition~\ref{generalpseudoGCO}.
By mathematical induction, we proved the theorem.
\end{proof}

\begin{thm}
\label{theoremcondition2}
An ${n \choose k-2}$-tuple of standard color orderings with $k\geq4$ is a pseudo-GCO if any of its $k$-preserving projection is a pseudo-GCO.
\end{thm}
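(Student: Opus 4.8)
The plan is to prove the statement by induction on $n$, keeping $k\ge4$ fixed, exactly in the spirit of the proof of Theorem~\ref{theoremcondition1}: the only structural inputs are the recursive Definition~\ref{generalpseudoGCO}, the fact that distinct $k$-preserving projections commute, $\pi_\ell\pi_i=\pi_i\pi_\ell$, and the intertwining $\pi_i\pi_{(m)}=\pi_{(m)}\pi_i$ between $k$-preserving and $k$-decreasing projections \eqref{component}. Recall that, by Definition~\ref{generalpseudoGCO}, ``$\Sigma$ is a $(k,n)$ pseudo-GCO'' unwinds to the requirement that every one of its $n$ $k$-preserving projections be a $(k,n-1)$ pseudo-GCO, so the content of the theorem is precisely that, for $k\ge4$, a single such projection forces all the others.

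For the inductive step, take $n\ge k+4$ and assume the theorem for all tuple sizes $n'$ with $k+2<n'<n$. Let $\Sigma$ be an $\binom{n}{k-2}$-tuple with $\pi_i(\Sigma)$ a $(k,n-1)$ pseudo-GCO. We must show $\pi_\ell(\Sigma)$ is a $(k,n-1)$ pseudo-GCO for each $\ell$; for $\ell=i$ this is the hypothesis. For $\ell\ne i$, observe that $\pi_i\big(\pi_\ell(\Sigma)\big)=\pi_\ell\big(\pi_i(\Sigma)\big)$ is one of the $k$-preserving projections of the pseudo-GCO $\pi_i(\Sigma)$; since $n-1>k+2$, Definition~\ref{generalpseudoGCO} applies to $\pi_i(\Sigma)$ and gives that this projection is itself a $(k,n-2)$ pseudo-GCO. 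Hence the $(k,n-1)$-tuple $\pi_\ell(\Sigma)$ possesses a $k$-preserving projection that is a pseudo-GCO, and the inductive hypothesis applied to $\pi_\ell(\Sigma)$ yields that $\pi_\ell(\Sigma)$ is a $(k,n-1)$ pseudo-GCO. As $\ell$ is arbitrary, $\Sigma$ is a $(k,n)$ pseudo-GCO.

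The base case is $n=k+3$, where the recursion bottoms out and the trick above fails (a further projection of $\pi_\ell(\Sigma)$ lands on a $(k,k+1)$-tuple, a vacuous level). Here ``$\pi_i(\Sigma)$ is a pseudo-GCO'' means $\pi_i(\Sigma)$ is a descendant of some $(2,k+2)$ color ordering $\sigma_i$ (Definition~\ref{globalGCO2}), and one must show directly that each $\pi_\ell(\Sigma)$ is a descendant of a $(2,k+2)$ color ordering. The plan is to pass to the $k$-decreasing projections: for $m\ne i$, $\pi_i\big(\pi_{(m)}(\Sigma)\big)=\pi_{(m)}\big(\pi_i(\Sigma)\big)$ is the descendant of $\pi_m(\sigma_i)$, hence a $(k-1,k+1)$ pseudo-GCO; since $\pi_{(m)}(\Sigma)$ is a $(k-1,k+2)$-tuple, when $k-1\ge4$ the theorem at level $k-1$ (its own base case) upgrades this to ``$\pi_{(m)}(\Sigma)$ is a $(k-1,k+2)$ pseudo-GCO'', and feeding this back through the decomposition $\Sigma=\bigcup_m\pi_{(m)}(\Sigma)$ of~\eqref{columnnnn} delivers the claim for $\Sigma$. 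The residual case $k=4$ (hence $n=7$), together with the ``diagonal'' piece $\pi_{(i)}(\Sigma)$ which this argument does not reach, must be settled by hand: the descendant pattern on $\pi_i(\Sigma)$ fixes the position of the label $i$ in every component $\sigma^{(S)}$ with $i\notin S$, and pins down every component $\sigma^{(S)}$ with $i\in S$, and because $k-2\ge2$ these determinations overlap on enough shared labels to be mutually coherent. When $k=3$ one has $k-2=1$, no such overlap exists, and the conclusion genuinely fails — which is exactly why the hypothesis $k\ge4$ is imposed.

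I expect the base case $n=k+3$, and inside it the bottom level $k=4$, to be the main obstacle. The inductive step is a purely formal manipulation of commuting projections; all the real work is in showing that the single descendant $\pi_i(\Sigma)$ leaves no freedom in the remaining data of $\Sigma$, and I expect this to reduce to an elementary but error-prone piece of combinatorics about reinserting one label into $(2,n-k+2)=(2,5)$ color orderings and tracking its consistency across the $\binom{k+3}{k-2}$ components. It is also there that one must be careful about precisely which consistency among the components of a ``tuple of standard color orderings'' is being assumed, since that is the one place where the rank-$3$ versus rank-$\ge4$ dichotomy bites.
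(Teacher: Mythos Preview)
Your inductive step for $n\ge k+4$ is exactly the commuting-projections manoeuvre the paper has in mind when it writes ``the proof is similar''. The problem is entirely at the base case $n=k+3$, where your argument breaks down --- and in fact the statement, read as ``a \emph{single} $k$-preserving projection being pseudo forces $\Sigma$ to be pseudo'', is false. The key misstep is the assertion that $\pi_i(\Sigma)$ ``pins down every component $\sigma^{(S)}$ with $i\in S$'': by~\eqref{projectionkkkk} the operator $\pi_i$ simply \emph{discards} those components, so they are completely unconstrained by the hypothesis. Concretely, for $k=4$, $n=7$, $i=7$, take $\pi_7(\Sigma)$ to be the descendant of $(123456)$ and then choose the six free $(2,5)$ orderings $\sigma^{(a,7)}$ so that, say, $\pi_1\big(\sigma^{(2,7)}\big)$ and $\pi_1\big(\sigma^{(3,7)}\big)$ place $\{4,5,6\}$ in incompatible cyclic orders; then $\pi_1(\Sigma)$ cannot be the descendant of any $(2,6)$ ordering and $\Sigma$ is not a pseudo-GCO. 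Your $k$-decreasing detour does not rescue this either: at $k=4$ it lands at $k-1=3$, where the theorem is unavailable, and at $k\ge5$ it lands on the base case $(k-1,(k-1)+3)$ of the lower level, so the recursion never grounds out.

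The resolution is that ``any'' here should not be read as ``some''. The application in Section~\ref{Bootstrappingk4gco} uses Theorems~\ref{theoremcondition1} and~\ref{theoremcondition2} together to prove that conditions~(a) and~(b) are equivalent; since (a) is Definition~\ref{generalpseudoGCO} itself and Theorem~\ref{theoremcondition1} already gives (a)$\Rightarrow$(b), the intended content of Theorem~\ref{theoremcondition2} is (b)$\Rightarrow$(a): if every $k$-\emph{decreasing} projection $\pi_{(j)}(\Sigma)$ is a $(k-1,n-1)$ pseudo-GCO, then $\Sigma$ is a pseudo-GCO (the hypothesis $k\ge4$ guarantees $k-1\ge3$ so that this condition is nonvacuous). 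Under that reading the proof really is parallel to Theorem~\ref{theoremcondition1}: each $k$-decreasing projection of $\pi_i(\Sigma)$ is $\pi_{(j)}\big(\pi_i(\Sigma)\big)=\pi_i\big(\pi_{(j)}(\Sigma)\big)$, a $k$-preserving projection of a pseudo-GCO and hence pseudo by Definition~\ref{generalpseudoGCO}; the inductive hypothesis then makes $\pi_i(\Sigma)$ a pseudo-GCO for every $i$, and no separate base-case analysis is needed.
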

The proof is similar.

\subsection[Duality between (k,n) and (n-k,n) GCOs]{Duality between $\boldsymbol{(k,n)}$ and $\boldsymbol{(n-k,n)}$ GCOs}

Obviously, all $(n-2,n)$ GCOs are just descendants of $(2,n)$ GCOs, based on which we can construct the general duality between $(k,n)$ and $(n-k,n)$ GCOs. Given any $(k,n)$ GCO $\Sigma^{[k]}$, we can get its dual $(n-k,n)$ GCO $\Sigma'^{[n-k]}$ by
\ba
\label{dualityeq}
\Sigma^{[k]} \sim \Sigma'^{[n-k]}=
\bigl\{
{\rm Dual}\big( \pi_{i_1} \pi_{i_2} \cdots \pi_{i_{n-k-2}}
\Sigma^{[k]} \big)
\mid \{i_1,i_2,\dots,i_{n-k-2}\} \subset [n]
\bigr\},
\ea
where we perform a projection of $n-k-2$ labels of $\Sigma^{[k]}$ first, leading to a $(k,k+2)$ GCO, \smash{$\pi_{i_1} \pi_{i_2} \cdots \pi_{i_{n-k-2}}
\Sigma^{[k]} $}, whose dual GCO \smash{${\rm Dual}\big( \pi_{i_1} \pi_{i_2} \cdots \pi_{i_{n-k-2}}
\Sigma^{[k]}\big)$} is just a $(k+2)$-pt standard color ordering \smash{$\sigma^{(i_1i_2,\dots, i_{n-k-2})}$}, which constitutes the dual $(n-k,n)$ GCO \smash{$\Sigma'^{[n-k]}$}.

The dual of non-realizable pseudo-GCO is also a non-realizable pseudo-GCO and can be obtained similarly.

The duality \eqref{dualityeq} can be easily proved by mathematical induction.
\begin{proof}
First, we want to prove \eqref{dualityeq} holds at the level of pseudo-GCOs.
Obviously, it holds for $n=k+2$.
Supposing the duality already holds for $(n-1)$-pts,
i.e.,
\begin{gather*}
\pi_{j}\big(\Sigma^{[k]}\big) \sim \pi_{j}\big(\Sigma'^{[n-k]}\big)\\
\qquad=
\bigl\{
{\rm Dual}\big( \pi_{i_1} \pi_{i_2} \cdots \pi_{i_{n-k-3}}\pi_{j}
\Sigma^{[k]} \big)
\mid \{i_1,i_2,\dots,i_{n-k-3}\} \subset [n] \setminus \{j\}
\bigr\},
\qquad
\forall j\in [n],
\end{gather*}
which means any projection of $\Sigma'^{[n-k]}$ is a valid $(k,n-1)$ pseudo-GCO and hence confirms that $\Sigma'^{[n-k]}$ is pseudo-GCO. Since all $(k,n)$ and all $(n-k,n)$ pseudo-GCOs are in bijection and the dual of non-realizable pseudo-GCO is also a non-realizable pseudo-GCO, we prove the duality~\eqref{dualityeq} holds at the level of GCOs.
\end{proof}

Based on \eqref{dualityeq}, it's clear to see a $k$-preserving projection of $\Sigma^{[k]}$ is dual to a $k$-decreasing projection of its dual $\Sigma'^{[n-k]}$,
\[
\pi_{j}\big(\Sigma^{[k]} \big) \sim \pi_{(j)}\big(\Sigma'^{[n-k]} \big),\qquad \pi_{j}\big(\Sigma'{}^{[n-k]}\big) \sim \pi_{(j)}\big(\Sigma^{[k]}\big) .
\]

The first non-trivial duality is the one between $(3,6)$ GCOs themselves where a GCO of any type is dual to another GCO of the same type.

The duality allows us to get all $(4,7)$ GCOs for free from those of $(3,7)$ GCOs. Here we list a $(4,7)$ GCO as an example,
\ba
\label{eg47}
&((3 5 7 4 6),(2 4 6 7 5),(2 6 3 5 7),(2 3 6 7 4),(2 4 3 7 5),(2 3 6 5 4),(1 4 6 7 5),
\nonumber
\\
&(1 6 3 5 7),(1 3 6 7 4),(1 4 3 7 5),(1 3 6 5 4),(1 2 5 7 6),(1 2 6 7 4),(1 2 7 5 4),
\nonumber
\\
&(1 2 6 5 4),(1 2 3 7 6),(1 2 7 5 3),(1 2 3 5 6),(1 2 3 4 7),(1 2 3 4 6),(1 2 3 4 5)),
\ea
which is dual to the last $(3,7)$ GCO of type X in Table~\ref{table37}. We see the projection of the $(3,7)$ GCO with respect to both 1 and 2 is $((4 6 5 7),(3 5 7 6),(3 6 4 7),(3 4 7 5),(3 5 4 6))$, which is dual the first entry of \eqref{eg47}, $(3 5 7 4 6)$. The dual GCOs for the remaining types are put in the ancillary file.

The $(4,8)$ GCOs are dual to themselves, so we have to work them out independently, which is explained in the next subsection.

\subsection[Bootstrapping k=4 pseudo-GCOs]{Bootstrapping $\boldsymbol{k=4}$ pseudo-GCOs}\label{Bootstrappingk4gco}

The recursive definition of the pseudo-GCOs strongly suggests we use a bootstrap method to generate them. Indeed we reproduced all $(4,7)$ GCOs and generated all $(4,8)$ pseudo-GCOs.

Let us illustrate the idea by starting with an ansatz of $(4,n)$ color orderings of the form,
\[
\Lambda^{[4]}=
\big\{
\lambda^{(1,2)}(34\cdots n), \lambda^{(1,3)}(24\cdots n), \dots, \lambda^{(n-1,n)}(12\cdots n-2)
\big\},
\]
where $\lambda^{(i,j)}$ is a $(2,n-2)$ color ordering. Also, note that we reserve the notation $\Sigma$ and $\sigma$ for valid color orderings and so we have used $\Lambda$ and $\lambda$ for the ansatz.

In \eqref{component}
and \eqref{projectionkkkk}, we have defined two operations which act on generalized color orderings but they can obviously be generalized to act on any ansatz as well. To explain the operations more intuitively,
it is useful to present $\Lambda^{[4]}$ in a slightly redundant way as an $n\times n$ symmetric matrix,
\be
\label{matrixform}
\Lambda^{[4]}\sim
\begin{pmatrix}
0& \lambda^{(2,1)} & \lambda^{(3,1)} & \cdots & \lambda^{(n,1)} \\
\lambda^{(1,2)} & 0 & \lambda^{(3,2)} & \cdots & \lambda^{(n,2)} \\
\lambda^{(1,3)} & \lambda^{(2,3)} & 0 & \cdots & \lambda^{(n,3)} \\
\vdots & \vdots &\vdots &\ddots &\vdots
\\
\lambda^{(1,n)} & \lambda^{(2,n)} & \lambda^{(3,n)} & \cdots & 0
\end{pmatrix},
\ee
where $\lambda^{(i,j)} =\lambda^{(j,i)}$ and we have suppressed the dependence on the $n-2$ labels.

Now one can interpret \eqref{matrixform} as a collection of $k=3$ GCOs, with \smash{$\pi_{(i)}(\Lambda )$} corresponding to the $i^{\rm th}$ row or column of
\eqref{matrixform}, while its projection $\pi_i\left(\Lambda \right)$ corresponds to an $(n-1)\times (n-1)$ submatrix obtained by deleting the $i^{\rm th}$ row and column of \eqref{matrixform} and projecting out all label $i$ in the remaining submatrix.

Now one can impose the following two conditions on the ansatz $\Lambda ^{[4]}$ to get a candidate for a~color ordering:
\begin{enumerate}\itemsep=0pt
 \item[(a)] Its $k$-preserving projection $\pi_i\left(\Lambda\right) $
 for any label $i$
is a valid $(4,n-1)$ pseudo-GCO.
 \item[(b)] Its $k$-decreasing projection $\pi_{(j)}(\Lambda) $ for any label $j$
is a valid $(3,n-1)$ pseudo-GCO.
\end{enumerate}

According to Theorems \ref{theoremcondition1}
 and \ref{theoremcondition2}, there two conditions are equivalent. In practice, the second kind provides a faster way to find out candidates.

For $(4,7)$ color orderings, we start with an ansatz of a $7\times 7$ matrix in the form \eqref{matrixform}.
Requiring that each of its row or column corresponds to a valid $(3,6)$ color ordering leads to exactly $27\, 240$ choices of
\[
\big\{\lambda^{(1,2)},\lambda^{(1,3)},\dots, \lambda^{(6,7)}\big\}.
\]
 Hence we reproduced all $(4,7)$ GCOs. The $27\, 240$ $(4,7)$ color orderings fall into 11 types as expected.

The evaluation of
bootstrapping $(4,8)$ pseudo-GCOs is
combinatorially more involved but does not pose any conceptual challenges. We got 2628 types of pseudo-GCOs in total, which is consistent with the results in the literature \cite{fukuda2013complete} where they claim there are 2,604 types of realizable uniform matroids and 24 types of non-realizable ones.
Comparing results from both sides, we distinguish the realizable and non-realizable pseudo-GCOs and we present them in the ancillary files separately.
The
permutations of 2,604 GCOs give 100 086 840 distinct ones in total. We tally the numbers of their permutations here,
 \begin{gather*}
\{\{1680, 2\}, \{2520, 1\}, \{3360, 1\}, \{5040, 6\}, \{6720, 3\}, \{10080, 16\},\\ \qquad \{13440, 10\}, \{20160, 183\}, \{40320, 2382\}\},
\end{gather*}
 e.g., there are
two types with 1680 elements, i.e., with a symmetry group of order 24. Similarly, we present the numbers of distinct permutations of 24 non-realizable pseudo-GCOs here,
 \begin{gather*}
 \{ \{1680, 1\}, \{3360, 3\}, \{5040, 3\}, \{6720, 3\}, \{10080, 5\},\\
 \qquad \{13440, 3\},
\{20160, 3\}, \{40320, 3\} \} ,
\end{gather*}
which gives 319 200 distinct non-realizable pseudo-GCOs in total.

The 2604 types of $(4,8)$ GCOs are dual to themselves, which is a consistency check of our results. We mention that a $(4,8)$ GCO might be dual to another GCO of a different type. The~24 types of non-realizable pseudo-GCOs are also dual to themselves.

\section[Higher k Feynman diagrams]{Higher $\boldsymbol{k}$ Feynman diagrams}\label{sechigherkGFD}

In parallel to the above section, we discuss the $k$ Feynman diagrams here.
The generalization of Definition~\ref{metricTreeArrangement} is straightforward.

\begin{defn}[\cite{herrmann2008draw}]\label{metricTreeArrangementg}
A $(k,n)$ arrangement of metric trees is an ${n \choose k-2}$-tuple
\begin{align*}
 {\cal T}^{[k]} = \{T_{i_1,i_2,\dots,i_{k-2}}\mid \{i_1,i_2,\dots,i_{k-2} \} \subset [n]\}
\end{align*}
 such that $T_{i_1,i_2,\dots,i_{k-2}}$ is a metric tree with $n-k+2$ leaves in the set $[n]\setminus \{i_1,i_2,\dots,i_{k-2} \} $ and metric \smash{$d^{(i_1,i_2,\dots,i_{k-2})}_{ab}$} so that the following compatibility condition is satisfied
\be\label{compatibleGeneral}
d^{(i_3,i_4,\dots,i_k)}_{i_1,i_2} = d^{(i_2,i_4,\dots,i_k)}_{i_1,i_3} =\cdots =d^{(i_1,i_2,\dots,i_{k-2})}_{i_{k-1},i_k} , \qquad \forall \{i_1,i_2,\dots,i_k\}\subset [n].
\ee
Denote by $d$ the symmetric tensor with entries $d_{i_1,i_2,\dots,i_k}:=d^{(i_3,i_4,\dots,i_k)}_{i_1,i_2}$.
\end{defn}

In what follows, motivated by our construction of $k=3$ GFDs, we present conditions that are necessary in order for an arrangement of metric trees, ${\cal T}^{[k]}$, to define generalized Feynman diagrams for $k\ge 4$.
\begin{itemize}\itemsep=0pt
 \item There are exactly $(k-1)(n-k-1)$ independent internal edge lengths after imposing the compatibility conditions \eqref{compatibleGeneral}.
 \item There exists at least one GCO $\Sigma^{[k]} = \big\{ \sigma^{(i_1,i_2,\dots,i_{k-2})}\mid \{i_1,\dots, i_{k-2}\} \subset [n] \big\}$ such that \smash{$T_{i_1,i_2,\dots,i_{k-2}}$} is planar with respect to \smash{$\sigma^{(i_1,i_2,\dots,i_{k-2})}$} for all $\{i_1,\dots, i_{k-2}\} \subset [n]$. In this case, we say that ${\cal T}^{[k]}$ is compatible with $\Sigma^{[k]}$.
\end{itemize}

Moreover, the rational function associated to ${\cal T}^{[k]}$ is
\begin{align}
{\cal R}({\cal T}) :={}& \int_{\mathbb{R}^+} \prod_{I=1}^{(k-1)(n-k-1)}{\rm d}f_I \prod_{J=(k-1)(n-k-1)+1}^{{n \choose k-2}(n-k-1)}\theta (f_J(f_1,\dots ,f_{(k-1)(n-k-1)}))\nonumber
\\
&
\times
\exp\Biggl(-\sum_{\{i_1,i_2,\dots,i_{k} \} \subset [n] }\s_{i_1,i_2,\dots,i_{k}}\, d_{i_1,i_2,\dots,i_{k}}\Biggr). \label{Scw3ge}
\end{align}
The conditions in the integrand, $\theta (f_J(f_1,\dots ,f_{(k-1)(n-k-1) }))$, simply enforce that all internal lengths must be non-negative.
$\s_{i,i,i_3,\dots,i_{k}}$ are generic completely symmetric rank-$k$ tensors subject to $\s_{i,i,i_3,\dots,i_{k}}=0$ and $\sum_{\{i_2,\dots,i_{k} \} \subset [n] \setminus \{i\}}\s_{i,i_2,\dots,i_{k}}$=0 for any $i$.

A $k$-decreasing projection of a GCO is also a GCO.
However, a $k$-decreasing projection of~${\cal T}^{[k]}$
\begin{gather*}
\pi_{(i)}({\cal T}) = \{ {\cal T}_{i,i_2,\dots,i_{k-2}}\mid \{i_2,\dots, i_{k-2}\} \subset [n] \setminus \{i\} \}.
\end{gather*}
is not necessarily a GFD. On the one hand, if ${\cal T}$ is compatible with $\Sigma$, by definition, $\pi_{(i)}({\cal T}) $ must be compatible with $\pi_{(i)}({\Sigma}) $. On the other hand,
the conditions \eqref{compatibleGeneral} are not sufficient to guarantee that $\pi_{(i)}({\cal T}) $ has enough independent internal lengths. So $\pi_{(i)}({\cal T}) $ could also be a~degenerate GFD.

Similarly, we define the projections of arrangements of metric trees as follows:
\begin{align*}
 \pi_i({\cal T}) = \{\pi_i( T_{i_1,i_2,\dots,i_{k-2}})\mid  \{i_1,\dots,i_{k-2} \} \subset [n] \setminus \{i\}\} ,
\end{align*}
which might be GFDs or their degenerations.

\begin{defn}\label{globalGFD-New}
A $(k,n)$ arrangement of metric trees
\[
{\cal T}^{[k]} = \{T_{i_1,i_2,\dots,i_{k-2}}\mid  \{i_1,i_2,\dots,i_{k-2} \} \subset [n]\}\]
 is said to descend from a metric $T$ if $T_{i_1,i_2,\dots,i_{k-2}}=\pi_{i_1}\pi_{i_2}\cdots \pi_{i_{k-2}} (T)$. We also say that ${\cal T}^{[k]}$ is a degree-$k$ descendant of $T$.
\end{defn}

Note that here the metric tree $T$ could be a cubic tree or its degeneration, i.e., a Feynman diagram with quartic or higher degree vertices.

In particular, we say the standard metric tree $T$ with $n=k+2$ external leaves and its degree-$k$ descendant are dual to each other.

In parallel to \eqref{dualityeq}, we can explain the general duality between $(k,n)$ and $(n-k,n)$ GFDs. Given any $(k,n)$ GFD ${\cal T}^{[k]}$, we conjecture that its dual $(n-k,n)$ GFD ${\cal T}'^{[n-k]}$ is given by
\begin{gather}
\label{dualityeqGFD}
{\cal T}^{[k]} \sim {\cal T}'^{[n-k]}=
\big\{
{\rm Dual}\big( \pi_{i_1} \pi_{i_2} \cdots \pi_{i_{n-k-2}}
{\cal T}^{[k]} \big)
\mid \{i_1,i_2,\dots,i_{n-k-2}\} \subset [n]
\big\},
\end{gather}
where we project out $n-k-2$ labels of ${\cal T}^{[k]}$ first, leading to a $(k,k+2)$ arrangement of metric trees, $\pi_{i_1} \pi_{i_2} \cdots \pi_{i_{n-k-2}}
{\cal T}^{[k]} $, whose dual arrangement, \smash{${\rm Dual}\big( \pi_{i_1} \pi_{i_2} \cdots \pi_{i_{n-k-2}}
{\cal T}^{[k]}\big)$}, is just a $(k+2)$-pt standard Feynman diagram \smash{$T_{(i_1i_2\dots i_{n-k-2})}$} with possible quartic or higher degree vertices, which constitutes the dual $(n-k,n)$ GFD ${\cal T}'^{[n-k]}$.

Based on \eqref{dualityeqGFD}, it is clear to see a projection of ${\cal T}^{[k]}$ is dual to a codim-1 component of its dual ${\cal T}'^{[n-k]}$,
\[
\pi_{i}\big({\cal T}^{[k]} \big) \sim \pi_{(i)}\big({\cal T}'^{[n-k]} \big),\qquad \pi_{i}\big({\cal T}'{}^{[n-k]}\big) \sim \pi_{(i)}\big({\cal T}^{[k]} \big),
\]
which was the original way to define the duality between planar GFDs in \cite{Cachazo:2019xjx}.

\subsection[(4,7) GFDs]{$\boldsymbol{(4,7)}$ GFDs}
Using duality, one can get all $(4,7)$ GFDs from those of $(3,7)$.
For example, the dual $(4,7)$ GFD of the $(3,7)$ one~\eqref{boundaryofT0} with both degree-three and degree-four vertices is given in Figure~\ref{fig:my_labelGFD47}.

\begin{figure}[htb!]
 \centering
\def\fFD #1,#2,#3,#4,#5 \fFD {
 \raisebox{-.61 cm}{
\tikz[scale=.17]{
\draw[]
(0,0)--(2,0)
(0,0)--(-.3,-1.3) node[below]{#1}
(0,0)--(-.3,1.3) node[above]{#2}
(2,0)--(2.3,1.3) node[above]{#3}
(2,0)--(2.3,-1.3) node[below]{#4}
(1,0)-- (1,1.3) node[above]{#5}
;
}}
}
\def\fFDD #1,#2,#3,#4,#5 \fFDD {
 \raisebox{-.61 cm}{
\tikz[scale=.17]{
\draw[]
(0,0)--(1,0)
(0,0)--(-.3,-1.3) node[below]{#1}
(0,0)--(-.3,1.3) node[above]{#2}
(1,0)--(2,0) node[right]{#3}
(1,0)--(1,-1.3) node[below]{#4}
(1,0)-- (1,1.3) node[above]{#5}
;
}}
}
$$
\begin{pmatrix}
0 & \fFD 3, 5, 4, 7, 6 \fFD & \fFD 2, 5, 4, 6, 7 \fFD & \fFD 3, 6, 2, 7, 5 \fFD & \fFD 2, 3, 6, 7, 4 \fFD & \fFD 3, 4, 5, 7, 2 \fFD & \fFD 2, 4, 5, 6, 3 \fFD \\ \fFD 3, 5, 4, 7, 6 \fFD & 0 & \fFD 1, 5, 6, 7, 4 \fFD & \fFDD 1, 7, 3, 5, 6 \fFDD & \fFDD 1, 3, 4, 6, 7 \fFDD & \fFDD 3, 7, 1, 4, 5 \fFDD & \fFD 1, 4, 3, 6, 5 \fFD \\ \fFD 2, 5, 4, 6, 7 \fFD & \fFD 1, 5, 6, 7, 4 \fFD & 0 & \fFD 1, 6, 5, 7, 2 \fFD & \fFD 1, 2, 4, 7, 6 \fFD & \fFD 1, 4, 2, 7, 5 \fFD & \fFD 4, 5, 2, 6, 1 \fFD \\ \fFD 3, 6, 2, 7, 5 \fFD & \fFDD 1, 7, 3, 5, 6 \fFDD & \fFD 1, 6, 5, 7, 2 \fFD & 0 & \fFDD 3, 7, 1, 2, 6 \fFDD & \fFDD 1, 3, 2, 5, 7 \fFDD & \fFD 1, 2, 3, 5, 6 \fFD \\ \fFD 2, 3, 6, 7, 4 \fFD & \fFDD 1, 3, 4, 6, 7 \fFDD & \fFD 1, 2, 4, 7, 6 \fFD & \fFDD 3, 7, 1, 2, 6 \fFDD & 0 & \fFDD 1, 7, 2, 3, 4 \fFDD & \fFD 3, 4, 1, 6, 2 \fFD \\ \fFD 3, 4, 5, 7, 2 \fFD & \fFDD 3, 7, 1, 4, 5 \fFDD & \fFD 1, 4, 2, 7, 5 \fFD & \fFDD 1, 3, 2, 5, 7 \fFDD & \fFDD 1, 7, 2, 3, 4 \fFDD & 0 & \fFD 2, 3, 1, 5, 4 \fFD \\ \fFD 2, 4, 5, 6, 3 \fFD & \fFD 1, 4, 3, 6, 5 \fFD & \fFD 4, 5, 2, 6, 1 \fFD & \fFD 1, 2, 3, 5, 6 \fFD & \fFD 3, 4, 1, 6, 2 \fFD & \fFD 2, 3, 1, 5, 4 \fFD & 0
\end{pmatrix}
$$
 \caption{A $(4,7)$ GFD which is present in a more redundant way by a symmetric matrix such that the Feynman diagram in the $i^{\rm th}$-row and $j^{\rm th}$-column has leaves $i$, $j$ pruned.}
 \label{fig:my_labelGFD47}
\end{figure}

One can check $\pi_1\pi_2\eqref{boundaryofT0}$ is dual to the second Feynman diagram in the first row of the $(4,7)$ GFD. Similarly, a $k$-preserving projection of the $(4,7)$ GFD is dual to a Feynman diagram in~\eqref{boundaryofT0}.
The $(4,7)$ GFD in Figure~\ref{fig:my_labelGFD47}
is compatible with the $(4,7)$ GCO
 dual to \eqref{37GCOquar},
 \begin{align*}
\left(
\begin{matrix}
 0 & \text{(35647)} & \text{(25467)} & \text{(26357)} & \text{(23476)} & \text{(24375)} & \text{(23654)} \\
 \text{(35647)} & 0 & \text{(15467)} & \text{(16357)} & \text{(13476)} & \text{(14375)} & \text{(13654)} \\
 \text{(25467)} & \text{(15467)} & 0 & \text{(12576)} & \text{(12476)} & \text{(12754)} & \text{(12654)} \\
 \text{(26357)} & \text{(16357)} & \text{(12576)} & 0 & \text{(12376)} & \text{(12753)} & \text{(12356)} \\
 \text{(23476)} & \text{(13476)} & \text{(12476)} & \text{(12376)} & 0 & \text{(12347)} & \text{(12346)} \\
 \text{(24375)} & \text{(14375)} & \text{(12754)} & \text{(12753)} & \text{(12347)} & 0 & \text{(12345)} \\
 \text{(23654)} & \text{(13654)} & \text{(12654)} & \text{(12356)} & \text{(12346)} & \text{(12345)} & 0 \\
\end{matrix}
\right),
 \end{align*}
which was also present in a symmetric but redundant way. The contribution of the $(4,7)$ GFD to the amplitudes according to \eqref{Scw3ge} is given by $ {1}/({\s_{3457} \s_{2367} \s_{1567} \s_{1346} \s_{1247} \s_{1235}})$.

Remind that all $(3,6)$ GFDs just contain cubic vertices. So
it's obvious that some rows or columns of the $(4,7)$ GFD in Figure~\ref{fig:my_labelGFD47} are just degenerated $(3,6)$ GFDs, i.e., even though some rows or columns may not have enough independent internal lengths to become a GFD in those rows or columns, the whole matrix have enough independent internal lengths to become a $k=4$ GFD.

Because of this subtlety, it is conceptually more complicated to bootstrap all $k=4$ GFDs than what we have done for $k=4$ GCOs in Section~\ref{Bootstrappingk4gco}.
If we start with a $7\times 7$ matrix of Feynman diagrams as an ansatz and require each row or column to be a $(3,6)$ GFD, we get 93 classes of $(4,7)$ GFDs as well as one extra class of matrices of Feynman diagrams which have seven independent internal lengths and are dual to the class of \eqref{37T0}. Only when we degenerate it, do we get the $(4,7)$ GFD in Figure~\ref{fig:my_labelGFD47}. The 94 classes of $(4,7)$ GFDs obtained in this way are indeed dual to those of $(3,7)$ in the way explained in \eqref{dualityeqGFD} and we present them in an ancillary file.

\section{Future directions}\label{sec10}

In this work, we defined generalized color orderings (GCOs) and started the study of their properties. Combining generalized Feynman diagrams (GFDs) and GCOs, we finally constructed the complete color dressed generalized biadjoint amplitudes. Perhaps a surprising feature of generalized biadjoint amplitudes is that their GFDs are collections of trees that are not necessarily~$\phi^3$ Feynman diagrams. In fact, one should consider Feynman diagrams in a theory that with all possible powers of $\phi$. In mathematical terminology, generic trees in the collection might not be trivalent: a priori any metric tree may be a member of the arrangement leading up to the GFD.

This work is the first one of a series of papers where we start the study of a ``triality'' among partial $(k,n)$ biadjoint amplitudes (as defined in this work using GFDs), CEGM integrands and their integrals on the configuration space of $n$ points in $\mathbb{CP}^{k-1}$ \cite{Cachazo:2023ltw}, and new objects we call {\it chirotopal tropical Grassmannians}.

The mathematics of arrangements of metric trees is naturally connected to tropical geometry, in particular to the Dressian and the tropical Grassmannian \cite{herrmann2008draw}.
We end this work with a~preview of some of the directions on chirotopal tropical Grassmannians that will be explored in the future.

The tropical Grassmannian is a complicated object. However, it contains a relatively simple object known as the positive part, $\operatorname{Trop}^+G(k,n)$. We will argue that $\operatorname{Trop}^+G(k,n)$ is nothing but one of a family of objects, each determined by a chirotope \cite{bjorner1999oriented}, and whose study seems to be within reach.

Given the importance of such a family of objects, we give a preview of their definition here.

\subsection{Chirotopal tropical Grassmannians\label{chirotopalsection}}

The tropical Grassmannian $\operatorname{Trop} G(k,n)$, introduced in \cite{speyer2004tropical}, parametrizes realizable tropical linear spaces; it is the tropical variety of the Plucker ideal of the Grassmannian $G(k,n)$. While $\operatorname{Trop} G(2,n)$ is completely characterized by the tropicalization of the 3-term tropical Plucker relations, for general $(k,n)$ the Plucker ideal contains higher degree generators and to calculate $\operatorname{Trop} G(k,n)$ quickly becomes a completely intractable problem. On the other hand, in \cite{speyer2005tropical}, Speyer--Williams introduced the \textit{positive} tropical Grassmannian, which was later shown \cite{Arkani-Hamed:2020cig,speyer2021positive} to be characterized by 3-term tropical Plucker relations,
\[\pi_{Lac} + \pi_{Lbd} = \min \{\pi_{Lab} + \pi_{Lcd},\pi_{Lad} + \pi_{Lbc}\},\]
which depends on the given global cyclic order $(1,2,\dots, n)$.

In this section, motivated by the observation that the CEGM formula for the generalized biadjoint scalar with integrand the (squared) $k$-Parke--Taylor factor (that is, the canonical function on the nonnegative Grassmannian \cite{Arkani-Hamed:2012zlh}) is equal to the Laplace transform of the positive tropical Grassmannian, we define the \textit{chirotopal} tropical Grassmannian $\operatorname{Trop}^\chi G(k,n)$, by relaxing the requirement that the cyclic order be global; by this we mean that we replace the usual notion of the cyclic order on $G(k,n)$ with certain compatible collections of $\binom{n}{k-2}$ cyclic orders, which we called generalized color orderings in Definition~\ref{GCOkkkk}.

Therefore we not only generalize the positive tropical Grassmannian to other realizable oriented uniform matroids, but we also present two a priori completely different ways to represent~it. The first is purely combinatorial and uses generalized Feynman diagrams as discussed in this work, while the second uses the CEGM formula to reconstitute the cones from higher dimensional residues as in \cite{Early:2022mdn} in the context of factorization.

\begin{defn}[\cite{speyer2004tropical}]
 Given $e = (e_1,\dots, e_N) \in \mathbb{Z}_{\ge 0}^N$, denote $\mathbf{x}^e = x_1^{e_1}\dots x_N^{e_N}$. Let $E \subset\mathbb{Z}^N_{\ge 0}$. If $f = \sum_{e\in E} f_e \mathbf{x}^e$ is nonzero, denote by $\operatorname{Trop} (f)$ the set of all points $(X_1,\dots, X_N)$ such that for the collection of numbers \smash{$\sum_{i=1}^N e_i X_i$} for $e$ ranging over $E$, then the minimum of the collection is achieved at least twice. We say that $\operatorname{Trop} (f)$ is the tropical hypersurface associated to $f$. The \emph{tropical Grassmannian} $\operatorname{Trop} G(k,n)$ is the intersection of all tropical hypersurfaces $\operatorname{Trop} (f)$ where $f$ ranges over all elements in the Plucker ideal.

 The \emph{Dressian} $\operatorname{Dr}_{k,n}$ is the tropical (pre)variety obtained by tropicalizing only the 3-term Plucker relations.\footnote{The term Dressian was coined in \cite{herrmann2008draw}, but $\operatorname{Dr}_{k,n}$ was called the tropical pre-Grassmannian in \cite{speyer2004tropical}.}
\end{defn}

Thus, the Dressian $\operatorname{Dr}(k,n)$ consists of all tropical Plucker vectors; a tropical Plucker vector is said to be realizable if it is in the tropical Grassmannian $\operatorname{Trop}G(k,n)$.

We first define the \textit{chirotopal} Dressian, as a generalization of the positive Dressian and then intersect with the tropical Grassmannian in order to obtain our main definition, the chirotopal tropical Grassmannian.

\begin{defn}\label{defn: chirotopal trop Grass}

Given any generic point in the real Grassmannian $G(k,n)$, that is where all maximal $k\times k$ minors $\Delta_J$ are nonzero, let \smash{$\chi \in \{-1,1\}^{\binom{n}{k}}$} be defined coordinate-wise by{\samepage
\[\chi_J = \operatorname{sgn}(\Delta_J).\]
Any such vector $\chi$ arising in this way is called a realizable, (uniform) \emph{chirotope} \cite{bjorner1999oriented}.}

Given a realizable chirotope $\chi \in \{-1,1\}^{\binom{n}{k}}$, a point $\pi \in \mathbb{R}^{\binom{n}{k}}$ is said to be a \emph{$\chi$-tropical Pl\"{u}cker vector} provided that, for any \smash{$L \in \binom{\lbrack n\rbrack}{k-2}$} and any cyclic order $(j_1,j_2,j_3,j_4)$ in \smash{$\binom{\lbrack n\rbrack\setminus L}{4}$} such that
\[
\left(\frac{\chi_{Lj_1j_2}\chi_{Lj_3j_4}}{\chi_{Lj_1j_3}\chi_{Lj_2j_4}},\frac{\chi_{Lj_1j_4}\chi_{Lj_2j_3}}{\chi_{Lj_1j_3}\chi_{Lj_2j_4}} \right) = (1,1),
\]
	then
\[\pi_{Lj_1j_3} + \pi_{Lj_2j_4} = \min \{\pi_{Lj_1j_2} + \pi_{Lj_3j_4}, \pi_{Lj_1j_4} + \pi_{Lj_2j_3}\}.\]

Here we denote by $\operatorname{Dr}^\chi_{k,n}$ the \emph{$\chi$-Dressian}, consisting of all $\chi$-tropical Plucker vectors.
\end{defn}
\begin{defn}
 Fix a chirotope $\chi \in \{-1,1\}^{\binom{n}{k}}$ as usual. The $\chi$-tropical Grassmannian $\operatorname{Trop}^\chi G(k,n)$ is the set of all realizable $\chi$-tropical Plucker vectors, i.e., it is the intersection of the chirotopal Dressian with the tropical Grassmannian, $\operatorname{Trop}^\chi G(k,n) = \operatorname{Dr}^\chi_{k,n} \cap \operatorname{Trop} G(k,n)$.
\end{defn}
\begin{rem}
 Note that we are not making assertions about the Groebner fan. For that, a~possible related proposal was made in \cite[Conjecture 7.1]{bendle2020parallel}.

 Also we emphasize that both the chirotopal Dressian $\operatorname{Dr}_{k,n}$ and $\chi$-tropical Grassmannian $\operatorname{Trop}^\chi G(k,n)$ depend only on the reorientation class of $\chi$, that is, they are invariant under the torus action, say $t_j\colon \chi_J \mapsto -\chi_J$ whenever $j\in J$, and otherwise $\chi_J \mapsto \chi_J$.
\end{rem}
It is natural to ask whether chirotopal tropical Plucker vectors are always realizable, as is the case for positive tropical Plucker vectors.

Our conjecture, below, proposes to generalize to all realizable chirotopes (as in Definition~\ref{defn: chirotopal trop Grass}) the following recent characterization of the positive tropical Grassmannian.
\begin{thm}[\cite{Arkani-Hamed:2020cig,speyer2021positive}]
 The positive tropical Grassmannian is completely characterized by the $3$-term tropical Plucker relations, that is we have
 \[\operatorname{Trop}^+G(k,n) = \operatorname{Dr}^+_{k,n},\]
 where ``$+$'' is the standard notation for the chirotope $\chi$ with all entries $+1$, that is $\chi\! =\! (1,1,\dots ,1)$.
\end{thm}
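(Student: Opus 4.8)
The inclusion $\operatorname{Trop}^+G(k,n)\subseteq\operatorname{Dr}^+_{k,n}$ is immediate, since by definition $\operatorname{Trop}^+G(k,n)=\operatorname{Dr}^+_{k,n}\cap\operatorname{Trop}G(k,n)$, so the whole content of the statement is the reverse inclusion $\operatorname{Dr}^+_{k,n}\subseteq\operatorname{Trop}G(k,n)$: every positive tropical Pl\"ucker vector---a point $\pi\in\mathbb{R}^{\binom{n}{k}}$ obeying the $3$-term relations $\pi_{Lac}+\pi_{Lbd}=\min\{\pi_{Lab}+\pi_{Lcd},\,\pi_{Lad}+\pi_{Lbc}\}$ for the fixed cyclic order $(1,2,\dots,n)$---must in fact be realizable, i.e.\ must occur as the valuation vector $\big(\operatorname{val}\Delta_J(p)\big)_J$ of an honest point $p$ of $G(k,n)$ over a field of Puiseux series. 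The plan is to exploit that $\operatorname{Trop}G(k,n)$ is closed and $\operatorname{Dr}^+_{k,n}$ is a polyhedral fan modulo its lineality space, so that it suffices to realize one point in the relative interior of each maximal cone and then pass to closures.

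First I would pass to the language of regular subdivisions of the hypersimplex $\Delta_{k,n}\subset\mathbb{R}^n$: lifting the vertex $e_J$ to height $\pi_J$ induces a regular subdivision $\mathcal{D}_\pi$, and the combinatorial heart of the proof---where the $3$-term relations actually get used---is the statement that $\pi\in\operatorname{Dr}^+_{k,n}$ if and only if every cell of $\mathcal{D}_\pi$ is a \emph{positroid} polytope. I would prove this using the inequality description of positroid polytopes together with the fact that the only nontrivial regular subdivisions of an octahedral face $\Delta_{2,4}$ of $\Delta_{k,n}$ are the three ways of cutting it into two square pyramids, which are matched exactly by the three terms appearing in the relation. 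Next, a point in the interior of a maximal cone then corresponds to a regular positroid \emph{triangulation}, equivalently to a plabic graph (or, in the finite-type cases, to a seed of the Grassmannian cluster algebra). For such a datum one has an explicit Laurent-monomial parametrization of a dense chart of $G(k,n)$, via Postnikov's boundary measurements or a cluster torus chart, all of whose coordinates are manifestly positive; assigning to the chart parameters generic powers of a uniformizer $t$ yields a Puiseux point whose Pl\"ucker valuations fill out precisely the chosen cone, and positivity of the chart places the point in $G_{>0}(k,n)$, so that its tropicalization lands in $\operatorname{Trop}^+G(k,n)$. Taking the closure of each maximal cone then recovers all of $\operatorname{Dr}^+_{k,n}\subseteq\operatorname{Trop}^+G(k,n)$.

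A more hands-on alternative would be to induct on $n$ and $k$ via the deletion $\pi''_J=\pi_J$ (for $i\notin J$) to $G(k,n-1)$ and the contraction $\pi'_J=\pi_{J\cup\{i\}}$ to $G(k-1,n-1)$, both of which manifestly preserve the $3$-term relations and correspond to restricting a realizing point to a coordinate hyperplane or a coordinate chart; the base case $k=2$ is the classical fact that $\operatorname{Dr}^+_{2,n}$ is the fan of the associahedron, whose maximal cones are the planar binary trees on $n$ leaves, each realized by a totally positive $2\times n$ matrix as in Speyer--Williams, and the dihedral symmetry of $G_{\ge 0}(k,n)$ would keep the bookkeeping manageable. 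In either approach the hard part will be the purely combinatorial identification---showing that the $3$-term relations cut out \emph{exactly} the positroid-subdivision fan, so that every coarsest positive tropical Pl\"ucker cone refines to a triangulation of the correct (positroid/cluster) type. Once that is settled, realizability of each cone is essentially a formality, read off from the total positivity of the associated plabic-graph or cluster chart; this is precisely the division of labor between \cite{speyer2021positive} and \cite{Arkani-Hamed:2020cig}.
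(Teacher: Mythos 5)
The paper itself gives no proof of this statement: it is imported verbatim from \cite{speyer2021positive} and \cite{Arkani-Hamed:2020cig}, and the only thing the authors say about how it is proved is that ``both proofs rely on the existence of a certain surjectively positive parameterization'' of the Grassmannian in the sense of \cite{postnikov2006total}. Your sketch reconstructs exactly that strategy: you correctly reduce the statement, under the paper's definition $\operatorname{Trop}^{\chi}G(k,n)=\operatorname{Dr}^{\chi}_{k,n}\cap\operatorname{Trop}G(k,n)$, to the inclusion $\operatorname{Dr}^+_{k,n}\subseteq\operatorname{Trop}G(k,n)$; you identify the combinatorial heart as the equivalence between the positive three-term relations and regular positroidal subdivisions of the hypersimplex (this is the Speyer--Williams/Lukowski--Parisi--Williams result, cf.\ \cite{Lukowski:2020dpn}); and you realize points by tropicalizing a manifestly positive Laurent parametrization coming from a plabic graph, using that positivity forbids cancellation so that valuations of Pl\"ucker coordinates are computed by the tropicalized monomial map. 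Two steps are looser than the actual arguments. First, the finest regular positroidal subdivisions are not triangulations for $k\ge 3$ and do not biject with single plabic graphs or cluster seeds (already for $G(3,6)$ the maximal cones of $\operatorname{Trop}^+G(3,6)$ do not match the type-$D_4$ seeds one-to-one); what is actually used is one surjectively positive parametrization of the top cell whose tropicalization is shown to cover all of $\operatorname{Dr}^+_{k,n}$, and that surjectivity is precisely where the work lies rather than being ``essentially a formality.'' Second, deducing that all cells are positroid polytopes from the three-term relations requires the facet description of positroid polytopes by cyclic intervals, not only the analysis of octahedral faces. Neither point is a wrong turn---both are supplied by the cited references---but they are the substance of the theorem rather than bookkeeping.
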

Below, in Conjecture \ref{conj: chirotopal equality}, in order to be consistent with the CEGM formula, we are modding out \smash{$\mathbb{R}^{\binom{n}{k}}$} by the lineality space, which consists of all vectors with coordinates
\[\pi_J = \sum_{j\in J} x_j\]
for $x\in \mathbb{R}^n$.

We speculate that chirotopal tropical Grassmannians are better behaved than both the full Dressian and the tropical Grassmannian. Conjecture \ref{conj: chirotopal equality} ventures to assert that every $k=3$ chirotopal tropical Plucker vector is already the tropicalization of a linear space: it is already in the tropical Grassmannian.

\begin{conj}\label{conj: chirotopal equality}
 Each $\operatorname{Dr}^\chi_{3,n}$ is a pure\footnote{A polyhedral complex is pure if all maximal cones have the same dimension.} $(3-1)(n-3-1)$-dimensional polyhedral fan.
 Moreover, we have the equality
 \[\operatorname{Trop}^\chi G(3,n) = \operatorname{Dr}^\chi_{3,n},\]
 that is, chirotopal tropical Plucker vectors are realizable.
 Finally, if we fix a given maximal cone in $\operatorname{Trop} G(3,n)$ then either it is not contained in any chirotopal tropical Grassmannian, or it is contained in exactly $2^{(3-1)(n-3-1)}$ of them.\footnote{Note that this is equivalent to Conjecture \ref{samenumber}.}
\end{conj}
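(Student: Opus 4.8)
The plan is to reduce all three assertions to combinatorial statements about arrangements of metric trees compatible with the generalized color ordering $\Sigma(\chi)$ attached to the realizable chirotope $\chi$, and then to feed in the $k=2$ theory. The crucial first step is a \emph{slicing lemma}: for $k=3$ the index sets $L$ in Definition~\ref{defn: chirotopal trop Grass} are singletons $L=\{a\}$, and the condition imposed there on the slice $(\pi_{abc})_{b,c\in[n]\setminus\{a\}}$ is exactly the full system of $3$-term tropical Pl\"ucker relations for the cyclic order $\sigma^{(a)}$ that $\chi$ induces on $[n]\setminus\{a\}$ (the order in which the $\mathbb{RP}^1$'s cross $L_a$). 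By the $k=2$ characterization of the positive tropical Grassmannian \cite{speyer2005tropical,Arkani-Hamed:2020cig,speyer2021positive}, a point of $\mathbb{R}^{\binom{n-1}{2}}$ satisfies the $3$-term relations for a cyclic order if and only if it is the metric of a tree on $n-1$ leaves planar with respect to that order. Since any symmetric tensor $(\pi_{abc})$ automatically satisfies the compatibility conditions $d^{(a)}_{bc}=d^{(b)}_{ac}=d^{(c)}_{ab}$ of Definition~\ref{metricTreeArrangement}, one obtains a canonical identification of $\operatorname{Dr}^\chi_{3,n}$ with the locus of all (possibly degenerate) arrangements of metric trees ${\cal T}=(T_1,\dots,T_n)$ such that $T_a$ is planar with respect to $\sigma^{(a)}$ for every $a$ — that is, with the set of arrangements compatible with $\Sigma(\chi)$ in the sense of Section~\ref{three}. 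In particular the maximal cones of $\operatorname{Dr}^\chi_{3,n}$ are precisely the GFDs compatible with $\Sigma(\chi)$, which by Definition~\ref{defGFD} carry $2(n-4)$ independent internal lengths; modulo the lineality space this is dimension $(3-1)(n-3-1)$.

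For \textbf{purity} one must show every point of $\operatorname{Dr}^\chi_{3,n}$ lies on a cone of that maximal dimension, equivalently that every $\Sigma(\chi)$-compatible degenerate arrangement admits a resolution to a genuine GFD compatible with the \emph{same} ordering. The approach is to resolve tree by tree: a high-degree vertex of $T_a$, together with the cyclic order that $\sigma^{(a)}$ induces on its incident subtrees, has a canonical planar binary resolution (a fan triangulation of the associated polygon), and one performs all of these resolutions simultaneously, as in the blow-up operation of Section~\ref{secGFDflips}. The content is to check that the simultaneous resolution (i) is consistent, i.e.\ the resolved slices still assemble into a single symmetric tensor as one deforms off the face, and (ii) genuinely produces $2(n-4)$ — and not fewer — independent lengths, ruling out ``fake'' maximal cells such as the non-GFD arrangements \eqref{37T0} and \eqref{37T038}. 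I expect this to be the \emph{main obstacle}: it is the $\chi$-analogue of the (already nontrivial) theorem that the positive Dressian is a pure fan, and proving it uniformly for all realizable chirotopes will likely require either a deformation argument — perturb a point of the chirotope stratum to a generic line arrangement and track how the induced subdivision refines — or an adaptation of the noncrossing/plabic-graph machinery that is available in the totally positive case.

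For \textbf{realizability} the inclusion $\operatorname{Trop}^\chi G(3,n)\subseteq\operatorname{Dr}^\chi_{3,n}$ is immediate, and since limits of realizable points are realizable it suffices to treat a relative-interior point of each maximal cone, i.e.\ each GFD ${\cal T}$ compatible with $\Sigma(\chi)$. The plan is to exhibit a point of $G(3,n)$ over the Puiseux field $K=\mathbb{R}\{\{t\}\}$ lying in the sign stratum $\{\operatorname{sgn}(\mathrm{lc}\,\Delta_J)=\chi_J\}$ whose valuation vector lies in the relative interior of ${\cal T}$ — the natural generalization of the Speyer--Williams statement that $\operatorname{Trop}^+G(k,n)$ is the tropicalization of the totally positive part. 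Concretely one can use the CEGM realization of ${\cal R}({\cal T})$ as an integral over the chamber of $X(3,n)$ cut out by the chirotope $\chi$, developed in the companion paper \cite{Cachazo:2023ltw}, where the GFD and CEGM computations are matched for $(3,6)$, $(3,7)$ and $(3,8)$; the pole-by-pole agreement of the two sides is exactly the statement that the tropicalization of that chamber recovers the rays of ${\cal T}$, and one would upgrade this to an equality of the full cone.

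Finally, the count ``$0$ or $2^{(3-1)(n-3-1)}$'' is, as the footnote notes, equivalent to Conjecture~\ref{samenumber}: a maximal cone of $\operatorname{Trop} G(3,n)$ lies in some $\operatorname{Trop}^\chi G(3,n)$ iff the corresponding metric tree arrangement is compatible with at least one GCO — iff it is a GFD — and then the number of such $\chi$ equals the number of compatible GCOs. The plan here is to make rigorous the twist argument of Section~\ref{poletoGCO}: each of the $n_P\ge 2(n-4)$ facets of the cone ${\cal T}$ determines a twist of a fixed compatible ordering $\Sigma_0$; these twists are pairwise-commuting involutions — reflections along the directions in which internal lengths diverge — subject to exactly $n_P-2(n-4)$ relations coming from the linear dependences among the facet normals inside the $2(n-4)$-dimensional cone, so that they generate $(\mathbb{Z}/2)^{2(n-4)}$. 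One then shows (a) every element of the orbit $(\mathbb{Z}/2)^{2(n-4)}\cdot\Sigma_0$ is again compatible with ${\cal T}$ (a twist merely re-embeds the trees without destroying planarity), and (b) conversely every GCO compatible with ${\cal T}$ lies in this orbit; the surjectivity in (b) is the delicate point, and I would attack it by proving that the set of GCOs compatible with ${\cal T}$ is connected and freely permuted by the twist group, hence a single torsor of the asserted size. Granting the slicing lemma, purity, and this torsor statement, the three clauses of Conjecture~\ref{conj: chirotopal equality} follow.
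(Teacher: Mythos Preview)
The paper does not prove this statement: it is explicitly labeled a \emph{conjecture}, and the surrounding text (``We hope, in stating the conjecture, to stimulate progress\ldots'') together with the enumerated list of numerical evidence makes clear that only computational checks for $n=6,7,8$ are offered, not an argument for general $n$. So there is no ``paper's own proof'' to compare against.

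Your proposal is not a proof either, and to your credit you say so in several places. The slicing lemma is correct and is exactly the bridge the paper is using implicitly: for $k=3$ the $\chi$-Dressian condition on the slice $L=\{a\}$ is the positive tropical Pl\"ucker condition for the cyclic order $\sigma^{(a)}$, hence $\operatorname{Dr}^\chi_{3,n}$ is precisely the space of metric tree arrangements compatible with $\Sigma(\chi)$. But from there every substantive step is deferred. For purity you propose simultaneous planar resolutions of degenerate vertices, then immediately flag consistency (i) and the dimension bound (ii) as unresolved; note also that you need the \emph{upper} bound $\dim\le 2(n-4)$, not just the lower one --- the non-GFD arrangement \eqref{37T0} has $7>6$ independent lengths, and your argument would have to explain why compatibility with a GCO rules that out, which is not a consequence of Definition~\ref{defGFD} but rather an input to it. For realizability you invoke the CEGM chamber integrals of the companion paper, but matching amplitudes at finitely many $n$ is exactly the evidence the paper already cites, not a proof that the tropicalization of the $\chi$-chamber fills the cone for all $n$; a genuine argument here would need a $\chi$-analogue of a surjectively-positive parameterization, which the paper itself notes does not obviously exist beyond the totally positive case. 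For the $2^{2(n-4)}$ count you correctly identify this as Conjecture~\ref{samenumber} and reproduce the twist heuristic of Section~\ref{poletoGCO}, but the transitivity/surjectivity statement (b) --- that \emph{every} compatible GCO lies in the twist orbit --- is the whole content, and ``connected and freely permuted, hence a single torsor'' assumes exactly what has to be shown.

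In short: your reduction is the right one and aligns with the paper's framework, but what remains after the reduction \emph{is} the conjecture. If you want to turn this into a contribution, the most promising piece to attack first is the upper dimension bound (that a $\Sigma$-compatible metric tree arrangement has at most $2(n-4)$ independent internal lengths), since it is purely combinatorial and does not require realizability.
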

We hope, in stating the conjecture, to stimulate progress around an important question that we believe should be rigorously investigated.

We emphasize that for $k\ge3$, the tropical Grassmannian is \textit{not} in general covered by chirotopal tropical Grassmannians, as we have seen in $\operatorname{Trop}G(3,8)$!

Indeed, as noted in \cite{bendle2020parallel} there is a permutation class of maximal simplicial cones\footnote{For example, with rays spanned by the eight vectors $e^{126},e^{135},e^{178},e^{237},e^{248},e^{346},e^{457},e^{568} \in \mathbb{R}^{\binom{n}{8}}$.} in the tropical Grassmannian $\operatorname{Trop} G(3,8)$ which do not belong to any chirotopal tropical Grassmannian, see the GFD in equation~\eqref{37T038}. These cones are characterized by a (saturated) initial ideal which is not binomial \cite[Remark 4.5]{bendle2020parallel}.

This motivates the following important question.

\begin{question}
 Is there a simple characterization of chirotopal tropical Plucker vectors in terms of properties of the Plucker ideal?
\end{question}

Let us summarize some evidence in support of our conjecture.
\begin{enumerate}\itemsep=0pt
 \item[(1)] Using generalized color orderings, we have confirmed the classification of the realizable uniform chirotopes for $n=6,7,8,9$ which is given in \url{https://www-imai.is.s.u-tokyo.ac.jp/~hmiyata/oriented_matroids/}.
 \item[(2)] Using the data from (1), we have made a highly nontrivial numerical validation: for $k=3$ and $n=6,7$ we find exact agreement between our combinatorial (GFD) expressions and the values obtained from the CEGM integrals for all 4 and 11 types of GCOs, respectively. For $n=8$, the numerical evaluation of CEGM integrals is much more difficult given the large number of solutions to the scattering equations; nonetheless, we find agreement well within the margin of error.
\end{enumerate}

We also have computed the maximal cones, parametrized with generalized Feynman diagrams, or in mathematical terminology, metric tree arrangements subject to the additional requirement they must be compatible with at least one GCO. For example, the latter requirement can be seen to remove the seven-dimensional cones in the Dressian $\operatorname{Dr}_{3,7}$, though for a given seven-dimensional cone, its seven codimension one facets remain: each belongs to some chirotopal tropical Grassmannian.

It is of course natural to ask the even more ambitious question whether (or to what extent) Conjecture~\ref{conj: chirotopal equality} can be extended to larger~$k$.
\begin{question}
 Do the statements in Conjecture~\ref{conj: chirotopal equality} hold if we replace $k=3$ with $k\ge 4$? Are chirotopal tropical Plucker vectors realizable in general?
\end{question}

A natural strategy to approach the proof of Conjecture \ref{conj: chirotopal equality} would be to try to generalize the method of proof used in \cite{Arkani-Hamed:2020cig} and in \cite{speyer2021positive} for the so-called positive configuration space; both proofs rely on the existence of a certain \textit{surjectively positive} parameterization \cite{pachter2004tropical} of the Grassmannian, as in \cite{postnikov2006total}. Unfortunately, it does not seem obvious how to find such parameterizations for components of configuration spaces going beyond the positive configuration space to reorientation classes of other oriented uniform matroids, which suggests that some new ideas may be required.

It is also important to note that when $\chi$ is not isomorphic to the standard positive chirotope, then the finest regular matroid subdivisions that are induced by a $\chi$-tropical Plucker vector do not in general saturate Speyer's f-vector theorem \cite{speyer2009matroid}, as can be seen already in $\operatorname{Trop}^\chi G(3,6)$. For example, the collection of trees $\mathcal{T}_F$ in Table \ref{amplitude36con} induces a matroid subdivision with only five maximal cells, one less than the maximum $\binom{6-2}{2}=6$ for regular matroid subdivisions. This is compatible with the following observation: the vertices for the largest cell label the 16 basis elements of the graphic matroid for the complete graph $K_4$, whose presence as some face of the subdivision has been shown in \cite{speyer2009matroid} to be the condition under which the f-vector of a regular matroid subdivision is not maximized.

In \cite{Cachazo:2023ltw}, we exploit the CEGM formulation and its relation to $X(3,n)$ to construct irreducible decoupling identities. Imposing that such decoupling identities have realizations in terms of GCO is a powerful clue in the quest to finding their explicit realization in terms of some Lie algebraic structure or generalization thereof.

\appendix

 \section[31 GCOs in a (3,6) decoupling set]{31 GCOs in a $\boldsymbol{(3,6)}$ decoupling set} \label{31termsec}

\allowdisplaybreaks

 Below is a list of $31$ GCOs participating a $(3,6)$ decoupling grouped by types:
\begin{align}
&((23456),(13456),(12456),(12356),(12346),(12345)),\nonumber\\
& ((23465),(13465),(12465),(12365),(12346),(12345)),
\nonumber
\\
&((23645),(13645),(12645),(12365),(12364),(12345)),\nonumber\\
& ((25436),(15436),(12645),(12635),(12634),(12345)),
\nonumber
\\
&((23456),(15436),(15426),(15326),(14326),(12345)),
\nonumber
\\[3mm]
&
((25436),(15436),(12456),(12356),(12346),(12543)),
\nonumber\\
& ((23456),(13456),(12456),(12365),(12364),(12354)),
\nonumber
\\
&
((23645),(13645),(12645),(12356),(12346),(12354)),\nonumber\\
& ((23456),(13456),(12645),(12635),(12634),(12543)),
\nonumber
\\
&
((23456),(13645),(12645),(15326),(14326),(13245)),
\nonumber\\
& ((25436),(13645),(15426),(12635),(12634),(13245)),
\nonumber
\\
&
((23645),(15436),(15426),(12365),(12364),(13245)),\nonumber\\
& ((23456),(13465),(12465),(12365),(14326),(14325)),
\nonumber
\\
&
((25436),(13456),(15426),(15326),(14326),(12543)),
\nonumber\\
& ((23465),(13456),(12456),(12356),(14326),(14325)),
\nonumber
\\
&
((25436),(15436),(12465),(12365),(12634),(12435)),
\nonumber\\
& ((23645),(13645),(12465),(12635),(12364),(12435)),
\nonumber
\\
&
((23465),(13465),(12645),(12635),(12346),(12435)),
\nonumber\\
& ((23465),(13465),(12465),(12356),(12364),(12354)),
\nonumber
\\
&
((23465),(15436),(15426),(15326),(12346),(14325)),
\nonumber
\\[3mm]
&
((23465),(13465),(12456),(12356),(12634),(12534)),
\nonumber\\
& ((23456),(13456),(12465),(12365),(12634),(12534)),
\nonumber
\\
&
((25436),(13465),(15426),(15326),(12634),(13425)),
\nonumber\\
& ((23456),(13465),(12645),(12635),(14326),(13425)),
\nonumber
\\
&
((23645),(15436),(15426),(12356),(12346),(13254)),
\nonumber\\
& ((23645),(13645),(12456),(12635),(12634),(12453)),
\nonumber
\\
&
((25436),(15436),(12456),(12365),(12364),(12453)),
\nonumber\\
& ((23645),(13456),(12456),(15326),(14326),(13254)),
\nonumber
\\
&
((23465),(13645),(12645),(15326),(12346),(14235)),
\nonumber\\
& ((23645),(13465),(12465),(15326),(12364),(14235)),
\nonumber
\\[3mm]
&
((23645),(13465),(12456),(15326),(12634),(13524)),
\label{31termseceq}
\end{align}
 where the first GCO in each group is also shown in Table~\ref{table36} whose arrangement of lines is present in Figure~\ref{36figure} which manifestly reduces to the common arrangement of five lines in Figure~\ref{fiveNaive} when line~6 is removed.

\section[(3,8) color orderings]{$\boldsymbol{(3,8)}$ color orderings}\label{38colorfactors}

In this appendix, we complete the
135 types of GCOs for $(3,8)$ in Table~\ref{38color}.
 The second column provides a representative that can be used to obtain the rest by applying permutations of labels. The last column contains the number of distinct permutations.

\begin{center}
\small
\begin{longtable}{|@{\,}c@{\,}|@{\,}c@{\,}|@{\,}c@{\,}|}
\caption{All 135 types of $(3, 8)$ color orderings.} \label{38color}\\

\hline  Type  &  Color ordering representative & $\#$ \\ \hline
\endfirsthead

\multicolumn{3}{c}%
{{\bfseries \tablename\ \thetable{} -- continued from previous page}} \\
\hline Type & Color ordering representative & $\#$ \\ \hline
\endhead

\hline \multicolumn{3}{|r|}{{Continued on next page}} \\ \hline
\endfoot

\hline \hline
\endlastfoot

0 & ((2345678), (1345678), (1245678), (1235678), (1234678), (1234578), (1234568), (1234567)) & 2520 \\ 1 & ((2345678), (1345678), (1245678), (1238576), (1238476), (1237845), (1236845), (1236754)) & 40320 \\ 2 & ((2345867), (1345867), (1284567), (1283567), (1283476), (1234758), (1234658), (1254376)) & 40320 \\ 3 & ((2345867), (1345867), (1284567), (1283756), (1283746), (1237458), (1236548), (1254376)) & 20160 \\ 4 & ((2345678), (1345678), (1245678), (1235678), (1234867), (1234857), (1234856), (1234765)) & 40320 \\ 5 & ((2345678), (1345678), (1284567), (1283567), (1283476), (1283475), (1283465), (1276543)) & 20160 \\ 6 & ((2345678), (1345687), (1245687), (1235687), (1234876), (1234875), (1564328), (1564327)) & 40320 \\ 7 & ((2345678), (1384567), (1284567), (1765328), (1674328), (1574328), (1564328), (1324567)) & 10080 \\ 8 & ((2345678), (1345678), (1245678), (1235678), (1234876), (1234875), (1234865), (1234765)) & 10080 \\ 9 & ((2345678), (1345678), (1245678), (1235678), (1234678), (1234587), (1234586), (1234576)) & 20160 \\ 10 & ((2345678), (1345678), (1245678), (1238567), (1238476), (1238475), (1238465), (1237654)) & 40320 \\ 11 & ((2345678), (1345678), (1248567), (1238567), (1283476), (1283475), (1283465), (1276534)) & 20160 \\ 12 & ((2345678), (1345687), (1245687), (1238567), (1238476), (1238475), (1564328), (1456327)) & 40320 \\ 13 & ((2345678), (1345687), (1245687), (1238576), (1238476), (1238745), (1546328), (1456327)) & 40320 \\ 14 & ((2345678), (1345867), (1245867), (1238756), (1238746), (1547328), (1456328), (1453267)) & 40320 \\ 15 & ((2345678), (1348567), (1248567), (1238576), (1674328), (1547328), (1546328), (1432567)) & 40320 \\ 16 & ((2345678), (1348567), (1248567), (1238756), (1647328), (1547328), (1456328), (1432567)) & 20160 \\ 17 & ((2345687), (1345687), (1284567), (1283567), (1283476), (1283475), (1234658), (1265437)) & 40320 \\ 18 & ((2345687), (1348576), (1248576), (1237865), (1467328), (1453278), (1485326), (1473256)) & 40320 \\ 19 & ((2345867), (1345867), (1284567), (1283576), (1283476), (1237458), (1236458), (1254376)) & 40320 \\ 20 & ((2345867), (1384567), (1284567), (1765328), (1674328), (1234758), (1234658), (1542376)) & 40320 \\ 21 & ((2348567), (1765438), (1765428), (1675328), (1234876), (1237485), (1236485), (1432576)) & 40320 \\ 22 & ((2384567), (1384567), (1284567), (1235678), (1234786), (1234785), (1234658), (1237564)) & 40320 \\ 23 & ((2765438), (1765438), (1245687), (1235876), (1234876), (1238745), (1283645), (1254637)) & 40320 \\ 24 & ((2765438), (1765438), (1245867), (1235876), (1234876), (1283745), (1283645), (1254367)) & 40320 \\ 25 & ((2345678), (1345678), (1245678), (1235687), (1234876), (1234875), (1238465), (1237465)) & 40320 \\ 26 & ((2345678), (1345678), (1245678), (1235867), (1234867), (1238457), (1238456), (1237645)) & 40320 \\ 27 & ((2345678), (1345678), (1245687), (1235687), (1234876), (1234875), (1283465), (1273465)) & 40320 \\ 28 & ((2345678), (1345687), (1245687), (1235687), (1234867), (1234857), (1654328), (1564327)) & 20160 \\ 29 & ((2345678), (1345687), (1284567), (1283567), (1283476), (1283475), (1564328), (1345627)) & 40320 \\ 30 & ((2345678), (1345867), (1284567), (1283567), (1283476), (1574328), (1564328), (1345267)) & 40320 \\ 31 & ((2345678), (1348567), (1284567), (1283567), (1674328), (1574328), (1564328), (1342567)) & 40320 \\ 32 & ((2345687), (1345687), (1245687), (1235867), (1234786), (1238475), (1234685), (1236475)) & 40320 \\ 33 & ((2345687), (1345687), (1245867), (1235867), (1234786), (1283475), (1234685), (1263475)) & 20160 \\ 34 & ((2345867), (1345867), (1245678), (1235678), (1234768), (1283475), (1283465), (1267345)) & 40320 \\ 35 & ((2345867), (1345867), (1245867), (1235678), (1234768), (1238475), (1238465), (1236745)) & 40320 \\ 36 & ((2348567), (1384567), (1284567), (1765328), (1234876), (1234875), (1234865), (1423567)) & 40320 \\ 37 & ((2384567), (1384567), (1245687), (1283576), (1283476), (1283745), (1238645), (1245637)) & 40320 \\ 38 & ((2384567), (1384567), (1284567), (1235678), (1234876), (1234785), (1234685), (1235764)) & 40320 \\ 39 & ((2345678), (1345678), (1245678), (1235876), (1234876), (1237845), (1236845), (1236745)) & 20160 \\ 40 & ((2348567), (1345678), (1245678), (1235678), (1674328), (1574328), (1564328), (1432765)) & 40320 \\ 41 & ((2345678), (1345678), (1245678), (1235678), (1234687), (1234587), (1234856), (1234756)) & 40320 \\ 42 & ((2345678), (1345678), (1245678), (1235687), (1234687), (1234587), (1238456), (1237456)) & 20160 \\ 43 & ((2345678), (1345678), (1245687), (1238567), (1238476), (1238475), (1283465), (1273654)) & 40320 \\ 44 & ((2345678), (1345678), (1245687), (1238576), (1238476), (1238745), (1283645), (1273654)) & 40320 \\ 45 & ((2345678), (1345678), (1245867), (1238567), (1238476), (1283475), (1283465), (1276354)) & 40320 \\ 46 & ((2345678), (1345867), (1245876), (1238756), (1238746), (1543728), (1453628), (1453267)) & 40320 \\ 47 & ((2345687), (1345876), (1248576), (1237865), (1283764), (1453278), (1485326), (1473526)) & 40320 \\ 48 & ((2345687), (1384567), (1284567), (1765328), (1674328), (1574328), (1234658), (1654237)) & 20160 \\ 49 & ((2345867), (1348567), (1284567), (1283567), (1674328), (1234758), (1234658), (1524376)) & 40320 \\ 50 & ((2345867), (1348567), (1284567), (1283756), (1647328), (1237458), (1236548), (1524376)) & 20160 \\ 51 & ((2345867), (1384567), (1284567), (1675328), (1674328), (1237458), (1236458), (1542376)) & 20160 \\ 52 & ((2348567), (1348567), (1245687), (1237568), (1283746), (1283745), (1238654), (1256374)) & 40320 \\ 53 & ((2348567), (1384567), (1284567), (1765328), (1234786), (1234785), (1234658), (1423756)) & 40320 \\ 54 & ((2384567), (1765438), (1765428), (1235786), (1234876), (1237845), (1236485), (1325746)) & 40320 \\ 55 & ((2765438), (1765438), (1245678), (1235768), (1234876), (1237485), (1236485), (1257643)) & 40320 \\ 56 & ((2765438), (1765438), (1245678), (1235786), (1234876), (1237845), (1236485), (1257463)) & 40320 \\ 57 & ((2345678), (1345678), (1245678), (1235687), (1234867), (1234857), (1238456), (1237465)) & 40320 \\ 58 & ((2345678), (1345678), (1245678), (1235867), (1234876), (1238475), (1238465), (1237645)) & 40320 \\ 59 & ((2345678), (1345678), (1245687), (1235687), (1234867), (1234857), (1283456), (1273465)) & 40320 \\ 60 & ((2345678), (1345678), (1245687), (1235876), (1234876), (1238745), (1283645), (1273645)) & 40320 \\ 61 & ((2345678), (1345678), (1245867), (1235867), (1234876), (1283475), (1283465), (1276345)) & 40320 \\ 62 & ((2345678), (1345678), (1245867), (1235876), (1234876), (1283745), (1283645), (1276345)) & 40320 \\ 63 & ((2345678), (1345867), (1245867), (1238576), (1238476), (1547328), (1546328), (1453267)) & 40320 \\ 64 & ((2345678), (1345867), (1284567), (1283576), (1283476), (1547328), (1546328), (1345267)) & 40320 \\ 65 & ((2345678), (1348567), (1284567), (1283576), (1674328), (1547328), (1546328), (1342567)) & 20160 \\ 66 & ((2345867), (1345867), (1248567), (1238576), (1283476), (1237458), (1236458), (1253476)) & 40320 \\ 67 & ((2345867), (1348567), (1248567), (1238576), (1674328), (1237458), (1236458), (1523476)) & 20160 \\ 68 & ((2348567), (1384567), (1284567), (1765328), (1234876), (1234785), (1234685), (1423576)) & 40320 \\ 69 & ((2384567), (1345687), (1245687), (1675328), (1674328), (1547328), (1238645), (1456237)) & 40320 \\ 70 & ((2384567), (1345687), (1245687), (1765328), (1674328), (1574328), (1238465), (1456237)) & 40320 \\ 71 & ((2348567), (1348567), (1245678), (1235678), (1283476), (1283475), (1283465), (1256734)) & 20160 \\ 72 & ((2384567), (1345678), (1245678), (1675328), (1674328), (1547328), (1546328), (1327654)) & 20160 \\ 73 & ((2384567), (1384567), (1245867), (1283576), (1283476), (1238745), (1238645), (1245367)) & 20160 \\ 74 & ((2345867), (1345867), (1245678), (1235678), (1234678), (1283457), (1283456), (1267345)) & 20160 \\ 75 & ((2345678), (1345687), (1245867), (1238567), (1238476), (1283475), (1564328), (1453627)) & 20160 \\ 76 & ((2345678), (1345687), (1245867), (1238756), (1238746), (1283745), (1456328), (1453627)) & 40320 \\ 77 & ((2345678), (1345687), (1248567), (1238576), (1283476), (1283745), (1546328), (1435627)) & 40320 \\ 78 & ((2345678), (1345867), (1248567), (1238576), (1283476), (1547328), (1546328), (1435267)) & 40320 \\ 79 & ((2345678), (1345867), (1248567), (1238756), (1283746), (1547328), (1456328), (1435267)) & 40320 \\ 80 & ((2345687), (1345867), (1245867), (1237856), (1238746), (1547328), (1236584), (1475326)) & 40320 \\ 81 & ((2345687), (1345867), (1284567), (1283567), (1283476), (1574328), (1234658), (1625437)) & 40320 \\ 82 & ((2345687), (1348567), (1284567), (1283567), (1674328), (1574328), (1234658), (1652437)) & 40320 \\ 83 & ((2345867), (1348567), (1284567), (1283576), (1674328), (1237458), (1236458), (1524376)) & 40320 \\ 84 & ((2348567), (1348576), (1245768), (1237568), (1283746), (1543827), (1453826), (1257634)) & 20160 \\ 85 & ((2345678), (1345678), (1245687), (1235867), (1234867), (1238457), (1283456), (1273645)) & 20160 \\ 86 & ((2345678), (1345678), (1245867), (1238576), (1238476), (1283745), (1283645), (1276354)) & 20160 \\ 87 & ((2345678), (1345687), (1245687), (1235867), (1234876), (1238475), (1564328), (1546327)) & 40320 \\ 88 & ((2345678), (1345687), (1245867), (1235867), (1234876), (1283475), (1564328), (1543627)) & 40320 \\ 89 & ((2345687), (1345687), (1248567), (1238576), (1283476), (1283745), (1236458), (1265347)) & 40320 \\ 90 & ((2345687), (1348567), (1248567), (1238576), (1674328), (1547328), (1236458), (1652347)) & 40320 \\ 91 & ((2345867), (1345867), (1245687), (1235678), (1234768), (1283475), (1238465), (1263745)) & 40320 \\ 92 & ((2348567), (1348567), (1245678), (1235768), (1283476), (1283745), (1283645), (1256734)) & 40320 \\ 93 & ((2348567), (1348567), (1245687), (1235768), (1283476), (1283745), (1238645), (1256374)) & 40320 \\ 94 & ((2348567), (1348567), (1245867), (1235786), (1283476), (1237485), (1236845), (1253746)) & 40320 \\ 95 & ((2384567), (1345867), (1245867), (1675328), (1674328), (1238745), (1238645), (1452367)) & 20160 \\ 96 & ((2345687), (1345867), (1284567), (1283576), (1283476), (1547328), (1236458), (1625437)) & 40320 \\ 97 & ((2345687), (1348567), (1284567), (1283576), (1674328), (1547328), (1236458), (1652437)) & 20160 \\ 98 & ((2348567), (1345867), (1245678), (1237568), (1647328), (1283745), (1283654), (1437625)) & 40320 \\ 99 & ((2348567), (1345867), (1245687), (1237568), (1647328), (1283745), (1238654), (1473625)) & 40320 \\ 100 & ((2348567), (1384567), (1284576), (1657328), (1237468), (1273458), (1263548), (1423765)) & 5040 \\ 101 & ((2345678), (1345678), (1245687), (1235867), (1234876), (1238475), (1283465), (1273645)) & 40320 \\ 102 & ((2345678), (1345687), (1245867), (1238576), (1238476), (1283745), (1546328), (1453627)) & 40320 \\ 103 & ((2384567), (1345687), (1245678), (1765328), (1674328), (1574328), (1283465), (1372654)) & 40320 \\ 104 & ((2384567), (1345867), (1245678), (1765328), (1674328), (1283475), (1283465), (1376254)) & 40320 \\ 105 & ((2384567), (1345867), (1245687), (1675328), (1674328), (1283745), (1238645), (1452637)) & 40320 \\ 106 & ((2384567), (1345867), (1245687), (1765328), (1674328), (1283475), (1238465), (1452637)) & 20160 \\ 107 & ((2384567), (1348567), (1245678), (1765328), (1283476), (1283475), (1283465), (1376524)) & 20160 \\ 108 & ((2384567), (1348567), (1245687), (1675328), (1283476), (1283745), (1238645), (1425637)) & 40320 \\ 109 & ((2384567), (1348567), (1245867), (1675328), (1283476), (1237845), (1236845), (1425376)) & 40320 \\ 110 & ((2345687), (1348567), (1248567), (1238756), (1647328), (1547328), (1236548), (1652347)) & 10080 \\ 111 & ((2348567), (1345687), (1245678), (1235678), (1674328), (1574328), (1283465), (1437265)) & 40320 \\ 112 & ((2348567), (1345687), (1245687), (1235678), (1674328), (1574328), (1238465), (1473265)) & 40320 \\ 113 & ((2348567), (1345867), (1245678), (1235678), (1674328), (1283475), (1283465), (1437625)) & 40320 \\ 114 & ((2348567), (1345867), (1245867), (1235768), (1674328), (1238745), (1238645), (1476325)) & 20160 \\ 115 & ((2348567), (1348567), (1245867), (1235768), (1283476), (1238745), (1238645), (1253674)) & 40320 \\ 116 & ((2345867), (1345687), (1245678), (1235678), (1234678), (1754328), (1283456), (1543726)) & 20160 \\ 117 & ((2345867), (1345687), (1245687), (1235678), (1234678), (1754328), (1238456), (1547326)) & 20160 \\ 118 & ((2345867), (1345867), (1245687), (1235678), (1234678), (1283457), (1238456), (1263745)) & 20160 \\ 119 & ((2348567), (1345867), (1245876), (1237568), (1647328), (1273845), (1263854), (1467325)) & 20160 \\ 120 & ((2345678), (1345867), (1248576), (1238756), (1283746), (1543728), (1453628), (1435267)) & 20160 \\ 121 & ((2345687), (1345867), (1248567), (1238576), (1283476), (1547328), (1236458), (1625347)) & 40320 \\ 122 & ((2348567), (1345867), (1245678), (1235768), (1674328), (1283745), (1283645), (1437625)) & 40320 \\ 123 & ((2348567), (1345867), (1245687), (1235768), (1674328), (1283745), (1238645), (1473625)) & 40320 \\ 124 & ((2348567), (1345876), (1245768), (1237568), (1647328), (1543827), (1453826), (1436725)) & 40320 \\ 125 & ((2348567), (1348567), (1245786), (1237568), (1283746), (1273845), (1268354), (1257364)) & 20160 \\ 126 & ((2384567), (1345687), (1245678), (1675328), (1674328), (1547328), (1283645), (1372654)) & 40320 \\ 127 & ((2384567), (1345867), (1245678), (1675328), (1674328), (1283745), (1283645), (1376254)) & 20160 \\ 128 & ((2384567), (1348567), (1245867), (1675328), (1283476), (1238745), (1238645), (1425367)) & 40320 \\ 129 & ((2348567), (1345867), (1245786), (1237568), (1647328), (1273845), (1268354), (1463725)) & 20160 \\ 130 & ((2345687), (1345867), (1248567), (1238756), (1283746), (1547328), (1236548), (1625347)) & 20160 \\ 131 & ((2348567), (1345867), (1245687), (1235678), (1674328), (1283475), (1238465), (1473625)) & 20160 \\ 132 & ((2348567), (1345867), (1245687), (1235678), (1764328), (1283457), (1238456), (1473625)) & 2880 \\ 133 & ((2348567), (1345876), (1245786), (1237568), (1647328), (1548327), (1453826), (1463725)) & 20160 \\ 134 & ((2345687), (1345867), (1248576), (1238756), (1283746), (1543728), (1263548), (1625347)) & 10080

\end{longtable}\vspace{-2mm}

\end{center}

\subsection*{Acknowledgements}

The authors thank B.~Schroeter, B.~Sturmfels and B.~Umbert for useful correspondence and discussions. This research was supported in part by a grant from the Gluskin Sheff/Onex Freeman Dyson Chair in Theoretical Physics and by Perimeter Institute. Research at Perimeter Institute is supported in part by the Government of Canada through the Department of Innovation, Science and Economic Development Canada and by the Province of Ontario through the Ministry of Colleges and Universities. This research received funding from the European Research Council (ERC) under the European Union's Horizon 2020 research and innovation programme (grant agreement no.~725110), Novel structures in scattering amplitudes.

\pdfbookmark[1]{References}{ref}
\LastPageEnding

\end{document}